\newtheorem{proposition}{Proposition}
\newtheorem{lemma}{Lemma}
\newtheorem{theorem}{Theorem}
\newtheorem{corollary}{Corollary}
\newtheorem{definition}{Definition}
\newtheorem{claim}{Claim}
\theoremstyle{remark}
\newtheorem{remark}{Remark}
\newcommand{\ie}{\text{i.e.}\xspace}
\newcommand{\etal}{\text{et al.}\xspace}
\newcommand{\polylog}{\ensuremath{\mathrm{polylog}}\xspace}
\newcommand{\negl}{\ensuremath{\mathsf{negl}}\xspace}
\newcommand{\zo}{\ensuremath{{\{0,1\}}}\xspace}
\newcommand{\tuple}[1]{\ensuremath{\left(#1\right)}}
\newcommand{\tp}{\tuple}
\newcommand{\eps}{\ensuremath{\varepsilon}}
\newcommand{\ceil}[1]{\ensuremath{\left\lceil{#1}\right\rceil}\xspace}
\newcommand{\floor}[1]{\ensuremath{\left\lfloor{#1}\right\rfloor}\xspace}
\newcommand{\itn}[1]{^{(#1)}}
\def\*#1{\mathbf{#1}}
\def\+#1{\mathcal{#1}}
    \newcommand{\Dec}{\ensuremath{\mathsf{Dec}}\xspace}
    \newcommand{\ED}{\ensuremath{\mathsf{ED}}\xspace}
    \newcommand{\ignore}[1]{{}}
    \newcommand{\NAT}{\ensuremath{\mathbb{N}}}
    \newcommand{\dist}{\mathrm{dist}\hspace{-1pt}}
\newcommand{\HAM}{\ensuremath{\mathsf{HAM}}}
\def\final{1} 
\newcommand{\authnote}[3]{\textcolor{#2}{{\sf (#1's Note: {\sl{#3}})}}}
\newcommand{\jeremiah}{\authnote{Jeremiah}{blue}}
\newcommand{\minshen}{\authnote{Minshen}{purple}}
\newcommand{\xnote}{\authnote{Xin}{magenta}}
\newcommand{\enote}{\authnote{Elena}{green}}
\newcommand{\knote}{\authnote{Kuan}{cyan}}
\newcommand{\yu}{\authnote{Yu}{cyan}}
\newcommand{\alex}[1]{\authnote{Alex}{orange}{#1}}
\newcommand{\authnote}{}
\newcommand{\jeremiah}[1]{}
\newcommand{\minshen}[1]{}
\newcommand{\xnote}[1]{}
\newcommand{\enote}[1]{}
\newcommand{\knote}[1]{}
\newcommand{\yu}[1]{}
\newcommand{\alex}[1]{}
\newif\ifnotes\notestrue 
\newcommand{\tildex}{\ensuremath{\widetilde{x}}\xspace}
\newcommand{\tildey}{\ensuremath{\widetilde{y}}\xspace}
\newcommand{\tildec}{\ensuremath{\widetilde{c}}\xspace}
\newcommand{\Cout}{C_{\text{out}}}
\newcommand{\dout}{\delta_{\text{out}}}
\newcommand{\qout}{q_{\text{out}}}
\newcommand{\epsout}{\eps_{\text{out}}}
\newcommand{\gammaout}{\gamma_{\text{out}}}
\newcommand{\Cin}{C_{\text{in}}}
\newcommand{\din}{\delta_{\text{in}}}
\newcommand{\rin}{r_{\text{in}}}
\newcommand{\ham}{\HAM\xspace}
\newcommand{\lcs}{\ensuremath{\mathrm{LCS}}\xspace}
\newcommand{\LCS}{\ensuremath{\mathbf{LCS}}\xspace}
\newcommand{\Span}{\textsf{Span}}
\title{On Relaxed Locally Decodable Codes for Hamming and  Insertion-Deletion Errors}
\author[1]{Alex Block\thanks{Supported by NSF Award CCF-1910659}}
\author[1]{Jeremiah Blocki\thanks{Supported by NSF CAREER Award CNS-2047272 and NSF Award CCF-1910659,}}
\author[2,3]{Kuan Cheng}
\author[1]{Elena Grigorescu\thanks{Supported by NSF CCF-1910659, NSF CCF-1910411, and NSF CCF-2228814.}}
\author[4]{Xin Li \thanks{Supported by NSF CAREER Award CCF-1845349 and NSF Award CCF-2127575.}}
\author[4]{Yu Zheng \thanks{Supported by NSF CAREER Award CCF-1845349.}}
\author[1]{Minshen Zhu \thanks{Supported by NSF CCF-1910659, NSF CCF-1910411, and NSF CCF-2228814.}}
\affil[1]{Department of Computer Science, Purdue University}
\affil[2]{Center on Frontiers of Computing Studies, Peking University}
\affil[3]{Advanced Institute of Information Technology, Peking University}
\affil[4]{Department of Computer Science, Johns Hopkins University}
\affil[ ]{\textit {\{jblocki, elena-g, zhu628\}@purdue.edu}}
\affil[ ]{\textit {ckkcdh@pku.edu.cn} }
\affil[ ]{\textit {\{lixints, yuzheng\}@cs.jhu.edu} }
\date{}
\begin{document}

\maketitle

\begin{abstract}
 
  Locally Decodable Codes (LDCs) are error-correcting codes $C:\Sigma^n\rightarrow \Sigma^m,$ encoding {\em messages} in $\Sigma^n$ to {\em codewords} in $\Sigma^m$,  with super-fast decoding algorithms. They are important mathematical objects in many areas of theoretical computer science, yet the best constructions so far have codeword length $m$ that is super-polynomial in $n$, for codes with constant query complexity and constant alphabet size.

  In a very surprising result, Ben-Sasson, Goldreich, Harsha, Sudan, and Vadhan (SICOMP 2006) show how to construct a relaxed version of LDCs (RLDCs) with constant query complexity and almost linear codeword length over the binary alphabet, and used them to obtain significantly-improved constructions of Probabilistically Checkable Proofs. 
  
  In this work, we study RLDCs in the standard Hamming-error setting, and introduce their variants in the insertion and deletion (Insdel) error setting. Standard LDCs for Insdel  errors were first studied by Ostrovsky and Paskin-Cherniavsky ({\em Information Theoretic Security, 2015}), and are further motivated by recent advances in DNA random access bio-technologies (Banal et al., {\em Nature Materials, 2021}), in which the goal is to retrieve individual files from a DNA storage database.
  
   Our first result is an exponential lower bound on the length of Hamming RLDCs making $2$ queries (even adaptively), over the binary alphabet. This answers a question explicitly raised by Gur and Lachish (SICOMP 2021) and is the first exponential lower bound for RLDCs. Combined with the results of Ben-Sasson et al., our result exhibits a ``phase-transition''-type behavior on the codeword length for some constant-query complexity. We achieve these lower bounds via a transformation of RLDCs to standard Hamming LDCs, using a careful analysis of restrictions of message bits that fix codeword bits.

   We further define two variants of RLDCs in the Insdel-error setting, a weak and a strong version. On the one hand, we  construct weak Insdel RLDCs with almost linear codeword length and constant query complexity, matching the parameters of the Hamming variants. On the other hand, we prove exponential lower bounds for strong Insdel RLDCs. These results demonstrate that, while these variants are equivalent in the Hamming setting, they are significantly different in the insdel setting. Our results also prove a strict separation between Hamming RLDCs and Insdel RLDCs.
   

\end{abstract}

\section{Introduction}\label{sec:intro}

Locally Decodable Codes (LDCs) \cite{KatzT00, SudanTV99} are error-correcting codes $C: \Sigma^n \rightarrow \Sigma^m$ that have super-fast decoding algorithms that can recover individual symbols of a {\em message} $x\in \Sigma^n$, even when worst-case errors are introduced in the {\em codeword} $C(x)$. Similarly, Locally Correctable Codes (LCCs) are error-correcting codes $C: \Sigma^n \rightarrow \Sigma^m$ for which there exist very fast decoding algorithms that recover individual symbols of the {\em codeword} $C(x)\in \Sigma^m$, even when worst-case errors are introduced. LDCs/LCCs were first discovered by Katz and Trevisan \cite{KatzT00} and since then 
have proven to be crucial tools in many areas of computer science, including private information retrieval, probabilistically checkable proofs, self-correction,  fault-tolerant circuits,  hardness amplification, and data structures (e.g., \cite{BabaiFLS91,LundFKN92,BlumLR93,BlumK95,ChorKGS98,ChenGW13,AndoniLRW17} and surveys \cite{Tre04-survey,Gasarch04}).

The {\em parameters} of interest of these codes are their {\em  rate}, defined as the ratio between the message length $n$ and the codeword length $m$,  their {\em relative minimum distance}, defined as the minimum normalized Hamming distance between any pair of codewords, and their {\em locality} or {\em query complexity}, defined as the number of queries a decoder makes to a received word  $y\in \Sigma^m$. Trade-offs between the achievable parameters of Hamming LDCs/LCCs have been studied extensively over the last two decades \cite{KerenidisW04, WehnerW05, GoldreichKST06, Woodruff07, Yekhanin08, Yekhanin12, DvirGY11, Efremenko12,GalM12, BhattacharyyaDS16, BhattacharyyaG17, bhattacharyya2017lower, DvirSW17,KoppartyMRS17, BhattacharyyaCG20} (see also surveys by Yekhanin \cite{Yekhanin12} and by Kopparty and Saraf \cite{KoppartyS16}). 

Specifically, for $2$-query Hamming LDCs/LCCs it is known that $m=2^{\Theta(n)}$ \cite{KerenidisW04, GoldreichKST06, Ben-AroyaRW08, bhattacharyya2017lower}. However, for $q>2$ queries, the current gap between upper and lower bounds is superpolynomial in $n$. In particular, the best constructions have super-polynomial codeword length \cite{Yekhanin08,DvirGY11,Efremenko12}, 
while the most general lower bounds for $q\geq 3$ are of the form $m=\Omega((\frac{n}{\log n})^{1+1/(\ceil{\frac{q}{2}}-1)})$ \cite{KatzT00,KerenidisW04}. In particular, for $q=3$, \cite{KatzT00} showed an $m=\Omega(n^{3/2})$ bound, which was improved in \cite{KerenidisW04} to $m=\Omega(n^2/\log^2 n)$. This was further improved by \cite{Woodruff07,Woodruff12} to $m=\Omega(n^2/\log n)$ for general codes and $m=\Omega(n^2)$ for linear codes. \cite{bhattacharyya2017lower} used new combinatorial techniques to obtain the same $m=\Omega(n^2/\log n)$ bound. A very recent paper \cite{AlrabiahGKM} breaks the quadratic barrier and proves that  $m=\Omega(n^3/\poly\log n)$.
 We note that the  exponential lower bound on the length of $3$-query LDCs  from \cite{GalM12}  holds only for some restricted parameter regimes, and do not apply to the natural ranges of the known upper bounds.

Motivated by this large gap in the constant-query regime, as well as by applications in constructions of Probabilistically Checkable Proofs (PCPs), Ben-Sasson, Goldreich, Harsha, Sudan, and Vadhan \cite{Ben-SassonGHSV06} introduced a relaxed version of LDCs for Hamming errors. Specifically, the decoder is allowed to output a ``decoding failure'' answer (marked as ``$\bot$''), as long as it errs with some small probability. More precisely,  a \emph{$(q,\delta, \alpha, \rho)$-relaxed LDC} is an error-correcting code satisfying the following properties.

\begin{definition}\label{def:strongRLDC} A $(q,\delta, \alpha, \rho)$-Relaxed Locally Decodable Code 
${C}: \Sigma^n \rightarrow \Sigma^m$ is a code for which there exists a decoder that makes at most $q$ queries to the received word $y$, and satisfies the following further properties:
\begin{enumerate} 
\item  (Perfect completeness) For every $i\in [n]$, if $y=C(x)$ for some message $x$ then the decoder, on input $i$, outputs $x_i$ with probability $1.$
\item  (Relaxed decoding) For every $i\in [n]$, if $y$ is such that $dist(y, C(x))\leq \delta $ for some unique $C(x)$, then the decoder, on input $i$, outputs $x_i$ or $\bot$ with probability   $\geq \alpha$. 
\item (Success rate) For every $y$ such that $dist(y, C(x))\leq \delta $ for some unique $C(x)$, there is a set $I$ of size $\geq \rho n$ such that for every $i\in I$ the decoder, on input $i$, correctly outputs $x_i$ with probability $\geq \alpha$. 
\end{enumerate}
We will call an RLDC that satisfies all $3$ conditions by the notion of {\em strong} RLDC, and one that satisfies just the first $2$ conditions by the notion of {\em weak} RLDC, in which case it is called a $(q,\delta, \alpha)$-RLDC. Furthermore,  if the $q$ queries are made in advance, before seeing entries of the codeword, then the decoder is said to be {\em non-adaptive}; otherwise, it is called {\em adaptive}.
\end{definition}




The above definition is quite general, in the sense that $dist(a,b)$ can refer to several different distance metrics. In the most natural setting, we use $dist(a,b)$ to mean the ``relative'' Hamming distance between $a,b\in \Sigma^m$, namely $dist(a,b)=|\{i\colon a_i\ne b_i\}|/m$. This corresponds to the standard RLDCs for Hamming errors. As it will be clear from the context, we also use $dist(a,b)$ to mean the ``relative'' Edit distance between $a,b \in \Sigma^*$,  namely $dist(a, b)=\ED(a, b)/(|a|+|b|)$, where $\ED(a, b)$ is the minimum number of insertions and deletions to transform string $a$ into $b$. This corresponds to the new notion introduced and studied here, which we call {\em  Insdel RLDCs}. Throughout this paper, we only consider the case where $\Sigma=\{0,1\}$.


Definition \ref{def:strongRLDC} has also been extended recently to the notion of {\em Relaxed Locally Correctable Codes (RLCCs)} by Gur, Ramnarayan, and Rothblum \cite{GurRR20}. RLDCs and RLCCs have been studied  in a sequence of exciting works, where new upper and lower bounds have emerged, and new applications to probabilistic proof systems have been discovered \cite{gur2019lower, ChiesaGS20, GurRR20, AsadiS21, GurL21}.

Surprisingly, Ben-Sasson \etal \cite{Ben-SassonGHSV06} construct strong RLDCs with $q=O(1)$ queries and  $m=n^{1+O(1/\sqrt{q})}$, and more recently Asadi and Shinkar \cite{AsadiS21} improve the bounds to $m=n^{1+O(1/q)}$, in stark contrast with the state-of-the-art constructions of standard LDCs. 
Gur and Lachish \cite{GurL21} show that these bounds are in fact tight, as for every $q\geq 2$, every weak $q$-query RLDC must have length  $m=n^{1+1/O(q^{2})}$ for non-adaptive decoders. 
 We remark that the lower bounds of \cite{GurL21} hold even when the decoder does not have perfect completeness and in particular valid message bits are decoded with success probability $2/3.$ 
 Dall'Agnon, Gur, and Lachish \cite{dall2021structural} further extend these bounds to the setting where the decoder is adaptive, with $m=n^{1+1/O(q^{2}\log^2 q)}.$ 
 
\subsection{Our results}

As discussed before, since the introduction of RLDCs, unlike standard LDCs, they displayed a behaviour amenable to nearly linear-size constructions, with almost matching upper and lower bounds. However, recently \cite{GurL21} conjecture that for $q=2$ queries, there is in fact an exponential lower bound, matching the bounds for standard LDCs. 
 
In this paper, our first contribution is a proof of their conjecture, namely to show that Hamming $2$-query RLDCs  require exponential length. In fact, our exponential lower bound for $q=2$ applies even to weak RLDCs, which only satisfy the first two properties (perfect completeness and relaxed decoding), and even for adaptive decoders.

\begin{restatable}{theorem}{twoqrldcmain} \label{thm:main-2qRLDC}
	Let $C \colon \set{0,1}^n \rightarrow \set{0,1}^m$ be a weak adaptive $(2,\delta,1/2+\eps)$-RLDC. Then $m = 2^{\Omega_{\delta,\eps}(n)}$.
\end{restatable}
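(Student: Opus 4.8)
The plan is to deduce Theorem~\ref{thm:main-2qRLDC} from the known exponential lower bounds for \emph{standard} $2$-query binary LDCs (\cite{KerenidisW04,GoldreichKST06,Ben-AroyaRW08,bhattacharyya2017lower}) by transforming the given weak adaptive $(2,\delta,1/2+\eps)$-RLDC $C$ into a standard $2$-query binary LDC of message dimension $\Omega_{\delta,\eps}(n)$, block length at most $m$, and constant decoding radius $\delta'>0$ and advantage $\eps'>0$; since the latter forces block length $2^{\Omega(n)}$, we conclude $m=2^{\Omega_{\delta,\eps}(n)}$. Throughout we may assume $m$ is large (else the statement is trivial), that $C$ has relative distance at least a constant (standard for (R)LDCs; otherwise first pass to a suitable subcode), and --- since a weak $(2,\delta,1/2+\eps)$-RLDC is also a weak $(2,\delta_0,1/2+\eps)$-RLDC for every $\delta_0\le\delta$ --- that $\delta$ itself is as small a constant as convenient, in particular small enough that $C$ has relative distance exceeding $2\delta$ and $\delta m\ge 1$. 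Then flipping message bit $i$ changes a set $\Delta_i(x)\subseteq[m]$ of more than $2\delta m$ codeword positions.

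Two elementary facts about the definition drive the transformation. \emph{(i) Perfect completeness makes the decoder truthful on honest positions:} the output on input $i$ is a function of the (at most two) codeword positions read, so on any received word agreeing with $C(x)$ on the positions read along a given random string $r$, the decoder outputs $x_i$ --- never $\bot$ and never a wrong value; hence $\bot$'s and errors can arise \emph{only} from reading corrupted positions. \emph{(ii) The decoder cannot be fooled:} on any word within relative distance $\delta$ of the (unique) nearby codeword $C(x)$, the decoder on input $i$ outputs $1-x_i$ with probability at most $1/2-\eps$. Writing $Q_i(x,r)$ for the positions read against the true codeword $C(x)$ on randomness $r$ (adaptively chosen, but the statements below hold verbatim in the adaptive model), fact (i) gives: corrupting any $E$ with $|E|\le\delta m$ keeps the output equal to $x_i$ on every $r$ with $Q_i(x,r)\cap E=\emptyset$, so the decoder is correct with probability at least $\Pr_r[Q_i(x,r)\cap E=\emptyset]$. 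Also, since the decoder must distinguish $C(x)$ from its $i$-flip $C(x')$, which agree off $\Delta_i(x)$, it always reads at least one position of $\Delta_i(x)$.

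This yields a dichotomy for each message coordinate $i$. \textbf{Case (a):} every set $E$ with $|E|\le\delta m$ has $\Pr_r[Q_i(x,r)\cap E\neq\emptyset]\le 1/2-\eps'$. Then by the bound above the relaxed decoder, on input $i$, \emph{already} outputs the correct bit with probability $\ge 1/2+\eps'$ against every corruption of radius $\le\delta$ --- that is, it is a genuine \emph{standard} decoder for coordinate $i$ (this uses only perfect completeness). \textbf{Case (b):} some ``killer set'' $E_i$, $|E_i|\le\delta m$, is read by the $i$-decoder with probability $>1/2-\eps'$. Since $\delta$ is small and $\sum_j\Pr_r[j\in Q_i(x,r)]\le 2$, case (b) forces the $i$-decoder to place a constant fraction of its query weight on ``heavy'' codeword positions --- those read with probability $\gg 1/m$ --- which moreover must interact with $\Delta_i(x)$ (a set strictly larger than any killer set). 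The plan is then to \emph{neutralize} the heavy positions: find a small $F\subseteq[m]$ of codeword positions such that, after fixing them --- equivalently, restricting the message bits on which those positions depend, the ``restriction of message bits that fixes codeword bits'' step --- a constant fraction of the message coordinates survive (are neither fixed nor made redundant) and all fall into case (a) for the restricted subcode $C'$. Verifying that $C'$ still has constant relative distance and is still a weak RLDC at a (possibly smaller) constant radius, its relaxed decoder is then a standard $2$-query binary LDC on the surviving coordinates, and the cited lower bound finishes the proof.

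The main obstacle is the neutralization step. A single codeword position could a priori be heavy for many message coordinates at once, so one cannot simply set $F=\bigcup_i E_i$ and still hope $\Omega(n)$ message bits survive; the resolution must show that the heavy positions are pinned into a low-dimensional, structured configuration --- combining ``cannot be fooled'' with the constraint that each $i$-decoder also reads $\Delta_i(x)$ of size $>2\delta m$ --- so that fixing comparatively few codeword bits destroys all killer sets while keeping a constant fraction of the message dimension. Secondary technical points are to carry out the smoothing/restriction bookkeeping in the adaptive model (where which positions are read depends on the values seen), to check that the restricted code retains enough distance for the unique-decoding hypothesis, and to confirm the exponential $2$-query binary LDC lower bound applies to the (possibly adaptive) decoder produced.
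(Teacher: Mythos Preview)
Your high-level plan --- reduce to a standard $2$-query binary LDC on $\Omega(n)$ message bits and invoke \cite{KerenidisW04} --- is exactly what the paper does. But the heart of the argument, your ``neutralization step,'' is left as a hope rather than a proof, and the route you sketch toward it (via ``heavy positions'' and killer sets defined by the \emph{query distribution}) does not obviously close. Concretely: from $\Pr_r[Q_i(x,r)\cap E_i\neq\emptyset]>1/2-\eps'$ with $|E_i|\le\delta m$ and $\sum_j\Pr[j\in Q_i]\le 2$ you only get that the average query mass on a position of $E_i$ is $\Theta(1/(\delta m))$; this is still $O(1/m)$ and gives no small ``structured'' set you can kill by restriction. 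Worse, $E_i$ depends on $x$ (in the adaptive case $Q_i(x,r)$ already depends on $x$), and you give no mechanism by which fixing a few message bits simultaneously destroys all killer sets for $\Omega(n)$ coordinates. You flag this obstacle yourself, but you do not resolve it.

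The paper supplies the missing structural handle, and it is not about query smoothness at all --- it is about the \emph{code}. Perfect completeness forces the following: if on input $i$ the decoder (with any positive probability) queries $j$ and can output $\perp$, then the codeword bit $C(x)_j$ must be \emph{fixable} by $x_i$, meaning one of $C_j\restriction_{x_i=0}$, $C_j\restriction_{x_i=1}$ is constant (this is \cref{clm:bot-fix}/\cref{clm:adaptive-bot-fix}). Let $S_i$ be the set of codeword positions fixable by $x_i$ and $T_j=\{i:j\in S_i\}$. Two consequences: (a) conditioned on the decoder's first query landing outside $S_i$, it never outputs $\perp$ and is already a standard LDC decoder for bit $i$ (this replaces your Case~(a), but the conditioning set is defined by the code, not by a hypothetical smoothness bound); (b) conditioned on the first query landing inside $S_i$, an adversary can overwrite the $S_i$-portion with the encoding of any message with $i$-th bit flipped and force output $1-x_i$, so this branch contributes nothing to the relaxed success probability (\cref{lem:conditional-zero}, \cref{lem:LDC-reduction}). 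Hence if $|S_i|\le\delta m/2$, bit $i$ has a genuine $2$-query LDC decoder with error tolerance $\delta/2$. Finally --- and this is the neutralization you were looking for --- the sets $S_i$ are controlled by a \emph{random restriction of message bits}: any $j$ with $|T_j|\ge 3\ln(8/\delta)$ becomes a constant under a random $\{0,1,\star\}$ restriction with probability $\ge 1-\delta/8$, so after restricting one can assume $|T_j|$ is bounded for all but $\delta m/4$ surviving positions, and double-counting gives $|S_i|\le\delta m/2$ for all but $O_\delta(1)$ of the $\Omega(n)$ surviving message bits (\cref{lem:random-restriction} and the proof of \cref{prop:non-adaptive-2qRLDC}). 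The adaptive case is handled by running the same argument on the \emph{first} query (\cref{subsec:adaptive-2qRLDC}). Your smoothness/heavy-position dichotomy never identifies the sets $S_i$, $T_j$, and without them the restriction argument has nothing to act on.
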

Our results are the first exponential bounds for RLDCs. Furthermore, combined with the constructions with nearly linear codeword length for some constant number of queries \cite{Ben-SassonGHSV06, AsadiS21}, our results imply that RLDCs experience a ``phase transition''-type phenomena, where the codeword length drops from being exponential at $q=2$ queries to being almost linear at $q=c$ queries for some constant $c> 2$. In particular, this also implies that there is a query number $q$ where the codeword length drops from being super-polynomial at $q$ to being polynomial at $q+1$. Finding this exact threshold query complexity is an intriguing open question.    


As our second contribution, we introduce and study the notion of RLDCs  correcting {\em insertions and deletions}, namely  Insdel RLDCs. 
The non-relaxed variants of Insdel LDCs  were first introduced in \cite{Ostrovsky-InsdelLDC-Compiler}, and were further studied in \cite{BlockBGKZ20,ChengLZ20,block2021private}. Local decoding in the Insdel setting is  motivated in DNA storage \cite{Olgica17}, and in particular  \cite{Banaletal-nature2021} show recent advances in bio-technological aspects of random access to data in these precise settings.

In \cite{Ostrovsky-InsdelLDC-Compiler,BlockBGKZ20}, the authors give Hamming to Insdel reductions which transform any Hamming LDC into an Insdel LDC with rate reduced by a constant multiplicative factor, and locality increased by a  $\polylog(m)$ multiplicative factor. Unfortunately, these compilers do not imply constant-query Insdel LDCs, whose existence is still an open question.

 The results of \cite{blocki2021exponential} show  strong lower bounds on the length of constant-query Insdel LDCs. In particular, they show that  linear Insdel LDCs with $2$  queries do no exist,  general Insdel LDCs for $q= 3$ queries must have  $m=\exp(\Omega(\sqrt{n}))$, and for $q\geq 4$ they must have $m=\exp(n^{\Omega(1/q)}).$

 In this work we continue the study of locally decodable codes in insertion and deletion channels by proving the first upper and lower bounds regarding the relaxed variants of Insdel LDCs. 
 We first consider strong Insdel RLDCs, which satisfy all three properties of Definition \ref{def:strongRLDC} and where the notion of distance is now that of relative edit distance. We adapt and extend the results of \cite{blocki2021exponential} to establish strong lower bounds on the codeword length of strong Insdel RLDCs. In particular, we prove that $m= \exp(n^{\Omega(1/q)})$ for any strong Insdel RLDC with locality $q$.

 \begin{restatable}{theorem}{strirldcmain} \label{thm:main-sinsdel RLDC}
	Let $C \colon \set{0,1}^n \rightarrow \set{0,1}^m$ be a non-adaptive strong $(q,\delta,1/2+\beta,\rho)$-Insdel RLDC where $\beta>0$. Then for every $q\ge 2$ there is a constant $c_1=c_1(q,\delta,\beta,\rho)$ such that  
	\begin{align*}
	    m = \exp\tp{c_1\cdot n^{\Omega_{\rho}(\beta^2/q)}}.
	\end{align*}
	Furthermore, the same bound holds even if $C$ does not have perfect completeness. If $C$ has an adaptive decoder, the same bound holds with $\beta$ replaced by $\beta/2^{q-1}$. Formally, there exists a constant $c_2=c_1(q,\delta,\beta/2^{q-1},\rho)$ such that
	\begin{align*}
	    m = \exp\tp{c_2\cdot n^{\Omega_{\rho}(\beta^2/(q2^{2q}))}}.
	\end{align*}
\end{restatable}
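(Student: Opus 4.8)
The plan is to show that every strong Insdel RLDC ``essentially contains'' a standard (non-relaxed) Insdel locally decodable code on a constant fraction of the message coordinates, and then to run a suitably robust version of the lower-bound argument of \cite{blocki2021exponential} on that sub-code. The adaptive case is handled first, by the usual reduction: an adaptive $q$-query decoder with advantage $1/2+\beta$ is simulated non-adaptively by guessing in advance the answers to all $q$ oracle queries, which agree with the true answers with probability $2^{-(q-1)}$ (the first query is already determined), so the resulting non-adaptive decoder has advantage $1/2+\beta/2^{q-1}$. Henceforth I assume non-adaptivity and target the first bound with parameter $\beta$, reading off the adaptive bound by the substitution $\beta\mapsto\beta/2^{q-1}$ at the end.

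Next I would extract the sub-code. Let $\mathcal{D}$ be the ``hard'' distribution over corrupted words underlying \cite{blocki2021exponential} --- informally, $C(x)$ subjected to a structured pattern of insertions/deletions (such as block padding or block deletion) kept within relative edit distance $\delta$ --- with the property (which the adaptation must establish) that no $q$-query decoder can correctly recover an $\Omega(\rho)$ fraction of message coordinates from a random $y\sim\mathcal{D}$ with advantage $1/2+\beta$ unless $m$ is large. By the success-rate condition of Definition~\ref{def:strongRLDC}, for each $y$ in the support of $\mathcal{D}$ there is a set $I_y$ with $|I_y|\ge\rho n$ on which the RLDC decoder is correct with probability $\ge 1/2+\beta$. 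Averaging the indicator $\mathbf{1}[i\in I_y]$ over $i\in[n]$ and $y\sim\mathcal{D}$ and applying Markov's inequality yields a fixed set $I^\ast\subseteq[n]$ with $|I^\ast|=\Omega(\rho n)$ such that every $i\in I^\ast$ satisfies $\Pr_{y\sim\mathcal{D}}[i\in I_y]\ge\rho/2$. Fixing all message coordinates outside $I^\ast$ to $0$ gives a code $C'\colon\{0,1\}^{I^\ast}\to\{0,1\}^m$ of message length $n'=\Omega(\rho n)$ whose decoder, on a random coordinate of $I^\ast$ and a random corruption from the induced distribution, is correct with advantage $\ge 1/2+\beta$ on an $\Omega(\rho)$-fraction of (coordinate, corruption) pairs. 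Perfect completeness and the relaxed-decoding property are used here only to certify that the restricted corruptions still lie within relative edit distance $\delta$ of the unique nearby codeword and that the decoder's behaviour is inherited cleanly by $C'$; if completeness is imperfect the same averaging goes through with slightly worse constants, which is why the bound is unaffected by dropping it.

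Finally --- and this is the \emph{crux} --- I would verify that the lower-bound argument of \cite{blocki2021exponential} is robust enough to rule out such a ``fractional'' decoder: one required to be correct only for an $\Omega(\rho)$ fraction of coordinates and only against an $\Omega(\rho)$ fraction of the structured corruptions, rather than a worst-case Insdel LDC. This amounts to reopening their proof and checking that each counting/combinatorial step tolerates a $\mathrm{poly}(\rho)$ fraction of ``excused'' pairs, absorbing those losses into the constants $c_1(q,\delta,\beta,\rho)$ and into $\Omega_\rho(\cdot)$; the quadratic dependence on advantage, the $\beta^2/q$ in the exponent, enters exactly as in \cite{blocki2021exponential}, from the step converting a decoder of advantage $1/2+\beta$ into an (approximately) noiseless gadget, which loses a factor $\beta$ twice. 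Assembling the pieces, the robustified bound applied to $C'$ gives $m=\exp\!\big(c_1\cdot (n')^{\,\Omega(\beta^2/q)}\big)=\exp\!\big(c_1\cdot n^{\Omega_\rho(\beta^2/q)}\big)$ in the non-adaptive (and possibly incomplete) case, and substituting $\beta\mapsto\beta/2^{q-1}$ yields the adaptive bound $m=\exp\!\big(c_2\cdot n^{\Omega_\rho(\beta^2/(q2^{2q}))}\big)$. The main obstacle I anticipate is precisely this robustification: the structured corruption family of \cite{blocki2021exponential} must be chosen so that it survives both the restriction of the message to $I^\ast$ and the passage from a worst-case to an $\Omega(\rho)$-fraction guarantee without its hard-instance structure degrading (for instance, remaining suitably ``spread'' over codeword positions), and one must carefully book-keep how the several $\rho$-losses compound through their recursion on $q$.
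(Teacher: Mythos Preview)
Your high-level strategy---extract a fixed index set on which the RLDC behaves like a standard Insdel LDC and then invoke \cite{blocki2021exponential}---is the right one, and your adaptive-to-non-adaptive reduction matches the paper's. But the core of your argument has a genuine gap that the paper avoids by a different (and simpler) route.

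The difficulty is that your averaging step only yields, for each $i\in I^\ast$, that $\Pr_{y\sim\mathcal D}[i\in I_y]\ge\rho/2$. For the remaining $1-\rho/2$ fraction of corruptions you have no lower bound on the success probability from the success-rate condition alone, so the resulting ``fractional'' decoder could succeed with probability well below $1/2$ in expectation. You recognise this and propose to ``robustify'' the \cite{blocki2021exponential} argument to tolerate a decoder that is only good on an $\Omega(\rho)$-fraction of corruptions; but you do not carry this out, and it is not at all clear their counting survives that weakening---this is exactly the step you flag as the main obstacle, and it is a real one.

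The paper sidesteps this obstacle entirely via two ideas you are missing. First, it \emph{amplifies} the success probability from $1/2+\beta$ to $1-\rho/4$ by repeating the decoder $T=\Theta(\beta^{-2}\log(1/\rho))$ times and taking a majority; this multiplies the query count to $q'=\Theta(q\beta^{-2}\log(1/\rho))$ and is precisely where the $\beta^2/q$ in the exponent comes from (not from ``losing a factor $\beta$ twice'' as you suggest). Second, it uses the \emph{relaxed decoding} condition, not merely the success-rate condition: for every $y$ and every $i$, the decoder outputs $x_i$ or $\perp$ with probability $\ge\alpha$, so replacing $\perp$ by a random bit gives success probability $\ge\alpha/2$ unconditionally. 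Combining, the average success probability over a random $i$ is at least $\rho\alpha+(1-\rho)\alpha/2=(1+\rho)\alpha/2$, which exceeds $1/2$ once $\alpha=1-\rho/4$. A Markov step then produces a set $I$ of size $\Omega(\rho n)$ on which the decoder succeeds with probability strictly above $1/2$ against the \emph{entire} channel $\mathfrak D$---no fractional guarantee, no robustification needed---and \cite{blocki2021exponential} applies as a black box (in its ``average over messages, average over the fixed channel'' formulation, \cref{thm:lb-channel}). Your proposal uses condition~3 but treats condition~2 as mere hygiene; in the paper's proof it is doing real work.
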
 

Our reduction shown in the proof of \cref{thm:main-2qRLDC},  together with the impossibility results of standard {\em linear} or {\em affine} 2-query Insdel LDCs from \cite{blocki2021exponential} show a further impossibility result for linear and for affine $2$-query Insdel RLDCs (see remarks before \cref{cor:linear-2qIRLDC}). A linear code of length $m$ is defined over a finite field $\F$ and it is a linear subspace of the vector space $\F^m$, while an affine code is an affine subspace of $\F^m$.

 We then consider {\em weak} Insdel RLDCs that only satisfy the first two properties (perfect completeness and relaxed decoding). In contrast with \cref{thm:main-sinsdel RLDC}, we construct weak Insdel RLDCs with constant locality $q=O(1)$ and length $m=n^{1+\gamma}$ for some constant $\gamma \in (0, 1)$. To the best of our knowledge, this is the first positive result in the constant-query regime and the Insdel setting. However, the existence of a constant-query standard Insdel LDC (or even a constant-query strong Insdel RLDC) with any rate remains an open question. Finally, it is easy to see that our exponential lower bound for weak Hamming RLDCs with locality $q=2$ still applies in the Insdel setting, since Insdel errors are more general than Hamming error. Thus, in the Insdel setting we discover the same ``phase transition''-type phenomena as for Hamming RLDCs.

\begin{restatable}{theorem}{weakirldcmain} \label{thm:main-rildc-construction}
	For any $\gamma>0$ and $\eps \in (0,1/2)$, there exist constants $\delta \in (0,1/2)$ and $q=q(\delta,\eps,\gamma)$, and non-adaptive weak $(q,\delta,1/2+\eps)$-Insdel RLDCs $C\colon\set{0,1}^n\rightarrow \set{0,1}^m$ with $m=O(n^{1+\gamma})$. 
\end{restatable}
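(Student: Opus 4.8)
The plan is to obtain $C$ by equipping a near‑linear‑length, constant‑query Hamming RLDC with a synchronization layer that makes its local decoder tolerant of insertions and deletions; the relaxation (permission to output $\bot$) is exactly what will make constant query complexity possible. Fix a constant $\gamma'\in(0,\gamma)$ and take, via Ben‑Sasson \etal and Asadi--Shinkar, a Hamming RLDC $C_H\colon\zo^n\to\zo^M$ with $M=O(n^{1+\gamma'})$, constant query complexity $q_H=q_H(\gamma',\eps)$, constant relative distance $\tau_H$, and relaxed‑decoding parameter $\alpha_H$ that can be taken to be any constant less than $1$ (in particular $\alpha_H>1/2+\eps$); pushing $\alpha_H$ up costs only constant factors in $q_H$ and $M$ by standard amplification of the decoder. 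Write $D_H$ for its non‑adaptive decoder.

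\textbf{The code.} Partition $[M]$ into blocks of $O(1)$ consecutive coordinates; block $b$ is packaged as a string $\beta_b(x)$ holding the corresponding bits of $C_H(x)$ together with an \emph{index gadget} $G_b$ whose role is to let a decoder verify, from a guessed location in the received word, whether the block sitting there really is block $b$. Each $\beta_b(x)$ is then encoded with a good constant‑rate, constant‑distance binary Insdel code and made self‑delimiting, and $C(x)$ is the concatenation of these encoded blocks (with short buffers between them). As long as $|\beta_b|=\polylog n$, the codeword length is $m=O(M\cdot\polylog n)=O(n^{1+\gamma})$; and since changing $x$ changes a $\tau_H$‑fraction of $C_H(x)$, hence a constant fraction of blocks, $C$ has constant relative edit distance, so for $\delta$ a small enough constant every $y'$ with $\dist(y',C(x))\le\delta$ is near a \emph{unique} codeword.

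\textbf{The decoder and its analysis.} On input $i$ the Insdel decoder simulates $D_H(i)$, and for each coordinate $j$ that $D_H$ queries it runs a constant‑query \emph{locate‑and‑verify} routine: it reads a window of constant size around the position where block $b(j)$ should lie under a uniform rescaling of the observed received length, identifies a candidate block there, tests it against $b(j)$ using $G_{b(j)}$, and returns the relevant bit of $C_H(x)$ if the test passes and $\bot$ otherwise; the Insdel decoder outputs $\bot$ as soon as any routine returns $\bot$. Perfect completeness is immediate: on $y=C(x)$ every block is exactly where expected, every test passes, $D_H$ is fed the true symbols of $C_H(x)$, and it outputs $x_i$ with probability $1$. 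For relaxed decoding, suppose $\dist(y',C(x))\le\delta$ so that the number of edit operations is $E=O(\delta m)$. A locate‑and‑verify call can be made to return a \emph{wrong} bit only by corrupting or faking a block while keeping its index gadget accepting, and $G_b$ (an error‑corrected encoding of $b$ together with a constant number of local consistency checks run against the \emph{known} target $b$) will be designed so that any such deception costs $\Omega(|\beta_b|)$ edits — hence it can occur for at most $O(E/|\beta_b|)=O(\delta M)$ coordinates. Thus the word $\tilde w\in\{0,1,\bot\}^M$ effectively presented to $D_H$ agrees with $C_H(x)$ on all but $O(\delta M)$ of its non‑$\bot$ coordinates; completing the $\bot$‑coordinates with the true values of $C_H(x)$ yields a word within relative Hamming distance $\delta_H$ of $C_H(x)$ (for $\delta$ a small enough constant), so the relaxed‑decoding guarantee of $D_H$ gives $\Pr[\text{output}\notin\{x_i,\bot\}]\le 1-\alpha_H<1/2-\eps$. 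The query count is $q_H$ times the constant cost of one locate‑and‑verify call, i.e.\ $O_{\gamma,\eps,\delta}(1)$.

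\textbf{Main obstacle.} The heart of the argument is designing $G_b$ so that the decoder can, with only $O(1)$ queries, reject a block whose index differs from $b$, even under ``large‑shift'' attacks (delete a long prefix, append garbage) that move every block by the same large amount. Absent such a gadget a shift turns $\tilde w$ into a shifted — hence far‑from‑codeword — copy of $C_H(x)$ and the analysis above collapses; on the other hand, detecting the shift by reading an explicit $\Theta(\log M)$‑bit index would push the query complexity to $\Theta(\log n)$, which is precisely the (indeed $\polylog$) overhead paid by the known Hamming‑to‑Insdel compilers. The plan is to make $G_b$ a small, highly redundant self‑synchronizing gadget that supports constant‑query local comparison of an error‑corrected encoding of $b$ against the target, built either directly or by a shallow composition, and to lean on the fact that the decoder may answer $\bot$ on every query whose block it cannot positively identify — so that an undetected mislocation is always ``wasted'' adversarial budget rather than a wrong answer. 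Getting this gadget down to $\polylog n$ length while keeping its verification cost $O(1)$ rather than $\polylog n$, and then carrying the union bound over the $q_H$ simulated queries through while keeping the total query count an absolute constant, is the main technical content.
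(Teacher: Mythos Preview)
Your high-level strategy --- concatenate a constant-query Hamming RLDC with an Insdel-robust per-block wrapper and exploit the permission to output $\bot$ --- matches the paper's. But you explicitly flag the constant-query index-verification gadget $G_b$ as the ``main technical content'' and leave it unbuilt, and this is precisely the gap in your proposal: absent a concrete $G_b$ with the claimed properties, nothing in your argument prevents the $\polylog n$ blow-up you are worried about. The paper does not solve this obstacle; it sidesteps it.

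The idea you are missing is that no separate index gadget is needed. Rather than packaging $O(1)$ Hamming bits together with a verifiable label $G_b$, the paper encodes each single outer bit \emph{together with its index} as one message $(j,y_j)$ under the Schulman--Zuckerman Insdel code, giving a block of length $t=O(\log k)$, and separates blocks by $0^t$ buffers. When the outer decoder asks for $y_j$, the Insdel decoder looks at the interval $I_j$ where block $j$ sits in an \emph{uncorrupted} codeword (no rescaling, no searching) and merely tests whether the received bits there are consistent with $\Cin(j,0)$ or with $\Cin(j,1)$. Because $j$ is known, there are only \emph{two} candidate $t$-bit strings; a constant number of uniformly random samples from $I_j$ --- plus one deterministic sample at a coordinate where $\Cin(j,0)$ and $\Cin(j,1)$ differ, to preserve perfect completeness --- suffice to either select one of the two or detect a mismatch and return $\bot$. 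This is how a $\Theta(\log k)$-bit index is effectively checked with $O(1)$ queries: it never has to be read or verified on its own.

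What remains is to show that the adversary cannot, with $\le\delta\cdot 2m$ edits, make many intervals $w[I_j]$ look close to the \emph{wrong} candidate $\Cin(j,1-y_j)$. The paper proves a self-nonsimilarity lemma: any length-$\le(2-\din)t$ window of the true codeword $C(x)$ has edit distance $\ge\din t/2$ from $\Cin(j,1-y_j)$, using the SZ weight guarantee together with the $0^t$ buffers. A matching argument against an LCS between $C(x)$ and $w$ then shows that each such ``dangerous'' block costs $\Omega(t)$ edits, so there are at most $\dout k/2$ of them; the outer RLDC absorbs these as Hamming errors. This replaces your unspecified $G_b$ with a short, concrete argument and gives the constant query complexity directly.
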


 We remark that in the Hamming setting, \cite{Ben-SassonGHSV06} shows that the first two properties of \cref{def:strongRLDC} imply the third property for codes with constant query complexity and which can withstand a constant fraction of errors.
 Our results demonstrate that, in general, unlike in the Hamming case, the first two properties do not imply the third property for Insdel RLDCs from \cref{def:strongRLDC}. 
 Indeed, while for strong Insdel RLDCs we have $m= \exp(n^{\Omega(1/q)})$  for codes of locality $q$, there exists $q=O(1)$ for which we have constructions of weak Insdel RLDCs with $m=n^{1+\gamma}.$ This observation suggests that there are significant differences between Hamming RLDCs and Insdel RLDCs. 

 We note that our construction of weak Insdel RLDCs can be modified to obtain strong Insdel Relaxed Locally Correctable Codes (Insdel RLCCs). Informally, an Insdel RLCC is a code for which codeword entries can be decoded to the correct value or $\bot$ with high probability,   even in the presence of insdel errors. 
 The formal definition of RLCC is given in \cref{sec:rLCC} (see \cref{def:rLCC}). We have the following corollary.
 
 \begin{restatable}{corollary}{strongirlcc}\label{thm:main-rilcc-construction}
 For any $\gamma>0$ and $\eps \in (0,1/2)$, there exist constants $\delta \in (0,1/2)$ and $q=q(\delta,\eps,\gamma)$, and non-adaptive strong $(q,\delta,1/2 + \eps, 1/2 )$-Insdel RLCCs $C\colon \set{0,1}^n\rightarrow \set{0,1}^m$ with $m=O(n^{1+\gamma})$. 
 \end{restatable}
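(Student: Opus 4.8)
The plan is to obtain \cref{thm:main-rilcc-construction} by \emph{lifting a Hamming strong RLCC to the Insdel setting through the very same Hamming-to-Insdel transformation that underlies \cref{thm:main-rildc-construction}}. Recall that that construction takes a constant-query Hamming (weak) RLDC $\mathcal{C}\colon\set{0,1}^n\to\set{0,1}^N$ with $N=n^{1+\gamma'}$ and applies a transform $\Phi$ consisting of: partitioning $\mathcal{C}(x)$ into blocks, prefixing each block with a short tag encoding its index, separating blocks by buffers of $0$s, and encoding each (index,block) pair with a good constant-rate Insdel code. The observation I would exploit is that $\Phi$ is \emph{oblivious to which local properties the Hamming object has}: if we instead start from a Hamming \emph{strong} RLCC $\mathcal{C}$ with constant locality, length $n^{1+\gamma'}$, and success-rate parameter $\rho_H$ that can be taken arbitrarily close to $1$ — such codes follow from the systematic-plus-proximity-proof constructions in the Hamming RLDC/RLCC literature we have cited — then $\Phi(\mathcal{C})$ will be a strong Insdel RLCC. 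Padding and rescaling the inner code keeps $m=O(n^{1+\gamma})$.

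The corrector for $C_{\RLCC}=\Phi(\mathcal{C})$, on input $j\in[m]$ and oracle access to $w'$ with $\ED(w',C_{\RLCC}(x))\le\delta m$, works by \emph{simulating a Hamming-style oracle for $\mathcal{C}(x)$}. To answer a simulated query to coordinate $a$ of $\mathcal{C}(x)$, it identifies the block containing $a$ by drawing $O(1)$ random buffer-delimited windows of $w'$, reading their index tags to self-locate, and aborting to $\bot$ whenever the sample is ambiguous; on an unperturbed block it then decodes the inner Insdel code and reads off the bit. The corrector runs the Hamming strong RLCC's own relaxed-correction procedure for coordinate $j$ (which makes only $O(1)$ queries, all of them local reads — a proximity/consistency test followed either by reading the systematic copy of the relevant message bit or by checking a local constraint around $j$), feeding each of its queries through this simulated oracle, and outputs whatever the Hamming corrector outputs (or $\bot$). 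Perfect completeness is inherited from both layers. The decoding radius $\delta$ is a constant fraction of the inner Insdel code's radius, shrunk by the buffer overhead; and since each of the $O(1)$ simulated reads is answered correctly except with an arbitrarily small constant probability (boosted by repetition within the inner decoding and localization), a union bound with the Hamming corrector's $1/2+\eps_H$ guarantee yields overall success probability $\ge 1/2+\eps$.

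For the success-rate bound $\rho=1/2$: a budget of $\delta m$ edit errors can perturb or displace only an $O(\delta)$-fraction of the index-tagged blocks, and on every other block localization and inner decoding succeed \emph{deterministically}, so for coordinates $j$ all of whose relevant blocks are unperturbed the Insdel corrector faithfully simulates the Hamming oracle and therefore inherits correctness on $\mathcal{C}$'s good set. Intersecting ``relevant blocks unperturbed'' with ``$j$ Hamming-good'' leaves at least a $(\rho_H-O(\delta))$-fraction of coordinates; taking $\rho_H$ close to $1$ and $\delta$ small makes this $\ge 1/2$, and the set may depend on $w'$, as \cref{def:strongRLDC} allows. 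I expect the main obstacle to be that the block-localization must run in \emph{constant} query complexity in the edit metric — there is no room for a $\log$-depth synchronization search — which forces the argument to lean heavily on the $\bot$ option and to account carefully that localization never fails on valid codewords (preserving perfect completeness) while failing on only an $O(\delta)$-fraction of coordinates under $\delta$-corruption. A secondary point to nail down is that the chosen Hamming strong RLCC indeed has a corrector whose $O(1)$ queries are ``plain reads'' servable by the simulation and whose success-rate parameter can be pushed above $1/2$ (composing, if needed, with repetition over independent copies of $\mathcal{C}(x)$ inside $\Phi$).
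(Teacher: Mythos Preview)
Your proposal takes a substantially more complicated route than the paper, and in doing so creates a gap you yourself flag but do not close.

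The paper's proof is much simpler. It uses exactly the same concatenation-with-buffers construction as \cref{thm:main-rildc-construction}, but with a \emph{weak} Hamming RLCC (only Conditions 1 and 2 of \cref{def:rLCC}) as the outer code; no strong success-rate property is needed from the Hamming side. The key observation you are missing is where the success rate $\rho=1/2$ comes from: after padding, exactly half the bits of the Insdel codeword are the $0^t$ buffers. For any input index $j$ lying in a buffer, the corrector outputs $0$ deterministically with zero queries. This alone gives a set $I_y$ of size $m/2$ on which the corrector is always correct, regardless of the corruption $w$. For indices $j$ in a non-buffer block, one computes the block number $\tilde{\imath}$, runs the \emph{weak} outer RLCC corrector on $\tilde{\imath}$ via the same constant-query simulated oracle as in \cref{alg:decoder}, and if the outer corrector returns $\tilde{y}_{\tilde{\imath}}\in\{0,1\}$, re-encodes $\Cin(\tilde{\imath},\tilde{y}_{\tilde{\imath}})$ locally and outputs the relevant bit. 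Relaxed decoding then follows verbatim from the analysis of \cref{thm:main-rildc-construction}.

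Your approach instead tries to inherit the success rate from a \emph{strong} Hamming RLCC with $\rho_H$ close to $1$, and to simulate the Hamming oracle by ``drawing $O(1)$ random buffer-delimited windows, reading their index tags to self-locate'' and then ``decoding the inner Insdel code.'' But reading an index tag or decoding an inner codeword costs $\Theta(\log k)$ queries, not $O(1)$; this is precisely the $\polylog(m)$ blow-up the paper's construction is designed to avoid. The paper's simulated oracle never self-localizes and never decodes the inner code: it looks at the block's \emph{fixed} nominal position (as if there were no corruptions), samples $O(1)$ random positions there, checks consistency against the two candidate inner codewords $\Cin(j,0)$ and $\Cin(j,1)$, and outputs $\bot$ on any inconsistency. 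Your proposal acknowledges the constant-query localization as ``the main obstacle'' but does not resolve it; the paper sidesteps it entirely by this consistency-check trick together with the buffer-gives-$\rho=1/2$ observation.
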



\subsection{Overview of techniques}

\subsubsection{Exponential Lower Bound for Weak Hamming RLDCs with \texorpdfstring{$q=2$}{q=2}}
To simplify the presentation, we assume a non-adaptive decoder in this overview. While the exact same arguments do not directly apply to adaptive decoders\footnote{\label{foot:kt-obs}For standard LDCs Katz and Trevisan \cite{KatzT00} observed that an adaptive decoder could be converted into a non-adaptive decoder by randomly guessing the output $y_j$ of the first query $j$ to learn the second query $k$. Now we non-adaptively query the received codeword for both $y_j$ and $y_k$. If our guess for $y_j$ was correct then we continue simulating the adaptive decoder. Otherwise, we simply guess the output $x_i$. If the adaptive decoder succeeds with probability at least $p \geq 1/2 + \epsilon$ then the non-adaptive decoder succeeds with probability $p' \geq 1/4 + p/2 \geq 1/2 + \epsilon/2$. Unfortunately, this reduction does not preserve perfect completeness as required by our proofs for relaxed $2$-query Hamming RLDCs i.e., if $p=1$ then $p' = 3/4$.}, with a bit more care they can be adapted to work in those settings. 

At a high level we prove our lower bound by transforming any non-adaptive $2$-query weak Hamming RLDC for messages of length $n$ and $\delta$ fraction of errors into a standard $2$-query Hamming LDC for messages of length $n'=\Omega(n)$, with slightly reduced error tolerance of $\delta/2$. Kerenidis and de Wolf \cite{KerenidisW04} proved that any $2$-query Hamming LDC for messages of length $n$ must have codeword length $m = \exp(\Omega(n))$. Combining this result with our transformation, it immediately follows that any $2$-query weak Hamming RLDC must also have codeword length $m = \exp(\Omega(n))$. While our transformation does not need the third property (success rate) of a strong RLDC, we crucially rely on the property of {\em perfect completeness}, and that the decoder only makes $q=2$ queries.
 
Let $C\colon\set{0,1}^n \rightarrow \set{0,1}^m$ be a weak $(2,\delta,1/2+\eps)$-RLDC. For simplicity (and without loss of generality), let us assume the decoder $\Dec$ works as follows. For message $x$ and input $i \in [n]$, the decoder non-adaptively makes 2 random queries $j ,k \in [m]$, and outputs $f_{j,k}^{i}(y_j, y_k) \in \set{0,1,\perp}$, where $y_j, y_k$ are answers to the queries from a received word $y$, and $f_{j,k}^{i} \colon \zo^2 \rightarrow \set{0,1,\perp}$ is a deterministic function. When there is no error, we have $y_j = C(x)_j$ and $y_k = C(x)_k$. 

We present the main ideas below, and refer the readers to \cref{sec:2qrldc} for full details.

\paragraph{Fixable codeword bits.} The starting point of our proof is to take a closer look at those functions $f_{j,k}^{i}$ with $\perp$ entries in their truth tables. It turns out that when $f_{j,k}^{i}$ has at least one $\perp$ entry in the truth table, $C(x)_j$ can be fixed to a constant by setting either $x_i=0$ or $x_i=1$, and same for $C(x)_k$. To see this, note that the property of perfect completeness forces $f_{j,k}^{i}$ to be $0$ or $1$ whenever $x_i=0$ or $x_i=1$ and there is no error. Thus if neither $x_i=0$ nor $x_i=1$ fixes $C(x)_j$, then there must be two entries of $0$ and two entries of $1$ in the truth table of $f_{j,k}^{i}$, which leaves no space for $\perp$ (see \cref{clm:bot-fix}). Thus, when there is at least one $\perp$ entry in the truth table of $f_{j,k}^{i}$, we say that $C(x)_j$ and $C(x)_k$ are \emph{fixable} by $x_i$.

This motivates the definition of the set $S_i$, which contains all indices $j \in [m]$ such that the codeword bits $C(x)_j$ are fixable by $x_i$; and the definition of $T_j$, the set of all indices $i \in [n]$ such that $C(x)_j$ is fixable by the message bits $x_i$.\ It is also natural to pay special attention to queries $j,k$ that are not both contained in $S_i$, since in this case the function $f_{j,k}^{i}$ never outputs $\perp$.

\paragraph{The query structure.} In general, a query set $\set{j,k}$ falls into one of the following three cases: (1) both $j,k$ lie inside $S_i$; (2) both $j,k$ lie outside of $S_i$; (3) one of them lies inside $S_i$ and the other lies outside of $S_i$. It turns out that case (3) essentially never occurs for a decoder with perfect completeness. The reason is that when, say, $j \in S_i$ and $k \notin S_i$, one can effectively pin down every entry in the truth table of $f_{j,k}^{i}$ by using the perfect completeness property, and observe that the output of $f_{j,k}^{i}$ does not depend on $y_k$ at all (see \cref{clm:fixable-or-useless}). Thus in this case we can equivalently view the decoder as only querying $y_j$ where $j \in S_i$, which leads us back to case (1). In what follows, we denote by $E_1$ the event that case (1) occurs, and by $E_2$ the event that case (2) occurs.

\paragraph{The transformation by polarizing conditional success probabilities.} We now give a high level description of our transformation from a weak RLDC to a standard LDC. Let $y$ be a string which contains at most $\delta m/2$ errors from the codeword $C(x)$. We have established that the success probability of the weak RLDC decoder on $y$ is an average of two conditional probabilities
\begin{align*}
	\Pr[\Dec(i,y) \in \set{x_i, \perp}] = p_1 \cdot \Pr[\Dec(i,y) \in \set{x_i, \perp} \mid E_1] + p_2 \cdot \Pr[\Dec(i,y) \in \set{x_i, \perp} \mid E_2],
\end{align*}
where $p_1 = \Pr[E_1]$ and $p_2 = \Pr[E_2]$. Let us assume for the moment that $S_i$ has a small size, e.g., $|S_i|\le \delta m/2$. The idea in this step is to introduce additional errors to the $S_i$-portion of $y$, in a way that drops the conditional success probability $\Pr[\Dec(i,y) \in \set{x_i, \perp} \mid E_1]$ to 0 (see \cref{lem:conditional-zero}). In particular, we modify the bits in $S_i$ to make it consistent with the encoding of any message $\hat{x}$ with $\hat{x}_i=1-x_i$. Perfect completeness thus forces the decoder to output $1-x_i$ conditioned on $E_1$. Note that we have introduced at most $\delta m/2 + |S_i| \le \delta m$ errors in total, meaning that the decoder should still have an overall success probability of $1/2+\eps$. Furthermore, now the conditional probability $\Pr[\Dec(i,y) \in \set{x_i, \perp} \mid E_2]$ takes all credits for the overall success probability. Combined with the observation that $\Dec$ never outputs $\perp$ given $E_2$, this suggests the following natural way to decode $x_i$ in the sense of a standard LDC: sample queries $j, k$ according to the conditional probability given $E_2$ (i.e., both $j,k$ lie outside $S_i$) and output $f_{j,k}^{i}(y_j, y_k)$. This gives a decoding algorithm for standard LDC, with success probability $1/2+\eps$ and error tolerance $\delta m/2$ (see \cref{lem:LDC-reduction}), modulo the assumption that $|S_i|\le \delta m/2$.

\paragraph{Upper bounding \texorpdfstring{$|S_i|$}{|S\_i|}.} The final piece in our transformation from weak RLDC to standard LDC is to address the assumption that $|S_i|\le \delta m/2$. This turns out to be not true in general, but it would still suffice to prove that $|S_i| \leq \delta m/2$ for $n' = \Omega(n)$ of the message bits $i$. If we could show that $|T_j|$ is small for most $j \in [m]$, then a double counting argument shows that $|S_i|$ is small for most $i \in [n]$. Unfortunately, if we had  $C(x)_j = \bigwedge_{i=1}^n x_i$ for $m/2$ of the codeword bits $j$ then we also have $|T_j| = n$ for $m/2$ codeword bits and $|S_i| \geq m/2 \geq \delta m/2$ for all message bits $i \in [n]$. We address this challenge by first arguing that any weak RLDC for $n$-bit messages can be transformed into another weak RLDC for $\Omega(n)$-bit messages for which we have $|T_j|  \le 3\ln(8/\delta) $ for all but $\delta m/4$ codeword bits. The transformation works by fixing some of the message bits and then eliminating codeword bits that are fixed to constants. Intuitively, if some $C(x)_j$ is fixable by many message bits, it will have very low entropy (e.g., $C(x)_j$ is the AND of many message bits) and hence contain very little information and can (likely) be eliminated. We make this intuition rigorous through the idea of random restriction: for each $i \in [n]$, we fix $x_i=0$, $x_i=1$, or leave $x_i$ free, each with probability $1/3$. The probability that $C(x)_j$ is not fixed to a constant is at most $(1-1/3)^{|T_j|}\le \delta/8$, provided that $|T_j| \ge 3\ln(8/\delta)$. After eliminating codeword bits that are fixed to constants, we show that with probability at least $1/2$ at most $\delta m/4$ codeword bits $C(x)_j$ with $|T_j| \ge 3\ln(8/\delta)$ survived\footnote{We are oversimplifying a bit for ease of presentation. In particular, the random restriction process may cause a codeword bit $C(x)_j$ to be fixable by a new message bit $x_i$ that did not belong to $T_j$ before the restriction -- We thank an anonymous reviewer for pointing this out to us.
Nevertheless, for our purpose it is sufficient to eliminate codeword bits that initially have a large $|T_j|$. See the formal proof for more details.}. Note that with high probability the random restriction leaves at least $n/6$ message bits free. Thus, there must exist a restriction which leaves at least $n/6$ message bits free ensuring that $|T_j| \ge 3\ln(8/\delta)$ for at most $\delta m/4$ of the remaining codeword bits $C(x)_j$. We can now apply the double counting argument to conclude that $|S_i| \le \delta m/2$ for $\Omega(n)$ message bits, completing the transformation.


\paragraph{Adaptive decoders} For possibly adaptive decoders, we are going to follow the same proof strategy. The new idea and main difference is that we focus on the first query made by the decoder, which is always non-adaptive. We manage to show that the first query determines a similar query structure, which is the key to the transformation to a standard LDC. More details can be found in \Cref{subsec:adaptive-2qRLDC}.

\subsubsection{Lower Bounds for Strong Insdel RLDCs}
We recall that a strong Insdel RLDC $C$ is a weak Insdel RLDC which satisfies an additional property: for every $x \in \zo^n$ and $y \in \zo^{m'}$ such that $\ED(C(x), y) \le \delta\cdot 2m$, 
there exists a set $I_y \subseteq [n]$ of size $|I_y|\ge \rho n$ such that for every $i \in I_y$, we have $\Pr[\Dec(i,y) = x_i] \ge \alpha$. In other words, for $\rho$-fraction of the message bits, the decoder can correctly recover them with high probability, just like in a standard Insdel LDC. 
Towards obtaining a lower bound on the codeword length $m$, a natural idea would be to view $C$ as a standard Insdel LDC just for that $\rho$-fraction of message bits, and then apply the exponential lower bound for standard Insdel LDCs from \cite{blocki2021exponential}. 
This idea would succeed if the message bits correctly decoded with high probability were the same for all potential corrupted codewords $y$. 
However, it could be the case that $i \in I_y$ for some strings $y$, whereas $i \notin I_{y'}$ for other strings $y'$. Indeed, allowing the set $I_y$ to depend on $y$ is the main reason why very short constant-query Hamming RLDCs exist.  

We further develop this observation to obtain our lower bound. 
We use an averaging argument to show the existence of a \emph{corruption-independent} set $I$ of message bits with $|I|=\Omega(n)$, which the decoder can recover with high probability. To this end, we need to open the ``black box'' of the lower bound result of Blocki \etal \cite{blocki2021exponential}. The proof in \cite{blocki2021exponential} starts by constructing an error distribution $\+E$ with several nice properties, and deduce the exponential lower bound based solely on the fact that the Insdel LDC should, on average (i.e., for a uniformly random message $x$), correctly recover each bit with high probability under $\+E$ (see \cref{thm:lb-channel}). One of the nice properties of $\+E$ is that it is oblivious to the decoding algorithm $\Dec$. Therefore, it makes sense to consider the average success rate against $\+E$, i.e., $\Pr[\Dec(i,y)=x_i]$, where $i \in [n]$ is a uniformly random index, $x \in \zo^n$ is a uniformly random string, and $y$ is a random string obtained by applying $\+E$ to $C(x)$. By replacing $\perp$ with a uniformly random bit in the output of $\Dec$, the average success rate is at least $\rho\alpha+(1-\rho)\alpha/2=(1+\rho)\alpha/2$, since there is a $\rho$-fraction of indices for which $\Dec$ can correctly recover with probability $\alpha$, and for the remaining $(1-\rho)$-fraction of indices the random guess provides an additional success rate of at least $\alpha/2$. Assuming $\alpha$ is sufficiently close to 1, which we can achieve by repeating the queries independently for a constant number of times and doing something similar to a majority vote, the average success rate against $\+E$ is strictly above $1/2$. Therefore, there exist a constant fraction of indices for which the success rate against $\+E$ is still strictly above $1/2$, and the number of queries remains a constant. This is sufficient for the purpose of applying the argument in~\cite{blocki2021exponential} to get an exponential lower bound.
Full details appear in \cref{sec:srldc}.

\subsubsection{Constant-Query Weak Insdel RLDC}
Our construction of a constant query weak Insdel RLDC uses code concatenation and two building blocks: a weak Hamming RLDC (as the outer code) with constant query complexity, constant error-tolerance, and codeword length $k = O(n^{1+\gamma})$ for any $\gamma > 0$ \cite{Ben-SassonGHSV06}, and the Schulman-Zuckerman \cite{SchZuc99} (from now on denoted by SZ) Insdel codes\footnote{In particular, these are classical/non-local codes.} (as the inner code).
We let $\Cout \colon \zo^n \rightarrow \zo^k$ and $\Cin \colon [k] \times \zo \rightarrow \zo^t$ denote the outer and inner codes, respectively.
Our final concatenation code $C$ will have codewords in $\zo^m$ for some $m$ (to be determined shortly), will have constant query complexity, and will tolerate a constant fraction of Insdel errors.

\paragraph{Code construction.}
Given a message $x \in \zo^n$, we first apply the outer code to obtain a Hamming codeword $y = y_1 \circ \cdots \circ y_k = \Cout(x)$ of length $k$, where each $y_i \in \{0,1\}$ denotes a single bit of the codeword. 
Then for each index $i$, we compute $c_i = \Cin(i, y_i) \in \{0,1\}^t$ as the encoding of the message $(i, y_i)$ via the inner code. 
Finally, we output the codeword $C(x) \coloneqq c_1 \circ 0^t \circ c_2 \circ \cdots \circ  0^t \circ c_k$, where $0^t$ denotes a string of $t$ zeros (which we later refer to as a buffer). 
Note that the inner code is a constant-rate code, i.e., $t = O(\log(k))$, and has constant error-tolerance $\din \in (0,1/2)$.
Thus, the final codeword has length $m \coloneqq (2t-1)k = O(k \log(k))$ bits. 
For any constant $\gamma > 0$ we have a constant query outer code with length $k = O(n^{1+\gamma})$. 
Plugging this into our construction we have codeword length $ m = O( n^{1+\gamma} \log n)$ which is $O(n^{1+\gamma'})$ for any constant $\gamma' > \gamma$. 

\paragraph{Decoding algorithm: intuition and challenges.}
Intuitively, our relaxed decoder will simulate the outer decoder. 
When the outer decoder requests $y_i$, the natural approach would be to find and decode the block $c_i$ to obtain $(i,y_i)$. 
There are two challenges in this approach. 
First, if there were insertions or deletions, then we do not know where the block $c_i$ is located; moreover, searching for this block can potentially blow-up the query complexity by a multiplicative $\polylog(m)$ factor \cite{Ostrovsky-InsdelLDC-Compiler,BlockBGKZ20}. 
Second, even if we knew where $c_i$ were located, because $t = O(\log k)$ and we want the decoder to have constant locality, we cannot afford to recover the entire block $c_i$. 

We address the first challenge by attempting to locate block $c_i$ under the optimistic assumption that there are no corruptions. 
If we detect any corruptions, then we may immediately abort and output $\bot$ since our goal is only to obtain a weak Insdel RLDC. 
Assuming that there were no corruptions, we know exactly where the block $c_i$ is located, and we know that $c_i$ can only take on two possible values: it is either the inner encoding of $(i,0)$ or the inner encoding of $(i,1)$. 
If we find anything inconsistent with the inner encoding of either $(i,0)$ or $(i,1)$, then we can immediately output $\bot$.

Checking consistency with the inner encodings of $(i,0)$ and $(i,1)$ is exactly how we address the second challenge.
In place of reading the entire block $c_i$, we instead only need to determine whether (1) $c_i$ is (close to) the inner encoding of $(i,0)$, (2) $c_i$ is (close to) the inner encoding of $(i,1)$, or (3) $c_i$ is not close to either string.
In either case (1) or case (2), we simply output the appropriate bit, and in case (3), we simply output $\bot$.
Thus, our Insdel RLDC decoder simulates the outer decoder. 
Whenever the outer decoder request $y_i$, we determine the expected location for $c_i$, randomly sub-sample a constant number of indices from this block and compare with the inner encodings of $(i,0)$ and $(i,1)$ at the corresponding indices. 
To ensure perfect completeness, we always ensure that {\em at least one} of the sub-sampled indices is for a bit where the inner encodings of $(i,0)$ and $(i,1)$ differ. 
If there are no corruptions, then whenever the simulated outer decoder requests $y_i$ we will always respond with the correct bit. 
Perfect completeness of our Insdel RLDC now follows immediately from the perfect completeness of the outer decoder. 
Choosing a constant number of indices to sub-sample ensures that the locality of our weak Insdel RLDC decoder is a constant multiplicative factor larger than the outer decoder, which gives our Insdel RLDC decoder constant locality overall.

\paragraph{Analysis of the decoding algorithm.} 
The main technical challenge is proving that our Insdel RLDC still satisfies the second condition of \cref{def:strongRLDC}, when the received word is not a correct encoding of the message $x$. 
Recall that $c_i = \Cin(i,y_i)$, and suppose $\tildec_i \neq c_i$ is the block of the received word that we are going to check for consistency with the inner encodings of $(i,0)$ and $(i,1)$.
Then, the analysis of our decoder falls into three cases.
In the first case, if $\tildec_i$ is not too corrupted (\ie, $\ED(\tildec_i, c_i)$ is not too large), then we can argue that the decoder outputs the correct bit $y_i$ or $\bot$ with good probability.
In the second case, if $\tildec_i$ has high edit distance from both $\Cin(i,0)$ and $\Cin(i,1)$, then we can argue that the decoder outputs $\bot$ with good probability.
The third case is the most difficult case, which we describe as ``dangerous''. 
We say that the block $\tildec_i$ is \emph{dangerous} if the edit distance between $\tildec_i$ and $\Cin(i,1-y_i)$ is not too large; \ie, $\tildec_i$ is close to the encoding of the opposite bit $1-y_i$.

The key insight to our decoding algorithm is that as long as the number of dangerous blocks $\tildec_i$ is upper bounded, then we can argue the overall probability that our decoder outputs $y_i$ or $\bot$ satisfies the relaxed decoding condition of \cref{def:strongRLDC}.
Intuitively, we can we think of our weak Insdel RLDC decoder as running the outer decoder on a string $\tilde{y} = \tilde{y}_1 \circ \ldots \circ \tilde{y}_k$, where each $\tilde{y}_i \in \{0,1,\bot\}$ and the outer decoder has been modified to output $\bot$ whenever it queries for $y_i$ and receives $\bot$. 
Observe that if $\dout$ is the error-tolerance of the outer decoder, then as long as the set $\left| \{ i ~:~\tildey_i \neq \bot \wedge \tildey_i \neq y_i \} \right| \leq \dout k$, the modified outer decoder, on input $j \in [n]$, will output either the correct value $x_j$ or $\bot$ with high probability 
(for appropriate choices of parameters). 
Intuitively, if a block is ``dangerous'' then we can view $\tildey_i = 1- y_i$, and otherwise we have $\tildey_i \in \{y_i, \bot\}$ with reasonably high probability. 
Thus, as long as the number of ``dangerous'' block is at most $\dout k/2$, then our relaxed Insdel decoder will satisfy the second property of \cref{def:strongRLDC} and output either $x_j$ or $\bot$ with high probability 
for any $j \in [n]$.

\paragraph{Upper bounding the number of dangerous blocks.}
To upper bound the number of ``dangerous'' blocks we utilize a matching argument based on the longest common sub-sequence (LCS) between the original codeword and the received (corrupted) word. 
Our matching argument utilizes a key feature of the SZ Insdel code. 
In particular, the Hamming weight (\ie, number of non-zero symbols) of every substring $c'$ of an SZ codeword is at least $\floor{|c'|/2}$. 
This ensures that the buffers $0^t$ cannot be matched with large portions of any SZ codeword. 
We additionally leverage a key lemma (\cref{lem:self-nonsimilarity}) which states that the edit distance between the codeword $\Cin(i,1-y_i)$ and \emph{any} substring of length less than $2t$ of the uncorrupted codeword $C(x)$ has relative edit distance at least $\din/2$.
We use these two properties, along with key facts about the LCS matching, to yield an upper bound on the number of dangerous blocks, completing the analysis of our decoder.


\paragraph{Extension to relaxed locally correctable codes for insdel errors.}
Our construction also yields a strong Insdel Relaxed Locally Correctable Code (RLCC) with constant locality if the outer code is a weak Hamming RLCC. 
First, observe that bits of the codeword corresponding to the $0^t$ buffers are very easy to predict without even making any queries to the corrupted codeword. 
Thus, if we are asked to recover the $j$'th bit of the codeword and $j$ corresponds to a buffer $0^t$, we can simply return $0$ without making any queries to the received word. 
Otherwise, if we are asked to recover the $j$'th bit of the codeword and $j$ corresponds to block $c_i$, we can simulate the Hamming RLCC decoder (as above) on input $i$ to obtain $y_i$ (or $\bot$). 
Assuming that $y_i \in \{0,1\}$, we can compute the corresponding SZ encoding of $(i,y_i)$ and obtain the original value of the block $c_i$ and then recover the $j$'th bit of the original codeword.
The analysis of the RLCC decoder is analogous to the RLDC decoder.
See \cref{sec:wrLDC} for full details on both our weak Insdel RLDC and strong Insdel RLCC constructions.
\begin{remark}
    The ``adaptiveness'' of our constructed Insdel RLDC/RLCC decoder is identical to that of the outer Hamming RLDC/RLCC decoder.
    In particular, the weak Hamming RLDC of Ben-Sasson \etal \cite{Ben-SassonGHSV06} has a non-adaptive decoder, making our final decoder non-adaptive as well.
    Similarly, we use a weak Hamming RLCC due to Asadi and Shinkar \cite{AsadiS21} for our Insdel LCC, which is also a non-adaptive decoder.
\end{remark}

\section{Open Questions}\label{sec:openproblems}

\paragraph{Exact ``phase-transition'' thresholds.} Our results show that both in the Hamming and Insdel setting there is a constant $q$ such that every $q$-query RLDC requires super-polynomial codeword length, while there exists a $(q+1)$-query RLDC of polynomial codeword length. 
Finding the precise $q$ remains an intriguing open question. Further, a more refined understanding of codeword length for RLDCs making $3, 4, 5$ queries is another important question, which has lead to much progress in the understanding of the LDC variants.


\paragraph{Constant-query strong Insdel RLDCs/RLCCs.} While we do construct the first weak RLDCs in the Insdel setting, the drawback of our constructions is the fact that our codes do not satisfy the third property of Definition \ref{def:strongRLDC}. Building strong Insdel RLDCs remains an open question. We note that our lower bounds imply that for a constant number of queries, such codes (if they exist) must have exponential codeword length.



\paragraph{Applications of local Insdel codes.} As previously mentioned, Hamming LDCs/RLDCs have so far found many applications such as private information retrieval, probabilistically checkable proofs, self-correction,  fault-tolerant circuits,  hardness amplification, and data structures. Are there analogous or new applications of the Insdel variants in the broader computing area?

\paragraph{Lower bounds for Hamming RLDCs/LDCs} Our $2$-query lower bound for Hamming RLDCs crucially uses the perfect completeness property of the decoder. An immediate question is whether the bound still holds if we allow the decoder to have imperfect completeness. We also note that the argument in our exponential lower bounds for $2$-query Hamming RLDCs fail to hold for alphabets other than the binary alphabet, and we leave the extension to larger alphabet sizes as an open problem. 
Another related question is to understand if one can leverage perfect completeness and/or random restrictions to obtain improved lower bounds for $q\geq 3$-query standard Hamming LDCs. Perfect completeness has been explicitly used before to show exponential lower bounds for $2$-query LCCs \cite{bhattacharyya2017lower}.

\subsection{Further discussion about related work}


\paragraph{Insdel codes.} The study of error correcting codes for insertions and deletions was initiated by Levenstein \cite{Levenshtein_SPD66}. While  progress has been slow because constructing codes for insdel errors is strictly more challenging than for Hamming errors, strong interest in these codes lately has led to many exciting results  \cite{SchZuc99, Kiwi_expectedlength, guruswami2017deletion,HaeuplerS17, GuruswamiL18,HaeuplerSS18,HaeuplerS18,BrakensiekGZ18, ChengJLW18, ChengHLSW19, ChengJ0W19, GuruswamiL19,HaeuplerRS19,Haeupler19,  LiuTX20,GuruswamiHS20, ChengGHL21, ChengL21} (See also the excellent surveys of \cite{Sloane2002OnSC,Mercier2010ASO,Mitzenmachen-survey, haeupler2021synchronization}).

\paragraph{Insdel LDCs.} 
\cite{OPS07} gave private-key constructions of LDCs with  $m=\Theta(n)$ and locality $\polylog(n)$. \cite{BlockiKZ19} extended the construction from \cite{OPS07} to settings where the sender/decoder do not share randomness, but the adversarial channel is resource bounded. 
\cite{block2021private} applied the \cite{BlockBGKZ20} compiler to the private key Hamming LDC of \cite{OPS07} (resp. resource bounded LDCs of \cite{BlockiKZ19}) to obtain private key Insdel LDCs (resp. resource bounded Insdel LDCs) with constant rate and $\polylog(n)$ locality.

Insdel LDCs have also been recently studied in {\em computationally bounded channels}, introduced in \cite{Lipton94}. Such channels can perform a bounded number of adversarial errors, but do not have unlimited computational power as the general Hamming channels. Instead, such channels operate with bounded resources.
As expected, in many such  limited-resource settings one can construct codes with strictly better parameters than what can be done generally \cite{GopalanLD04,MicaliPSW05, Guruswami_Smith:2016, ShaltielS16}. LDCs in these channels under Hamming error were studied in \cite{OPS07, HemenwayO08, HemenwayOSW11, HemenwayOW15, BlockiGGZ19,BlockiKZ19}. \cite{block2021private} applied the \cite{BlockBGKZ20} compiler to the Hamming LDC of \cite{BlockiKZ19} to obtain a constant rate Insdel LDCs with $\polylog(n)$ locality for resource bounded channels.  The work of \cite{ChengLZ20}  proposes the notion of locally decodable codes with randomized encoding, in both the Hamming and edit distance regimes, and in the setting where the channel is oblivious to the encoded message, or the encoder and decoder share randomness. For edit error they obtain codes with $m=O(n)$ or $m= n \log n$ and $\polylog(n)$ query complexity. However, even in settings with shared randomness or where the channel is oblivious or resource bounded, there are no known constructions of Insdel LDCs with constant locality.
 
Locality in the study of insdel codes was also considered in   \cite{HaeuplerS18}, which constructs explicit synchronization strings that can be locally decoded.


\subsection{Organization}
The remainder of the paper is dedicated to proving all our results presented in \cref{sec:intro}. 
We give general preliminaries and recall some prior results used in our results in \cref{sec:prelims}.
We prove \cref{thm:main-2qRLDC} in \cref{sec:2qrldc}, prove \cref{thm:main-sinsdel RLDC} in \cref{sec:srldc}, and prove \cref{thm:main-rildc-construction,thm:main-rilcc-construction} in \cref{sec:wrLDC}.

\section{Preliminaries}\label{sec:prelims}
For natural number $n \in \NAT$, we let $[n] \coloneqq \{1,2,\dotsc, n\}$.
We let ``$\circ$'' denote the standard string concatenation operation.
For a string $x \in \zo^*$ of finite length, we let $|x|$ denote the length of $x$.
For $i \in [|x|]$, we let $x[i]$ denote the $i$-th bit of $x$. 
Furthermore, for $I \subseteq [|x|]$, we let $x[I]$ denote the subsequence $x[i_1] \circ x[i_2] \circ \cdots \circ x[i_\ell]$, where $i_j \in I$ and $\ell = |I|$.
For two strings $x, y \in \zo^n$ of length $n$, we let $\HAM(x,y)$ denote the \emph{Hamming Distance} between $x$ and $y$; \ie, $\HAM(x,y) \coloneqq \abs{\{ i \in [n] \colon x_i \neq y_i \}}$.
Similarly, we let $\ED(x,y)$ denote the \emph{Edit Distance} between $x$ and $y$; \ie, $\ED(x,y)$ is the minimum number of insertions and deletions needed to transform string $x$ into string $y$.
We often discuss the \emph{relative Hamming Distance} (resp., \emph{relative Edit Distance}) between $x$ and $y$, which is simply the Hamming Distance normalized by $n$, \ie, $\HAM(x,y)/n$ (resp., the Edit Distance normalized by $|x| + |y|$, \ie, $\ED(x,y)/(|x|+|y|)$).
Finally, the \emph{Hamming weight} of a string $x$ is the number of non-zero entries of $x$, which we denote as $\mathsf{wt}(x) \coloneqq |\{i \in [|x|] \colon x_i \neq 0\}|$.

For completeness, we recall the definition of a classical locally decodable code, or just a \emph{locally decodable code}.
\begin{definition}[Locally Decodable Codes]\label{def:LDC}
    A $(q, \delta, \alpha)$-Locally Decodable Code $C \colon \Sigma^n \rightarrow \Sigma^m$ is a code for which there exists a randomized decoder that makes at most $q$ queries to the received word $y$ and satisfies the following property: for every $i \in [n]$, if $y$ is such that $\dist(y, C(x)) \leq \delta$ for some unique $C(x)$, then the decoder, on input $i$, outputs $x_i$ with probability $\geq \alpha$.
    Here, the randomness is taken over the random coins of the decoder, and $\dist$ is a normalized metric.
    
    If $\dist$ is the relative Hamming distance, then we say that the code is a Hamming LDC; similarly, if $\dist$ is the relative edit distance, then we say that the code is an Insdel LDC.
\end{definition}

We recall the general $2$-query Hamming LDC lower bound  \cite{KerenidisW04, Ben-AroyaRW08}.
\begin{theorem}[\cite{KerenidisW04, Ben-AroyaRW08}]\label{thm:two-query-lb}
    For constants $\delta, \eps \in (0,1/2)$ there exists a constant $c = c(\delta, \eps) \in (0,1)$ such that if $C \colon \zo^n \rightarrow \zo^m$ is a $(2, \delta, 1/2+\eps)$ Hamming LDC then $m \geq 2^{cn-1}$.
\end{theorem}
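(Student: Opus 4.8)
The plan is to reprove the result via the quantum information-theoretic argument of Kerenidis and de Wolf \cite{KerenidisW04} (with the sharper constants of \cite{Ben-AroyaRW08}): from a $(2,\delta,1/2+\eps)$ Hamming LDC $C\colon\zo^n\to\zo^m$ I will build a quantum encoding $x\mapsto|\phi_x\rangle$ of messages into $\lceil\log m\rceil$ qubits such that all but $O_{\delta,\eps}(1)$ of the message bits can be recovered from $|\phi_x\rangle$ with a fixed advantage bounded away from $1/2$, and then invoke Nayak's lower bound for quantum random access codes to force $\lceil\log m\rceil = \Omega_{\delta,\eps}(n)$, which is exactly the claimed bound.

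First I would put the decoder in a normal form. Since the decoder tolerates $\delta m$ corruptions, the Katz--Trevisan argument \cite{KatzT00} yields, for every $i\in[n]$, a matching $M_i$ of disjoint pairs of positions in $[m]$ with $|M_i|=\Omega(\delta m)$ such that, over a uniformly random edge $(j,k)\in M_i$, the decoder recovers $x_i$ from the \emph{uncorrupted} bits $C(x)_j,C(x)_k$ with probability $\geq 1/2+\eps/2$ for every $x$; by Markov I may pass to the sub-matching, still of size $\Omega(\delta m)$, on whose edges the best two-bit predictor $f^i_{j,k}\colon\zo^2\to\zo$ has advantage $\geq\eps/4$. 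Next I would do a Fourier cleanup. Taking $x$ uniform over $\zo^n$ and setting $g_j(x)=(-1)^{C(x)_j}$, expanding $(-1)^{f^i_{j,k}}$ in the Fourier basis of $\zo^2$ and applying Cauchy--Schwarz (using $\mathbb{E}_x[(-1)^{x_i}]=0$) shows that every such edge $(j,k)$ satisfies $|\widehat{g_j}(\{i\})|\geq\eps/8$, or $|\widehat{g_k}(\{i\})|\geq\eps/8$, or $|\mathbb{E}_x[g_j(x)g_k(x)(-1)^{x_i}]|\geq\eps/8$. By Parseval, $\sum_{i}\widehat{g_j}(\{i\})^2\leq 1$ for each fixed $j$, so the number of incidences $(i,j)$ with $|\widehat{g_j}(\{i\})|\geq\eps/8$ is $O(m/\eps^2)$; a Markov argument then shows that all but $O_{\delta,\eps}(1)$ indices $i$ have fewer than $|M_i|/2$ ``single-bit-correlated'' edges, and for each such good $i$ I obtain a matching $M_i'$ of size $\Omega(\delta m)$ on which every edge $(j,k)$ has $|\mathbb{E}_x[g_j g_k (-1)^{x_i}]|\geq\eps/8$ --- equivalently, for a known sign $b\in\zo$, the bit $C(x)_j\oplus C(x)_k\oplus b$ predicts $x_i$ with advantage $\geq\eps/16$.

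For the quantum step, set $|\phi_x\rangle=\frac{1}{\sqrt m}\sum_{j\in[m]}(-1)^{C(x)_j}|j\rangle$ on $\lceil\log m\rceil$ qubits. Given a good $i$ with matching $M_i'=\{(j_\ell,k_\ell)\}_{\ell\leq s}$, $s=\Omega(\delta m)$, I measure in an orthonormal basis completing the $2s$ vectors $\frac{1}{\sqrt2}(|j_\ell\rangle\pm|k_\ell\rangle)$: a short amplitude computation shows that with probability $2s/m=\Omega(\delta)$ the outcome is one of these vectors for a uniformly random edge $\ell$, and the sign ($+$ or $-$) of that outcome reveals $C(x)_{j_\ell}\oplus C(x)_{k_\ell}$. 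The decoder then outputs the sign-corrected prediction $C(x)_{j_\ell}\oplus C(x)_{k_\ell}\oplus b_\ell$ (the signs $b_\ell$ are computable from the fixed code $C$) when it lands on such a vector, and a uniform random bit otherwise; averaging over $x$ and the measurement, it recovers $x_i$ with probability $\geq 1/2+\Omega(\delta\eps)$. Thus $x\mapsto|\phi_x\rangle$ is a quantum random access code for all but $O_{\delta,\eps}(1)$ --- hence, for $n$ larger than a constant, at least $n/2$ --- of the message bits, with success probability $p=1/2+\Omega(\delta\eps)$. Nayak's bound then gives $\lceil\log m\rceil\geq(1-H(p))\cdot n/2=\Omega((\delta\eps)^2)\cdot n$, so $m\geq 2^{cn}$ for some $c=c(\delta,\eps)>0$; shrinking $c$ absorbs both the finitely many small values of $n$ and the ``$-1$'' in the statement.

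I expect the Fourier cleanup to be the main obstacle. A priori a single codeword bit may be correlated with many message bits (e.g.\ $C(x)_j=x_i$ for one fixed $i$ and many $j$), so single-bit predictors cannot be discarded edge-by-edge; the resolution is the global counting bound $\sum_{i,j}\mathbf{1}[|\widehat{g_j}(\{i\})|\geq\eps/8]\leq O(m/\eps^2)$ coming from Parseval applied to each $g_j$, which only forces us to throw away a constant number of ``bad'' message indices --- harmless, since Nayak's bound still bites as long as $\Omega(n)$ indices survive. Everything else --- the Katz--Trevisan matching lemma, the amplitude calculation for the measurement, and Nayak's bound on quantum random access codes --- is standard; I note that a purely combinatorial proof along the lines of Goldreich--Karloff--Sudan--Trevisan also exists, but it is more involved and yields weaker constants.
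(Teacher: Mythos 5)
The theorem in question is stated as a citation in the paper, with no proof supplied; it is exactly the Kerenidis--de Wolf $2$-query lower bound (with the sharpened constants of Ben-Aroya--Regev--de Wolf). Your sketch is a correct reconstruction of that standard quantum-information argument, so you are not taking a different route from the paper --- you are reproducing the route the paper delegates to its references. The structure (Katz--Trevisan matchings, Fourier cleanup to kill single-bit correlations, the state $|\phi_x\rangle=\frac{1}{\sqrt m}\sum_j(-1)^{C(x)_j}|j\rangle$, the $\frac{1}{\sqrt2}(|j\rangle\pm|k\rangle)$ measurement, Nayak's bound) matches KdW. Two small remarks that do not affect correctness. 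First, your constants drift: the triangle-inequality split gives at least one of the three Fourier terms $\ge\eps/6$ rather than $\eps/8$, and the Markov step after Katz--Trevisan keeps only an $\Omega(\eps)$-fraction of edges, so the surviving matching has size $\Omega(\delta\eps m)$, not $\Omega(\delta m)$; both are harmless constant-factor slippages. Second, your final appeal is implicitly to the \emph{average-case} (over uniform $x$) form of Nayak's random-access-code bound --- your QRAC only guarantees advantage averaged over $x$, not for every $x$ --- and it would be worth making that explicit; this average-case form is exactly what KdW and BRW prove and use, so the step is available. You correctly identified the Parseval/Markov global count as the crux of the Fourier cleanup, which is the part most often botched in reconstructions of this proof.
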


In our weak Insdel RLDC construction, we utilize a weak Hamming RLDC due to \cite{Ben-SassonGHSV06}.
\begin{lemma}[\cite{Ben-SassonGHSV06}]\label{lem:weak-ham-rldc}
    For constants $\eps, \delta \in (0,1/2)$ and $\gamma \in (0,1)$, there exists a constant $q = O_{\delta,\eps}(1/\gamma^2)$ and a weak $(q,\delta, 1/2+\eps)$-Hamming RLDC $C \colon \zo^n \rightarrow \zo^m$ with $m = O(n^{1+\gamma})$.
    Moreover, the decoder of this code is non-adaptive.
\end{lemma}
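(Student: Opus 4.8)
The plan is to derive this lemma from the relaxed-LDC construction of Ben-Sasson, Goldreich, Harsha, Sudan, and Vadhan~\cite{Ben-SassonGHSV06} by (i) instantiating their length--query tradeoff at the right query budget and (ii) amplifying the decoder's soundness. Recall that \cite{Ben-SassonGHSV06} construct, for every constant $q_0$, a \emph{strong} $(q_0,\delta_0,\alpha_0,\rho_0)$-Hamming RLDC with a non-adaptive decoder and codeword length $n^{1+O(1/\sqrt{q_0})}$, where $\delta_0,\rho_0>0$ and $\alpha_0\in(1/2,1)$ (say $\alpha_0=2/3$) are absolute constants. A strong RLDC satisfies in particular the first two properties of \cref{def:strongRLDC}, so it is already a weak $(q_0,\delta_0,\alpha_0)$-RLDC. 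Given $\gamma\in(0,1)$, the first step is to pick $q_0=\Theta(1/\gamma^2)$ large enough that $O(1/\sqrt{q_0})$ --- with its hidden constant --- is at most $\gamma/2$; then the codeword length is $n^{1+\gamma/2}\cdot\mathrm{polylog}(n)=O(n^{1+\gamma})$. This already yields a non-adaptive weak $(q_0,\delta_0,2/3)$-Hamming RLDC of the desired length.

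The second step is to boost the soundness parameter from $2/3$ to an arbitrary $1/2+\eps$ with $\eps\in(0,1/2)$ (only needed when $\eps\ge1/6$, as otherwise $2/3$ already exceeds $1/2+\eps$). I would use the standard amplification: fix a constant $k=k(\eps)$, run the base decoder $k$ times independently with fresh coins on input $i$ to get $o_1,\dots,o_k\in\{0,1,\bot\}$, and output a bit $b$ if strictly more than $k/2$ of the $o_t$ equal $b$, else output $\bot$. This keeps the decoder non-adaptive and uses $q=k\,q_0$ queries. Perfect completeness is preserved: on an uncorrupted codeword every $o_t=x_i$ deterministically. For relaxed decoding, if $y$ is within relative distance $\delta_0$ of a unique $C(x)$, then each $o_t$ equals $1-x_i$ independently with probability at most $1/3$, so a Chernoff bound gives $\Pr[\,|\{t:o_t=1-x_i\}|>k/2\,]\le 2^{-\Omega(k)}$; choosing $k=O(\log(1/(1/2-\eps)))$ makes this at most $1/2-\eps$, and since the amplified decoder outputs $1-x_i$ only on that event, it outputs an element of $\{x_i,\bot\}$ with probability at least $1/2+\eps$. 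So far this gives a non-adaptive weak $(q,\delta_0,1/2+\eps)$-Hamming RLDC with $q=O_\eps(1/\gamma^2)$ and length $O(n^{1+\gamma})$.

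The last step is to raise the error tolerance from the absolute constant $\delta_0$ to any target $\delta\in(0,1/2)$ (nothing to do if $\delta\le\delta_0$). Here I would concatenate the code above with a constant-rate binary code of relative distance greater than $2\delta$ and extend the decoder to first locally decode each inner block before feeding the result to the decoder above: a received word within relative distance $\delta$ of a concatenated codeword forces an incorrect inner-block decoding in at most a $\delta_0$ fraction of blocks, which reduces to the previous case. This costs only $\delta$-dependent constant factors in length and query complexity and preserves non-adaptivity --- this is where the $\delta$-dependence of $q=O_{\delta,\eps}(1/\gamma^2)$ enters. (Alternatively, one appeals to \cite{Ben-SassonGHSV06} directly if their construction already tolerates the desired constant $\delta$.) Combining the three steps gives $m=O(n^{1+\gamma})$, $q=O_{\delta,\eps}(1/\gamma^2)$, and a non-adaptive decoder.

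The bulk of the content is already in \cite{Ben-SassonGHSV06}, so there is no real obstacle; the one point needing care is the amplification, specifically that the ``strict-majority-or-$\bot$'' rule simultaneously preserves perfect completeness and pushes the probability of emitting the wrong bit below $2^{-\Omega(k)}$ --- which works precisely because the base soundness $2/3$ is bounded away from $1/2$. The mild distance-amplification step, if one insists on an arbitrary constant $\delta<1/2$, is the only slightly fiddly piece, and is routine.
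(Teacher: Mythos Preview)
The paper does not supply a proof for this lemma; it is simply quoted from \cite{Ben-SassonGHSV06} as a black box, so there is no argument in the paper to compare your proposal against. Your first two steps---choosing $q_0=\Theta(1/\gamma^2)$ to hit the target length, and amplifying the soundness from a fixed constant above $1/2$ to $1/2+\eps$ via independent repetition with the strict-majority-or-$\bot$ rule---are correct and standard; the latter indeed preserves perfect completeness and non-adaptivity, and a Chernoff bound gives the claimed $k=O(\log(1/(1/2-\eps)))$ repetitions.

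Your third step, however, is wrong as written. Concatenating with a binary inner code of relative distance $>2\delta$ does \emph{not} raise the error tolerance from $\delta_0$ to $\delta$. If the inner code corrects a $\theta$ fraction of errors per block (so $\theta<1/2$ for a binary code), then with a $\delta$ fraction of errors overall the adversary can force up to a $\delta/\theta$ fraction of inner blocks to decode incorrectly; feeding these to the outer decoder succeeds only when $\delta/\theta\le\delta_0$, i.e.\ $\delta\le\theta\delta_0<\delta_0/2$. So plain concatenation can only \emph{decrease} the fractional error tolerance, not increase it. Your parenthetical escape hatch---appealing directly to \cite{Ben-SassonGHSV06}---is in fact the right move: their construction already allows one to take the proximity/error parameter to be an arbitrary constant $\delta\in(0,1/2)$, at the cost of constant factors in the query complexity, which is exactly where the $O_{\delta}(\cdot)$ dependence in the lemma comes from. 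With that correction (drop the concatenation argument and invoke \cite{Ben-SassonGHSV06} for the desired $\delta$), your derivation is fine.
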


Our construction additionally utilizes the well-known Schulman-Zuckerman Insdel codes \cite{SchZuc99}.
\begin{lemma}[Schulman-Zuckerman (SZ) Code \cite{SchZuc99}]\label{lem:sz-code}
    There exists constants $\beta \geq 1$ and $\delta > 0$ such that for large enough values of $t > 0$, there exists a code $C \colon \zo^t \rightarrow \zo^{\beta t}$ capable of decoding from $\delta$-fraction of Insdel errors and the additional property that for every $x \in \zo^t$ and $y = C(x)$, every substring $y'$ of $y$ with length at least $2$ has Hamming weight $\geq \floor{|y'|/2}$.
\end{lemma}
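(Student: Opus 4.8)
This lemma is from Schulman and Zuckerman \cite{SchZuc99}; the plan is to recall the main ideas of their construction. First I would observe that the density requirement is, up to an ``if and only if'', a simple pattern-avoidance condition: a binary string satisfies ``every substring $y'$ with $|y'|\ge 2$ has $\mathsf{wt}(y')\ge\floor{|y'|/2}$'' exactly when it contains no occurrence of ``00''. Indeed, if a string has no two consecutive zeros then among any $\ell\ge 2$ consecutive positions at most $\ceil{\ell/2}$ are zero, hence at least $\floor{\ell/2}$ are one; conversely the substring ``00'' violates the $\ell=2$ case. So the task reduces to constructing a constant-rate binary code, decodable from a constant fraction of insertions and deletions, all of whose codewords avoid the pattern ``00''.

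The construction I would use is a concatenation in the style of \cite{SchZuc99}. Take an outer code over an alphabet of size $\mathrm{poly}(t)$ with constant rate and relative Hamming distance approaching $1$ (e.g.\ a Reed--Solomon or algebraic-geometry code), mapping the message to $\Theta(t/\log t)$ outer symbols. For the inner code, fix an injection sending each outer symbol to a binary block of length $b=\Theta(\log t)$ whose image blocks are chosen so that simultaneously: (i) distinct blocks differ in a constant fraction of their coordinates; (ii) every block is far, relative to $b$, from every nontrivial shift of every block; and (iii) each block begins and ends with $1$ and contains no ``00''. A counting argument shows a family of $\mathrm{poly}(t)$ such blocks exists, and since $b=O(\log t)$ one such family can be found by brute-force search in $\mathrm{poly}(t)$ time. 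Between consecutive inner blocks insert a short fixed marker string that likewise contains no ``00'', keeps the no-``00'' property across block boundaries, and is distinguishable from every shift of every inner block (a resynchronization aid). The resulting codeword has length $\Theta(t)$, so the rate is a positive constant $1/\beta$, and by construction the whole codeword avoids ``00'', which is exactly the claimed density property.

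For decoding from a $\delta$-fraction of insertions and deletions I would: use the markers to cut the received word into candidate blocks; decode each candidate to its nearest inner codeword by brute force (only $\mathrm{poly}(t)$ candidates per block, and $O(t/\log t)$ blocks, so $\mathrm{poly}(t)$ total time); and feed the resulting outer word to the outer (Reed--Solomon) decoder. The step I expect to be the main obstacle is the quantitative claim that a $\delta$-fraction of insdels corrupts only an $O(\delta)$-fraction of the blocks. This is precisely where properties (i) and (ii) of the inner code and the marker structure are used: they guarantee that a small local shift or misalignment introduced by an insdel cannot be silently read as a \emph{different} inner codeword except at a bounded number of positions, which one formalizes by a longest-common-subsequence / alignment argument between the received word and each candidate codeword. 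Once this is in place, taking the outer relative distance larger than the resulting $O(\delta)$ bound lets the outer decoder recover the message, and all parameters ($\beta$, $\delta$, the outer field size) come out as absolute constants while $b=\Theta(\log t)$, as required. The genuine difficulty throughout, as in \cite{SchZuc99}, is meeting the three inner-block requirements at once --- in particular forcing the global no-``00'' (density) condition without sacrificing either the constant rate or the self-similarity bounds on which insdel decoding relies.
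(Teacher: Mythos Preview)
The paper does not prove this lemma; it is quoted as a known preliminary from \cite{SchZuc99} with no accompanying argument. So there is no ``paper's own proof'' to compare your proposal against.

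As a sketch of the Schulman--Zuckerman construction, your proposal is on the right track. The observation that the density condition is equivalent to forbidding the substring ``$00$'' is correct and is the cleanest way to read the property. The concatenation skeleton (outer high-distance code over a polynomial-size alphabet, inner blocks of length $\Theta(\log t)$ with pairwise and shift-wise distinguishability, synchronizing markers, brute-force inner decoding followed by outer decoding) is the SZ template, and you correctly identify the substantive step as the alignment/LCS argument that a $\delta$-fraction of insdels corrupts only an $O(\delta)$-fraction of blocks.

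One point to be careful about: in the original SZ construction the synchronization markers are \emph{runs of zeros} and the content blocks are the dense part; the overall ``no $00$'' property is then obtained by a simple post-processing (or, equivalently, by swapping the roles of $0$ and $1$). In your variant you make the markers themselves avoid ``$00$'', which is fine, but then both markers and content blocks are dense and you must argue distinguishability of markers from shifts of inner blocks without the easy ``zeros vs.\ dense'' dichotomy. This is doable (e.g., make the marker a fixed dense string that is far in edit distance from every substring of every inner block), but it is a bit more delicate than the original approach, and you should flag it explicitly if you flesh out the proof.
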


Our strong Insdel RLCC construction relies on a weak Hamming RLCC.
We utilize the following weak Hamming RLCC implicit in \cite{AsadiS21}.
\begin{lemma}[Implied by Theorem 1 of \cite{AsadiS21}]
\label{lem:rlcc}
    For every sufficiently large $q\in \NAT$ and $\eps\in (0,1/2)$, there is a constant $\delta$ such that there exists a weak $(q, \delta, 1/2 + \eps)$-relaxed Hamming Locally Correctable Code $C\colon \set{0,1}^n\rightarrow \set{0,1}^m$ with $m = n^{1+O(1/q)}$. 
    Moreover, the decoder of this code is non-adaptive.
\end{lemma}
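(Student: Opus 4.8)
The plan is to obtain this lemma directly from Theorem~1 of \cite{AsadiS21} via two routine steps: (i) observe that any \emph{strong} RLCC is a fortiori a \emph{weak} RLCC, since the weak notion only requires the perfect-completeness and relaxed-correcting conditions of \cref{def:rLCC} and simply drops the success-rate condition; and (ii) amplify the relaxed-correcting success probability up to $1/2+\eps$ by independent repetition of the base decoder. Concretely, Theorem~1 of \cite{AsadiS21} supplies, for every sufficiently large $q_0$, a non-adaptive strong Hamming RLCC $C_0\colon\zo^n\to\zo^{m_0}$ with $m_0=n^{1+O(1/q_0)}$, some constant relative distance parameter $\delta_0=\delta_0(q_0)>0$, perfect completeness, and relaxed-correcting success probability equal to a fixed constant $\alpha_0$ bounded away from $1/2$ (one may take $\alpha_0\ge 2/3$). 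The code $C$ we output will be $C_0$ itself, so $m=m_0$; only the decoder changes.

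For the parameters: given the target $\eps\in(0,1/2)$, I would fix an odd integer $r=2k+1$ whose value is a constant depending only on $\eps$ and $\alpha_0$ (chosen in the analysis below), and given a query budget $q$ large enough that $\lfloor q/r\rfloor$ exceeds the threshold of \cite{AsadiS21}, instantiate $C_0$ with $q_0\coloneqq\lfloor q/r\rfloor$ queries. Then $m=n^{1+O(1/q_0)}=n^{1+O(r/q)}=n^{1+O(1/q)}$, since the constant $r$ is absorbed into the $O(\cdot)$, and I set $\delta\coloneqq\delta_0(q_0)$, which is a constant. The amplified decoder, on input $i$ and received word $y$, runs the base decoder of $C_0$ independently $r$ times (non-adaptively, using $rq_0\le q$ queries total) to get $b_1,\dots,b_r\in\set{0,1,\bot}$; if any $b_j=\bot$ it outputs $\bot$, and otherwise it outputs $\majority(b_1,\dots,b_r)$.

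The analysis then splits as usual. Perfect completeness: if $y=C(x)$ then perfect completeness of the base decoder forces every $b_j=C(x)_i$, so no $\bot$ occurs and the majority equals $C(x)_i$. Relaxed correcting: suppose $\dist(y,C(x))\le\delta$ for the unique nearby codeword and write $v^\ast=C(x)_i$. The amplified decoder outputs the wrong value $1-v^\ast$ only if no $b_j=\bot$ and at least $k+1$ of the $b_j$ equal $1-v^\ast$; in particular, at least $k+1$ of the $r=2k+1$ independent runs output $1-v^\ast$, an event of probability at most $\exp(-ck)$ for a constant $c=c(\alpha_0)>0$ by a Chernoff/Hoeffding bound, using that each run outputs $1-v^\ast$ with probability at most $1-\alpha_0<1/2$. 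Choosing $k$ (hence $r$) a sufficiently large constant so that $\exp(-ck)\le 1/2-\eps$ gives $\Pr[\,\text{output}\in\set{C(x)_i,\bot}\,]\ge 1/2+\eps$. Non-adaptivity and the block-length bound are immediate from the construction.

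The proof is essentially routine, and the only place that needs genuine care is the choice of aggregation rule: the ``output $\bot$ on any $\bot$'' clause is exactly what preserves perfect completeness, and insisting that a \emph{majority} of the returned bits be wrong (rather than just one of them) is what lets concentration --- as opposed to a union bound, which would push the error estimate the wrong way and yield a bound growing like $r(1-\alpha_0)$ --- drive the failure probability below $1/2-\eps$. The other point to pin down is that Theorem~1 of \cite{AsadiS21} really does provide perfect completeness and a non-adaptive decoder (both are built into the standard RLCC definition and are asserted there); alternatively, if one invokes a form of their theorem in which the relaxed-correcting constant is itself tunable, then step (ii) is unnecessary and the lemma follows immediately after discarding the success-rate condition.
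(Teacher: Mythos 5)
Your proposal is correct and takes the same route the paper implicitly relies on: the paper gives no proof beyond the citation ``Implied by Theorem~1 of \cite{AsadiS21}'', and what you have written is precisely the routine derivation that citation leaves to the reader --- a strong RLCC is a fortiori a weak one, and independent repetition with majority vote (aborting to $\bot$ if any repetition returns $\bot$) amplifies the relaxed-correcting constant to any $1/2+\eps$ while the constant repetition factor is absorbed into the $O(1/q)$ exponent. The two points you single out --- the $\bot$-propagating aggregation rule being what preserves perfect completeness, and using majority-vote concentration rather than a union bound --- are exactly the places where care is needed, and your handling of them is sound.
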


\section{Lower Bounds for 2-Query Hamming RLDCs} \label{sec:2qrldc}

We prove \Cref{thm:main-2qRLDC} in this section. As a reminder, a weak $(q,\delta,\alpha)$-RLDC satisfies the first two conditions in \cref{def:strongRLDC}, and non-adaptive means the decoder makes queries according to a distribution which is independent of the received string $y$. Here we are interested in the case $q=2$ and $\alpha=1/2+\eps$.

To avoid overloading first-time readers with heavy notations, we first present a proof of the lower bound for \emph{non-adaptive} decoders, i.e., decoders with a query distribution independent of the received string. This proof will be easier to follow, while the crucial ideas behind it remain the same. The proof for the most general case is presented in the last subsection, with an emphasis on the nuances in dealing with adaptivity.

\subsection{A Warm-up: the lower bound for non-adaptive decoders}
In the following, we fix a relaxed decoder $\Dec$ for $C$. In this subsection, we assume that $\Dec$ is non-adaptive, and that it has the first two properties specified in \cref{def:strongRLDC}. To avoid technical details, we also assume $\Dec$ always makes exactly 2 queries (otherwise add dummy queries to make the query count exactly 2).


Given an index $i \in [n]$ and queries $j,k$ made by $\Dec(i,\cdot)$, in the most general setting the output could be a random variable which depends on $i$ and $y_j$, $y_k$, where $y_j$, $y_k$ are the answers to queries $j$, $k$, respectively. An equivalent view is that the decoder picks a random function $f$ according to some distribution and outputs $f(y_j, y_k)$. Let $\mathtt{DF}^i_{j,k}$ be the set of all decoding functions $f \colon \set{0,1}^2 \rightarrow \set{0,1,\perp}$ which are selected by $\Dec(i,\cdot)$ with non-zero probability when querying $j,k$. We partition the queries into the following two sets
\begin{align*}
F_{i}^{0} &\coloneqq \set{ \set{j,k} \subseteq [m] \colon \forall f \in \mathtt{DF}^i_{j,k} \textup{ the truth table of $f$ contains no ``$\perp$''}}, \\
F_{i}^{\ge 1} &\coloneqq \set{ \set{j,k} \subseteq [m] \colon \exists f \in \mathtt{DF}^i_{j,k} \textup{ the truth table of $f$ contains at least 1 ``$\perp$''}}.
\end{align*}

\paragraph{Notation.} Given a string $w \in \zo^m$ and a subset $S\subseteq [m]$, we denote $w[S]\coloneqq (w_i)_{i\in S} \in \zo^{|S|}$. Given a Boolean function $f \colon \set{0,1}^{n} \rightarrow \set{0,1}$, and $\sigma \in \set{0,1}$, we write $f\restriction_{x_i = \sigma}$ to denote the restriction of $f$ to the domain $\set{\*x \in \set{0,1}^n \colon x_i = \sigma}$. For a sequence of restrictions, we simply write $f\restriction_{(x_{j_1}, \dots, x_{j_k})=(\sigma_1,\dots,\sigma_k)}$, or $f_{J|\sigma}$ where $J=[n]\setminus\set{j_1,\dots,j_k}$ and $\sigma=(\sigma_1,\dots,\sigma_k)$. Note that $f_{J|\sigma}$ is a Boolean function over the domain $\zo^{J}$. 

We will identify the encoding function of $C$ as a collection of $m$ Boolean functions
\begin{align*} 
\+C \coloneqq \set{C_1, \dots, C_m \colon \forall j \in [m], C_j \colon \set{0,1}^n \rightarrow \set{0,1}}.
\end{align*}
Namely, $C(x)=(C_1(x), C_2(x), \dots, C_m(x))$ for all $x \in \zo^n$.

For $j \in [m]$, we say $C_j$ is \emph{fixable} by $x_i$ if at least one of the restrictions $C_j\restriction_{x_i=0}$ and $C_j\restriction_{x_i=1}$ is a constant function. Denote
\begin{align*}
S_i \coloneqq \set{j \in [m] \colon C_j\textup{ is fixable by }x_i}, \quad T_j \coloneqq \set{i \in [n] \colon C_j\textup{ is fixable by }x_i},
\end{align*}
and $w_{j} \coloneqq |T_j|$. Let 
\begin{align*}
W \coloneqq \set{j \in [m] \colon w_{j} \ge 3\ln(8/\delta) }.
\end{align*}
For $i \in [n]$ define the sets $S_{i,+} \coloneqq S_{i} \cap W$, and $S_{i,-} \coloneqq S_i \cap \overline{W}$.

Let $J \subseteq [n]$ and $\rho \in \set{0,1}^{\overline{J}}$. A code $C \colon \set{0,1}^n \rightarrow \set{0,1}^m$ restricted to $\*x_{\overline{J}} = \rho$, denoted by $C_{J|\rho}$, is specified by the following collection of Boolean functions
\begin{align*}
	\+C_{J|\rho} \coloneqq \set{C_j\restriction_{\*x_{\overline{J}}=\rho} \colon j \in [m], C_j\restriction_{\*x_{\overline{J}}=\rho}\textup{ is not a constant function}}.
\end{align*} 
Namely, we restrict each function $C_j$ in $\+C$ to $\*x_{\overline{J}}=\rho$, and eliminate those that have become constant functions. $C_{J|\rho}$ encodes $n'$-bit messages into $m'$-bit codewords, where $n'=|J|$ and $m' = \abs{\+C_{J|\rho}} \le m$. 

We note that the local decoder $\Dec$ for $C$ can also be used as a local decoder for $C_{J|\rho}$, while preserving all the parameters. This is because, $\Dec$ never needs to really read a codeword bit which has become a constant function under the restriction $J|\rho$. 

The lemma below will be useful later in the proof. It shows that a constant fraction of the message bits can be fixed so that most codeword bits $C_j$ with large $w_j$ become constants.

\begin{lemma} \label{lem:random-restriction}
There exist a set $J \subseteq [n]$ and assignments $\rho \in \set{0,1}^{\overline{J}}$ such that $|J| \ge n/6$, and $|W \setminus A| \le \delta m/4$, where $A \subseteq W$ collects all codeword bits $j \in W$ such that $C_j \restriction_{\*x_{\overline{J}}=\rho}$ is a constant function.
\end{lemma}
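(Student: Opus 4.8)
\textbf{Proof plan for Lemma~\ref{lem:random-restriction}.}

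The plan is to use the random restriction argument already sketched in the technique overview, applied to the set $W$ of codeword bits with large $w_j = |T_j|$. First I would define the random restriction $\bm{\rho}$ as follows: for each $i \in [n]$ independently, set $x_i = 0$, $x_i = 1$, or leave $x_i$ free, each with probability $1/3$. Let $\mathbf{J} \subseteq [n]$ be the (random) set of free coordinates. I would then analyze two quantities separately and combine them via a union bound: (a) the size of $\mathbf{J}$, and (b) the number of codeword bits in $W$ that fail to be killed (i.e.\ do not become constant functions) under the restriction.

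For (a), $|\mathbf{J}|$ is a sum of $n$ i.i.d.\ Bernoulli$(1/3)$ indicators with mean $n/3$, so by a standard Chernoff bound $\Pr[|\mathbf{J}| < n/6] \le e^{-\Omega(n)}$; in particular this probability is at most $1/4$ for $n$ larger than some absolute constant (and the statement is vacuous for small $n$ since one can pick $J$ arbitrarily, or one absorbs the small cases into the $\Omega$). For (b), fix any $j \in W$. Since $C_j$ is fixable by each $x_i$ with $i \in T_j$, setting any single such $x_i$ to the appropriate constant $\sigma_i \in \{0,1\}$ forces $C_j$ to be constant. Hence the only way for $C_j\restriction_{\bm{x}_{\overline{\mathbf J}}=\bm{\rho}}$ to remain non-constant after the restriction is that for \emph{every} $i \in T_j$, coordinate $i$ is either left free or set to the ``wrong'' value $1-\sigma_i$; each such event has probability $2/3$ independently, so $\Pr[C_j \text{ not killed}] \le (2/3)^{w_j} \le (2/3)^{3\ln(8/\delta)} = (8/\delta)^{3\ln(3/2)} $. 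A cleaner route: note $(2/3)^{3} = 8/27 < 1/e$, so $(2/3)^{3\ln(8/\delta)} < e^{-\ln(8/\delta)} = \delta/8$. Therefore $\mathbb{E}[|W \setminus \mathbf A|] = \sum_{j \in W}\Pr[C_j \text{ not killed}] \le |W|\cdot \delta/8 \le m\delta/8$, and by Markov's inequality $\Pr[|W\setminus \mathbf A| > \delta m/4] < 1/2$.

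Combining, with probability at least $1 - 1/4 - 1/2 = 1/4 > 0$ both $|\mathbf J| \ge n/6$ and $|W \setminus \mathbf A| \le \delta m /4$ hold simultaneously, so a fixed choice of $J$ and $\rho$ with the desired properties exists. The main subtlety I would flag — and which the footnote in the overview already anticipates — is that the restriction can create \emph{new} fixabilities: a codeword bit $C_j$ with small $w_j$ (hence $j \notin W$) might become fixable by coordinates it was not fixable by before, or more to the point, one must be careful that the set $A$ is defined purely in terms of ``$C_j\restriction_{\bm x_{\overline J}=\rho}$ is constant'' rather than in terms of the post-restriction $T_j$. Since the lemma only asserts something about $W \setminus A$ (codeword bits that were already ``heavy'' before restriction and survive as non-constants), the argument above is unaffected by this phenomenon — the one-sided bound ``fixable before $\Rightarrow$ killed with good probability'' is all that is used, and new fixabilities only help. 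I would make sure the write-up keeps this distinction explicit so the reader is not misled into thinking we control $|T_j|$ after restriction.
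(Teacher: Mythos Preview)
Your proposal is correct and follows essentially the same argument as the paper: the same random restriction (each coordinate free/$0$/$1$ with probability $1/3$ each), the same $(2/3)^{|T_j|}\le \delta/8$ bound for each $j\in W$, Markov's inequality on the count of survivors, and a union bound with the Chernoff estimate on $|\mathbf J|$. Your discussion of the ``new fixabilities'' subtlety exactly matches the paper's footnote, and your use of the threshold $\delta m/4$ (rather than $\delta|W|/4$) is in fact the version stated in the lemma.
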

\begin{proof}
Let $J$ be a random subset formed by selecting each $i \in [n]$ independently with probability $1/3$. For each $j \in \overline{J}$, set $\rho_j = 0$ or $\rho_j = 1$ with probability $1/2$. We have $\E[|J|]=n/3$, and hence the Chernoff bound shows that $|J| < n/6$ with probability $\exp(-\Omega(n))$. Furthermore, for each $j \in W$, $C_j\restriction_{\*x_{\overline{J}}=\rho}$ becomes a constant function except with probability $\delta/8$.  This is because for each $i \in T_j$, $C_j\restriction_{x_i=0}$ or $C_j\restriction_{x_i=1}$ is a constant function, and either case happens with probability $1/3$. Therefore
\begin{align*}
	\Pr\left[ C_j\restriction_{\*x_{\overline{J}}=\rho}\textup{ is not constant} \right] \le \tp{1-\frac{1}{3}}^{|T_j|} < e^{-|T_j|/3} \le \frac{\delta}{8}, 
\end{align*} 
where the last inequality is due to $w_j = |T_j| \ge 3\ln(8/\delta)$, since $j \in W$.

By linearity of expectation and Markov's inequality, we have
\begin{align*}
	  & \Pr\left[ \sum_{j \in W}\mathbf{1}\set{C_j\restriction_{\*x_{\overline{J}}=\rho}\textup{ is not constant}} \ge \frac{\delta}{4}|W| \right] \\
	\le& \frac{ \E\left[ \sum_{j \in W}\mathbf{1}\set{C_j\restriction_{\*x_{\overline{J}}=\rho}\textup{ is not constant}} \right]}{\delta|W|/4} \\
	=& \frac{\sum_{j \in W}\Pr\left[ C_j\restriction_{\*x_{\overline{J}}=\rho}\textup{ is not constant} \right]}{\delta|W|/4} \\
	\le& \frac{\delta/8 \cdot |W|}{\delta|W|/4} \le \frac{1}{2}.
\end{align*}
Applying a union bound gives
\begin{align*}
	\Pr\left[ \tp{|J| < n/6} \lor \tp{\sum_{j \in W}\mathbf{1}\set{C_j\restriction_{\*x_{\overline{J}}=\rho}\textup{ is not constant}} \ge \frac{\delta}{4}|W|} \right] \le \exp\tp{-\Omega(n)} + \frac{1}{2} < 1.
\end{align*}
Finally, we can conclude that there exist $J \subseteq [n]$ and $\rho\in\zo^{\overline{J}}$ such that $|J| \ge n/6$, and $C_j\restriction_{\*x_{\overline{J}}=\rho}$ becomes a constant function for all but $\delta/4$ fraction of $j \in W$. 
\end{proof}

Let $J \subseteq [n]$ and $\rho \in \set{0,1}^{\overline{J}}$ be given by the \cref{lem:random-restriction}, and consider the restricted code $C_{J|\rho}$. By rearranging the codeword bits, we may assume $J=[n']$ where $n'=|J| \ge n/6$. 
Let $A \subseteq [m]$ be the set of codeword bits which get fixed to constants under $J|\rho$. We denote $W'\coloneqq W\setminus A$, $S_i'\coloneqq S_i \setminus A$, $S_{i,-}'\coloneqq S_{i,-}\setminus A$, and $S_{i,+}'\coloneqq S_{i,+}\setminus A$. Note that $|W'|=|W\setminus A| \le \delta m/4$, and thus $|S_{i,+}'|=|S_{i,+}\cap W'|\le \delta m/4$ for all $i \in [n']$. We emphasize that $S_i'$ does not necessarily contain all codeword bits fixable by $x_i$ in the restricted code $C_{J|\rho}$, as fixing some message bits may cause more codeword bits to be fixable by $x_i$.

We first show that the queries of $C$ must have certain structures. The following claim characterizes the queries in $F_i^{\ge 1}$.
\begin{claim} \label{clm:bot-fix}
	Suppose $\set{j,k} \in F_{i}^{\ge 1}$. Then we must have $j,k \in S_i$.
\end{claim}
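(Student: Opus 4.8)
\textbf{Proof plan for Claim~\ref{clm:bot-fix}.} The plan is to argue by contraposition: suppose one of the two queries, say $j$, is not in $S_i$, and derive that every decoding function $f \in \mathtt{DF}^i_{j,k}$ has a $\perp$-free truth table, so that $\{j,k\} \notin F_i^{\ge 1}$. Fix any $f \in \mathtt{DF}^i_{j,k}$. The starting point is the perfect completeness property: whenever $y = C(x)$ for a message $x$, the decoder on input $i$ must output $x_i$ with probability $1$, so in particular $f(C_j(x), C_k(x)) = x_i$ for \emph{every} $x \in \zo^n$.

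Next I would exploit the failure of $C_j$ to be fixable by $x_i$. Since $j \notin S_i$, neither $C_j\restriction_{x_i=0}$ nor $C_j\restriction_{x_i=1}$ is constant; hence there exist messages $x^{(0)}, \tilde{x}^{(0)}$ with $x_i^{(0)} = \tilde{x}_i^{(0)} = 0$ and $C_j(x^{(0)}) = 0$, $C_j(\tilde{x}^{(0)}) = 1$, and similarly messages $x^{(1)}, \tilde{x}^{(1)}$ with their $i$-th bit equal to $1$ realizing both values of $C_j$. Plugging these four messages into the completeness identity $f(C_j(x), C_k(x)) = x_i$: the pair $x^{(0)}, \tilde{x}^{(0)}$ forces $f(0, C_k(x^{(0)})) = 0$ and $f(1, C_k(\tilde{x}^{(0)})) = 0$, while $x^{(1)}, \tilde{x}^{(1)}$ forces $f(0, C_k(x^{(1)})) = 1$ and $f(1, C_k(\tilde{x}^{(1)})) = 1$. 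The first column index ($y_j$) takes both values $0$ and $1$ among the ``output $0$'' constraints, and likewise among the ``output $1$'' constraints; so the $2\times 2$ truth table of $f$ must contain at least one $0$ in each row and at least one $1$ in each row — but each row has only two entries, so every entry is pinned down to $0$ or $1$, leaving no room for $\perp$. Hence $f$ is $\perp$-free, and since $f$ was arbitrary, $\{j,k\} \in F_i^0$, contradicting $\{j,k\} \in F_i^{\ge 1}$.

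A small point to be careful about: the argument above only directly shows that both \emph{rows} of the truth table are filled with $\{0,1\}$-values, which already kills all four entries; I should double-check that the four constraints indeed hit both rows, i.e. that we can pick messages with $x_i = 0$ realizing $C_j = 0$ and with $x_i = 1$ realizing $C_j = 1$ (and the other combination), which is exactly what non-fixability of $C_j$ by $x_i$ guarantees (each of the two restrictions is non-constant, so each attains both values). By symmetry the same reasoning applied to $k$ shows $k \in S_i$ as well, completing the claim. I don't anticipate a serious obstacle here — the only thing requiring mild care is making the quantifier ``$\forall f \in \mathtt{DF}^i_{j,k}$'' in the definition of $F_i^0$ versus ``$\exists f$'' in $F_i^{\ge 1}$ interact correctly, which is handled by fixing an arbitrary $f$ at the outset and showing it is $\perp$-free.
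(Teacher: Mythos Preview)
Your proposal is correct and follows essentially the same approach as the paper's proof: both exploit non-fixability of $C_j$ by $x_i$ to produce four messages witnessing that each row of the truth table of any $f \in \mathtt{DF}^i_{j,k}$ contains both a $0$ and a $1$, leaving no room for $\perp$. The paper frames this as a direct contradiction (assume $\{j,k\}\in F_i^{\ge 1}$ and $j\notin S_i$) rather than contraposition, but the substance is identical, and your care about the $\forall f$ versus $\exists f$ quantifiers is well-placed and correctly handled.
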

\begin{proof}
	Let $\set{j, k} \in F_{i}^{\ge 1}$. Suppose for the sake of contradiction that $j \notin S_i$. This implies there are partial assignments $\sigma_{00}, \sigma_{01}, \sigma_{10}, \sigma_{11} \in \set{0,1}^{n-1}$ such that 
	\begin{align*}
	C_j\tp{\*x_{-i} = \sigma_{00}, x_i = 0} = 0, \quad C_j\tp{\*x_{-i} = \sigma_{01}, x_i = 1} = 0, \\
	C_j\tp{\*x_{-i} = \sigma_{10}, x_i = 0} = 1, \quad C_j\tp{\*x_{-i} = \sigma_{11}, x_i = 1} = 1,
	\end{align*}
	where $\*x_{-i}$ is defined as $\tp{x_t \colon t \in [n]\setminus\set{i}}$.
	
	Let $C_{00}, C_{01}, C_{10}, C_{11}$ be encodings of the corresponding assignments mentioned above. Since the relaxed decoder has perfect completeness, when $\Dec(i,\cdot)$ is given access to $C_{00}$ or $C_{10}$  it must output $x_i=0$. Note that the $j$-th bit is different in $C_{00}$ and $C_{10}$. Similarly, when $\Dec(i,\cdot)$ is given access to $C_{01}$ or $C_{11}$ it must output $x_i = 1$. However, this already takes up 4 entries  in the truth table of any decoding function $f \in \mathtt{DF}_{j,k}^i$, leaving no space for any ``$\perp$'' entry. This contradicts with the assumption $\set{j,k} \in F_{i}^{\ge 1}$. 
\end{proof}

Here is another way to view \cref{clm:bot-fix} which will be useful later: Suppose $\set{j, k}$ is a query set such that $j \notin S_i$ (or $k \notin S_i$), then $\set{j, k} \in F_{i}^{0}$. In other words, conditioned on the event that some query is not contained in $S_i$, the decoder never outputs $\perp$.

The following claim characterizes the queries in $F_i^{0}$.
\begin{claim} \label{clm:fixable-or-useless}
Suppose $\set{j,k} \in F_{i}^{0}$, and $j \in S_i$. Then one of the following three cases occur: (1) $k \in S_i$, (2) $C_j=x_i$, or (3) $C_j=\neg x_i$.
\end{claim}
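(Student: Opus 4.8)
The plan is to assume $k \notin S_i$ (if $k \in S_i$ then case (1) holds and there is nothing to prove) and to deduce that $C_j$ equals either $x_i$ or $\neg x_i$, i.e.\ case (2) or case (3). The engine of the argument is that $\{j,k\} \in F_{i}^{0}$ means every decoding function $f \in \mathtt{DF}^i_{j,k}$ has a $\{0,1\}$-valued truth table, and perfect completeness pins down entries of that truth table: for every message $x$ with $x_i = \sigma$, the codeword $C(x)$ forces $f(C_j(x), C_k(x)) = \sigma$ for each $f \in \mathtt{DF}^i_{j,k}$.

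First I would fix the $j$-side structure. Since $j \in S_i$, one of $C_j\restriction_{x_i=0}$, $C_j\restriction_{x_i=1}$ is constant; after possibly interchanging the two values of $x_i$ (which merely swaps the labels in cases (2) and (3)), assume $C_j\restriction_{x_i=0} \equiv b$ for some constant $b \in \{0,1\}$. Because $k \notin S_i$, both $C_k\restriction_{x_i=0}$ and $C_k\restriction_{x_i=1}$ are non-constant, so for each $\sigma \in \{0,1\}$ there are assignments to $\*x_{-i}$ which, together with $x_i = \sigma$, realize $C_k = 0$ and $C_k = 1$. Applying perfect completeness to the two messages with $x_i = 0$ produced this way gives $f(b,0) = f(b,1) = 0$ for every $f \in \mathtt{DF}^i_{j,k}$; that is, $f$ is identically $0$ on the row indexed by the value $b$.

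The last step analyzes $C_j$ on the $x_i = 1$ side. If $C_j\restriction_{x_i=1}$ is \emph{not} constant then it attains the value $b$ at some assignment $\tau$ to $\*x_{-i}$, and the message $(\*x_{-i} = \tau, x_i = 1)$ forces $f(b, C_k(\tau,1)) = 1$, contradicting the previous paragraph. Hence $C_j\restriction_{x_i=1} \equiv b'$ is constant, and feeding the two messages with $x_i = 1$ through perfect completeness yields $f(b',0) = f(b',1) = 1$; combined with $f(b,\cdot)\equiv 0$ this forces $b' \neq b$, so $b' = 1-b$. Therefore $C_j = x_i$ when $b = 0$ and $C_j = \neg x_i$ when $b = 1$, which closes the case analysis. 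The only parts I expect to need care are the bookkeeping of the ``without loss of generality'' interchange of $x_i$-values and observing that the degenerate case of a constant $C_j$ is automatically excluded under the hypothesis $k \notin S_i$ (it would force $b = b'$, contradicting $b' = 1-b$), so it needs no separate treatment.
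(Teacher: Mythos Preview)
Your proposal is correct and follows essentially the same approach as the paper's proof: both assume $k \notin S_i$, use perfect completeness on messages with $x_i=0$ (exploiting non-constancy of $C_k\restriction_{x_i=0}$) to pin down the row $f(b,\cdot)\equiv 0$, and then derive a contradiction if $C_j\restriction_{x_i=1}$ ever attains $b$. The only cosmetic difference is that the paper fixes $b=0$ up front while you keep $b$ general and add a short step to rule out $b'=b$; both are fine.
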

\begin{proof}
Since $j \in S_i$, we may, without loss of generality, assume that $C_j\restriction_{x_i=0}$ is a constant function. Let us further assume it is the constant-zero function. The proofs for the other cases are going to be similar. 

Denote by $f(y_j, y_k)$ the function returned by $\Dec(i,\cdot)$ conditioned on reading $\set{j,k}$. Any function $f \in \mathtt{DF}_{j,k}^i$ takes values in $\set{0,1}$ since $\set{j,k} \in F_i^{0}$. Suppose case (1) does not occur, meaning that $C_k\restriction_{x_i=0}$ is not a constant function. Then there must be partial assignments $\sigma_{00}, \sigma_{01} \in \set{0,1}^{n-1}$ such that
\begin{align*}
    C_k(x_i=0, \*x_{-i}=\sigma_{00}) = 0, \quad C_k(x_i=0, \*x_{-i}=\sigma_{01}) = 1.
\end{align*}
Let $C_{00}$ and $C_{01}$ be the encodings of the corresponding assignments mentioned above. Due to perfect completeness of $\Dec$, it must always output $x_i = 0$ when given access to $C_{00}$ or $C_{01}$. That means $f(0,0)=f(0,1)=0$.

Now we claim that $C_j\restriction_{x_i=1}$ must be the constant-one function. Otherwise there is a partial assignment $\sigma_{10} \in \zo^{n-1}$ such that
\begin{align*}
    C_j(x_i=1, \*x_{-i}=\sigma_{10}) = 0.
\end{align*}
Let $C_{10}$ be the encoding of this assignment. On the one hand, due to perfect completeness $\Dec(i,\cdot)$ should always output $x_i=1$ when given access to $C_{10}$. On the other hand, $\Dec(i,\cdot)$ outputs $f((C_{10})_j,0)=f(0,0)=0$. This contradiction shows that $C_j\restriction_{x_i=1}$ must be the constant-one function. Therefore $C_j=x_i$, i.e., case (2) occurs.

Similarly, when $C_j\restriction_{x_i=0}$ is the constant-one function, we can deduce that $C_j=\neg x_i$, i.e., case (3) occurs.
\end{proof}

We remark that \cref{clm:bot-fix} and \cref{clm:fixable-or-useless} jointly show that for any query set $\set{j,k}$ made by $\Dec(i,\cdot)$ there are 2 essentially different cases: (1) both $j, k$ lie inside $S_i$, and (2) both $j, k$ lie outside $S_i$. The case $j \in S_i, k\notin S_i$ ($k \in S_i, j\notin S_i$, resp.) means that $k$ ($j$, resp.) is a dummy query which is not used for decoding. Furthermore, conditioned on case (2), the decoder never outputs $\perp$.

Another important observation is that all properties of the decoder discussed above hold for the restricted code $C_{J|\rho}$, with $S_i$ replaced by $S_i'$. This is because $C_{J|\rho}$ uses essentially the same decoder, except that it does not actually query any codeword bit which became a constant.

For a subset $S \subseteq [m]$,
we say ``$\Dec(i,\cdot)$ reads $S$'' if the event ``$j \in S$ and $k \in S$'' occurs where $j, k \in [m]$ are the queries made by $\Dec(i,\cdot)$. The following lemma says that conditioned on $\Dec(i,\cdot)$ reads some subset $S$, there is a way of modifying the bits in $S$ that flips the output of the decoder.

\begin{lemma} \label{lem:conditional-zero}
Let $S \subseteq [m]$ be a subset such that $\Pr[\Dec(i,\cdot)\textup{ reads }S]>0$. Then for any string $s \in \set{0,1}^m$ and any bit $b \in \set{0,1}$, there exists a string $z \in \set{0,1}^m$ such that $z[[m]\setminus S]=s[[m]\setminus S]$, and 
\begin{align*}
	\Pr\left[ \Dec(i, z) = 1-b \mid \Dec(i,\cdot)\textup{ reads }S \right] = 1.
\end{align*}
\end{lemma}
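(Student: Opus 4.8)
The plan is to exploit the query structure established in \cref{clm:bot-fix} and \cref{clm:fixable-or-useless}. First I would observe what conditioning on ``$\Dec(i,\cdot)$ reads $S$'' means: the two queries $j,k$ both lie in $S$. We want to produce a single string $z$, agreeing with $s$ outside $S$, on which the decoder is \emph{forced} to output $1-b$ whenever it reads $S$. The natural candidate for $z$ is obtained by overwriting the $S$-coordinates of $s$ with the corresponding coordinates of a full codeword $C(\hat x)$ for a cleverly chosen message $\hat x$; that is, $z[S] = C(\hat x)[S]$ and $z[[m]\setminus S] = s[[m]\setminus S]$. The key point is that I get to pick $\hat x$ \emph{after} looking at which codeword bits $C_1,\dots,C_m$ are fixable by $x_i$, i.e.\ after examining $S_i$.

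The main step is to choose $\hat x$ so that $C(\hat x)$ ``looks like'' the encoding of some message with $i$-th bit equal to $b$, \emph{restricted to the coordinates in $S$ that the decoder could possibly query}. Concretely, I would proceed as follows. For every $j \in S \cap S_i$, the function $C_j$ is fixable by $x_i$, so one of $C_j\restriction_{x_i=0}$, $C_j\restriction_{x_i=1}$ is a constant; choose the setting $x_i = b$ if $C_j\restriction_{x_i=b}$ is constant, and otherwise note (since $j\in S_i$) that $C_j\restriction_{x_i=1-b}$ is constant — the subtlety is that the same bit $x_i=b$ should work simultaneously for all such $j$. This is \emph{not} automatic: a given $C_j$ might be fixable only via $x_i = 1-b$, not $x_i = b$. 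Here is where I would invoke \cref{clm:bot-fix}/\cref{clm:fixable-or-useless} together with perfect completeness: for a query set $\{j,k\}$ with both $j,k \in S$, either both lie in $S_i$ (in which case I need to control them) or we are in one of the degenerate cases where one query is a dummy and the non-dummy one, say $j$, satisfies $C_j = x_i$ or $C_j = \lnot x_i$ (case (2)/(3) of \cref{clm:fixable-or-useless}), in which case setting $x_i$ appropriately pins $C_j$ directly. The cleanest route: set $\hat x_i = b$; then for any query set $\{j,k\}$ read by the decoder with $j,k\in S$, I argue via perfect completeness that $\Dec(i,\cdot)$ must output $b$ when fed a \emph{valid codeword} $C(\hat x)$; and the crucial claim is that the decoder's output on $z$, conditioned on reading $\{j,k\}$, depends only on $(z_j, z_k) = (C(\hat x)_j, C(\hat x)_k)$, which equals what it would see on the honest codeword $C(\hat x)$. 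Wait — I need $b$ vs.\ $1-b$: the lemma wants the decoder to output $1-b$, so I should instead set $\hat x_i = 1-b$, so that on the honest codeword $C(\hat x)$ perfect completeness forces output $x_i = 1-b$, and then transfer this to $z$.

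So the key steps, in order, are: (i) fix the target message bit $\hat x_i := 1-b$ and extend arbitrarily (consistently with whatever fixes coordinates in $S \cap S_i$ to the needed constants) to a full message $\hat x \in \zo^n$, then define $z$ by $z[S] := C(\hat x)[S]$, $z[[m]\setminus S] := s[[m]\setminus S]$; (ii) by perfect completeness, $\Dec(i, C(\hat x))$ outputs $1-b$ with probability $1$, so \emph{every} decoding function $f$ in the support of $\Dec(i,\cdot)$, on \emph{every} query pair $\{j,k\}$ it might select, satisfies $f(C(\hat x)_j, C(\hat x)_k) = 1-b$; (iii) condition on the event that $\Dec(i,\cdot)$ reads $S$, i.e.\ picks a pair $\{j,k\}\subseteq S$; on $z$, the decoder reads $(z_j, z_k) = (C(\hat x)_j, C(\hat x)_k)$ and thus outputs $f(C(\hat x)_j, C(\hat x)_k) = 1-b$; (iv) conclude $\Pr[\Dec(i,z) = 1-b \mid \Dec(i,\cdot)\text{ reads }S] = 1$.

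I expect the main obstacle to be step (i): verifying that a \emph{single} global choice $\hat x_i = 1-b$ is consistent with fixing all the needed coordinates in $S\cap S_i$ to the constants that perfect completeness demands. The resolution is that I do \emph{not} actually need to control those coordinates beyond what perfect completeness already gives me for free: since $z[S] = C(\hat x)[S]$ exactly (not some hand-crafted string), whatever the values $C(\hat x)_j$ are, the decoder sees precisely the view it would have on the legitimate codeword $C(\hat x)$, and perfect completeness of the relaxed decoder on that legitimate codeword forces output $1-b$ with probability $1$ — in particular conditioned on any positive-probability query event, including ``reads $S$''. So the fixability discussion and Claims~\ref{clm:bot-fix}--\ref{clm:fixable-or-useless} are not even strictly needed for this particular lemma; perfect completeness alone does the job, and the real content (why such an $S$ with $\Pr[\Dec \text{ reads } S] > 0$ is useful, and how $S_i$ enters) is deferred to how this lemma is \emph{applied} in \cref{lem:LDC-reduction}. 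I would double-check that the hypothesis $\Pr[\Dec(i,\cdot)\text{ reads }S] > 0$ is used only to make the conditional probability well-defined, which it is.
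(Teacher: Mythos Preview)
Your proposal is correct and, after the exploratory discussion, lands on exactly the paper's argument: pick any $\hat x$ with $\hat x_i = 1-b$, set $z[S] = C(\hat x)[S]$ and $z[[m]\setminus S] = s[[m]\setminus S]$, and invoke perfect completeness (which holds conditionally on any positive-probability query event) to force output $1-b$. Your own observation that Claims~\ref{clm:bot-fix}--\ref{clm:fixable-or-useless} are unnecessary here is spot on; the paper's proof is just the three-line version of your final paragraph.
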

\begin{proof}
Let $x \in \set{0,1}^{n}$ be a string with $x_i=1-b$. Let $z \in \set{0,1}^{m}$ be the string satisfying 
\begin{align*}
z[S] = C(x)[S], \quad z[[m]\setminus S] = s[[m]\setminus S]. 
\end{align*}
Since $\Dec$ has perfect completeness, we have
\begin{align*}
	1 = \Pr\left[ \Dec(i, C(x)) = x_i \mid \Dec(i,\cdot)\textup{ reads }S \right] = \Pr\left[ \Dec(i, z) = 1-b \mid \Dec(i,\cdot)\textup{ reads }S \right].
\end{align*}
\end{proof}

The next lemma is a key step in our proof. It roughly says that there is a local decoder for $x_i$ in the standard sense as long as the size of $S_{i}$ is not too large.

\begin{lemma} \label{lem:LDC-reduction}
	Suppose $i \in [n]$ is such that $\abs{S_{i}} \le \delta m/2$. Then there is a $(2,\delta/2,1/2+\eps)$-local decoder $D_i$ for $i$. In other words, for any $x \in \set{0,1}^{n}$ and $y \in \set{0,1}^{m}$ such that $\ham(C(x), y) \le \delta m/2$, we have
	\begin{align*}
	    \Pr\left[ D_i(y) = x_i \right] \ge \frac{1}{2} + \eps,
	\end{align*}
	and $D_i$ makes at most 2 queries into $y$.
\end{lemma}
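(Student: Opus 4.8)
\textbf{Proof plan for \cref{lem:LDC-reduction}.}
The plan is to define the standard local decoder $D_i$ by taking the relaxed decoder $\Dec(i,\cdot)$, conditioning its query distribution on the event $E_2$ that neither of the two queries lands in $S_i$, and outputting whatever $\Dec$ outputs in that case (which, by the structural claims, is always a bit in $\set{0,1}$, never $\perp$). Concretely, $D_i(y)$ samples queries $\set{j,k}$ from the conditional distribution of $\Dec(i,\cdot)$'s queries given $j\notin S_i$ and $k\notin S_i$, reads $y_j,y_k$, and returns $f^i_{j,k}(y_j,y_k)$. The first thing to check is that this conditioning is well-defined, i.e.\ $\Pr[E_2]>0$; if it were zero, then $\Dec(i,\cdot)$ always reads $S_i$, and one can apply \cref{lem:conditional-zero} with $S=S_i$ and $s=C(x)$ to build a received word $z$ agreeing with $C(x)$ outside $S_i$ (so $\ham(C(x),z)\le |S_i|\le \delta m/2 \le \delta m$) on which $\Dec$ outputs $1-x_i$ with probability $1$, contradicting relaxed decoding. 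So $\Pr[E_2]>0$ and $D_i$ is well-defined.

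Next I would show that, given any $y$ with $\ham(C(x),y)\le \delta m/2$, the decoder $D_i$ succeeds with probability $\ge 1/2+\eps$. The idea is to pass through a carefully corrupted word. Apply \cref{lem:conditional-zero} with $S=S_i$, with the reference string $s=y$, and with $b=x_i$, to obtain a string $z$ such that $z$ agrees with $y$ outside $S_i$ and $\Pr[\Dec(i,z)=1-x_i \mid \Dec(i,\cdot)\text{ reads }S_i]=1$; in particular, conditioned on $E_1$ (both queries inside $S_i$), $\Dec(i,z)$ never outputs $x_i$ and never outputs $\perp$ either — wait, it could still output $\perp$, but that is fine, the point is it never equals $x_i$. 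Now bound the Hamming distance: $\ham(C(x),z)\le \ham(C(x),y) + |S_i| \le \delta m/2 + \delta m/2 = \delta m$, so relaxed decoding applies to $z$ and gives $\Pr[\Dec(i,z)\in\set{x_i,\perp}] \ge 1/2+\eps$. Decompose this probability over $E_1$ and $E_2$ (the structural claims \cref{clm:bot-fix} and \cref{clm:fixable-or-useless} guarantee that the only relevant cases for the query set are ``both in $S_i$'' and ``both outside $S_i$'', since a mixed query set has a dummy query that can be folded into one of these two cases): conditioned on $E_1$ the output is never $x_i$, and it is $\perp$ only if\ldots actually we need $\Dec(i,z)\in\{x_i,\perp\}$ conditioned on $E_1$ to have probability\ldots let me restate: conditioned on $E_1$, $\Pr[\Dec(i,z)=1-x_i]=1$ by \cref{lem:conditional-zero}, hence $\Pr[\Dec(i,z)\in\set{x_i,\perp}\mid E_1]=0$. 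Therefore
\[
\tfrac12+\eps \le \Pr[\Dec(i,z)\in\set{x_i,\perp}] = \Pr[E_2]\cdot\Pr[\Dec(i,z)\in\set{x_i,\perp}\mid E_2].
\]
Finally, conditioned on $E_2$, the output is never $\perp$ (this is the remark after \cref{clm:bot-fix}: if a query is outside $S_i$, that query set is in $F_i^0$, so no $\perp$), so $\Pr[\Dec(i,z)\in\set{x_i,\perp}\mid E_2]=\Pr[\Dec(i,z)=x_i\mid E_2]$. The last step is to observe that $z$ and $y$ agree outside $S_i$, and $D_i$ only reads positions outside $S_i$ (that is exactly the point of conditioning on $E_2$), so $\Pr[D_i(y)=x_i]=\Pr[\Dec(i,z)=x_i\mid E_2]\ge \tfrac12+\eps$ (using $\Pr[E_2]\le 1$). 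Thus $D_i$ makes $2$ queries and succeeds with probability $\ge 1/2+\eps$ whenever $\ham(C(x),y)\le \delta m/2$, which is the claim.

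The main obstacle I anticipate is the bookkeeping around the event decomposition and making sure the ``mixed query set'' case is handled cleanly: \cref{clm:fixable-or-useless} tells us that when $j\in S_i$, $k\notin S_i$, either $k$ is effectively a dummy (so it behaves like the $E_1$ case with only query $j$ being used) or $C_j\in\set{x_i,\neg x_i}$ (which can also be absorbed). One must phrase $E_1$ and $E_2$ so that every query set falls into exactly one of them after this reduction, and verify that conditioned on $E_1$ the conclusion of \cref{lem:conditional-zero} (output $\ne x_i$) genuinely holds even for the dummy-query subcase — this is where the perfect-completeness-driven structure really gets used. A secondary subtlety is confirming $D_i$ truly never queries inside $S_i$, so that switching from $y$ to the modified string $z$ (which only differs inside $S_i$) does not change $D_i$'s behavior; this is immediate from the definition of the conditional distribution but worth stating explicitly.
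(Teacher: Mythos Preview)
Your proposal is correct and follows essentially the same approach as the paper: define $D_i$ by conditioning $\Dec(i,\cdot)$ on querying outside $S_i$, use \cref{lem:conditional-zero} both to show the conditioning is well-defined and to zero out the contribution from the ``queries touch $S_i$'' event on a modified word $z$, and finish via the relaxed decoding guarantee on $z$. The one point where the paper is slightly cleaner is that it works with the complementary events $E_i$ (``both queries outside $S_i$'') and $\overline{E_i}$ (``at least one query in $S_i$'') rather than your $E_1$/$E_2$, and then uses \cref{clm:bot-fix} and \cref{clm:fixable-or-useless} to argue directly that conditioned on $\overline{E_i}$ the output depends only on $z[S_i]$---this absorbs the mixed case without needing to redefine the events, which is exactly the bookkeeping issue you flagged.
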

\begin{proof}
	Let $i \in [n]$ be such that $\abs{S_{i}} \le \delta m/2$.
	The local decoder $D_i$ works as follows. Given $x \in \set{0,1}^{n}$ and $y \in \set{0,1}^{m}$ such that $\ham(C(x), y) \le \delta m/2$, $D_i$ obtains a query set $Q$ according to the query distribution of $\Dec(i,\cdot)$ conditioned on $Q \subseteq [m]\setminus S_i$. Then $D_i$ finishes by outputting the result returned by $\Dec(i,\cdot)$. 
	
	
	
	Denote by $E_i$ the event ``$\Dec(i,\cdot)$ reads $[m]\setminus S_i$'', i.e., both two queries made by $\Dec(i,\cdot)$ lie outside $S_i$. In order for the conditional distribution to be well-defined, we need to argue that $E_i$ occurs with non-zero probability. Suppose this is not the case, meaning that $Q \cap S_i \neq \varnothing$ for all possible query set $Q$. Let $z \in \set{0,1}^m$ be the string obtained by applying Lemma~\ref{lem:conditional-zero} with $S = S_i$, $s=C(x)$ and $b=x_i$. Claim~\ref{clm:bot-fix} and Claim~\ref{clm:fixable-or-useless} jointly show that either $Q \subseteq S_i$, or the decoder's output does not depend on the answers to queries in $Q \setminus S_i$. In any case, the output of $\Dec(i,z)$ depends only on $z[S_i]$. However, by the choice of $z$ we now have a contradiction since 
	\begin{align*}
		\frac{1}{2} + \eps \le \Pr\left[ \Dec(i,z) \in \set{x_i, \perp} \right] = \Pr\left[ \Dec(i,z) \in \set{x_i, \perp} \mid \Dec(i,\cdot)\textup{ reads }S_i \right] = 0,
	\end{align*}
	where the first inequality is due to $\ham(C(x),z)\le |S_i| < \delta m$ and the relaxed decoding property of $\Dec$.
	
	By definition of $D_i$, it makes at most 2 queries into $y$. Its success rate is given by
	\begin{align*}
	    \Pr[D_i(y) = x_i] = \Pr[\Dec(i,y) = x_i \mid E_i].
	\end{align*}
	Therefore it remains to show that
	\begin{align*}
		\Pr\left[ \Dec(i,y) = x_i \mid E_i \right] \ge \frac{1}{2} + \eps.
	\end{align*} 
	
	Let $z$ be the string obtained by applying Lemma~\ref{lem:conditional-zero} with $S=S_i$, $s=y$ and $b=x_i$. From previous discussions we see that conditioned on $\overline{E_i}$ (i.e., the event $E_i$ does not occur), the output of $\Dec(i,z)$ only depends on $z[S_i]$. Therefore
	\begin{align}
		\Pr\left[ \Dec(i, z) \in \set{x_i, \perp} \mid \overline{E_i} \right] = 1-\Pr\left[ \Dec(i, z) = 1-x_i \mid \overline{E_i} \right] = 0.
		\label{eqn:conditional-zero}
	\end{align}
	We also have that $z$ is close to $C(x)$ since
	\begin{align*}
	\ham(z, C(x)) \le \ham(z, y) + \ham(y, C(x)) \le \abs{S_i} + \delta m/2 \le \delta m.
	\end{align*}
	Thus, the relaxed decoding property of $\Dec$ gives
	\begin{align*}
	\Pr\left[ \Dec(i, z) \in \set{x_i, \perp} \right] \ge \frac{1}{2} + \eps.
	\end{align*}
	On the other hand, we also have
	\begin{align*}
	 & \Pr\left[ \Dec(i, z) \in \set{x_i, \perp} \right] \\
	=& \Pr\left[ \Dec(i, z) \in \set{x_i, \perp} \mid \overline{E_i} \right] \cdot \Pr\left[ \overline{E_i} \right] + \Pr\left[ \Dec(i, z) \in \set{x_i, \perp} \mid E_i \right] \cdot \Pr\left[ E_i \right] \\
	=& \Pr\left[ \Dec(i, z) \in \set{x_i, \perp} \mid \overline{E_i} \right] \cdot \Pr\left[ \overline{E_i} \right] + \Pr\left[ \Dec(i, y) \in \set{x_i, \perp} \mid E_i \right] \cdot \Pr\left[ E_i \right] \tag*{($z[[m]\setminus S_i]=y[[m]\setminus S_i]$)} \\
	=& \Pr\left[ \Dec(i, y) \in \set{x_i, \perp} \mid E_i \right] \cdot \Pr\left[ E_i \right]  \tag*{Equation (\ref{eqn:conditional-zero})}\\
	\le& \Pr\left[ \Dec(i, y) \in \set{x_i, \perp} \mid E_i \right].
	\end{align*}
	Note that by Claim~\ref{clm:bot-fix}, conditioned on $E_i$, $\Dec(i,\cdot)$ never outputs ``$\perp$''. We thus have 
	\begin{align*}
	\Pr\left[ \Dec(i, y) = x_i \mid E_i \right] \ge \frac{1}{2} + \eps.
	\end{align*}
\end{proof}

We remark once again that the above lemma holds for the restricted code $C_{J|\rho}$, with $S_i$ replaced by $S_i'$.

Below we prove an exponential lower bound for non-adaptive 2-query Hamming RLDCs.
\begin{proposition} \label{prop:non-adaptive-2qRLDC}
	Let $C \colon \set{0,1}^n \rightarrow \set{0,1}^m$ be a non-adaptive weak $(2,\delta,1/2+\eps)$-RLDC. Then $m = 2^{\Omega_{\delta,\eps}(n)}$.
\end{proposition}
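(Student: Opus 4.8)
The plan is to combine the structural lemmas already established (\cref{clm:bot-fix}, \cref{clm:fixable-or-useless}, \cref{lem:conditional-zero}, \cref{lem:LDC-reduction}, and \cref{lem:random-restriction}) into a single reduction from a non-adaptive weak $(2,\delta,1/2+\eps)$-RLDC to a standard $(2,\delta/2,1/2+\eps)$-Hamming LDC, and then invoke the exponential lower bound \cref{thm:two-query-lb}. First I would apply \cref{lem:random-restriction} to obtain a set $J\subseteq[n]$ with $|J|\ge n/6$ and an assignment $\rho\in\zo^{\overline J}$ such that, in the restricted code $C_{J|\rho}$, all but at most $\delta m/4$ of the codeword bits in $W$ have been fixed to constants. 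After reindexing so that $J=[n']$ with $n'=|J|\ge n/6$, this gives $|S_{i,+}'|\le|W'|\le\delta m/4$ for every $i\in[n']$, where $W'=W\setminus A$ and $A$ is the set of codeword bits killed by the restriction.

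Next I would argue that for a constant fraction of the surviving message bits $i\in[n']$ we also have $|S_{i,-}'|$ small, via a double-counting argument. By construction, every $j\notin W$ (hence every $j\in S_{i,-}'$, since $S_{i,-}'\subseteq\overline W$) satisfies $w_j=|T_j|<3\ln(8/\delta)$. Therefore $\sum_{j\in\overline W}|T_j|<3\ln(8/\delta)\cdot m$, and since $\sum_i|S_{i,-}|=\sum_{j\in\overline W}|T_j|$ (each contributes exactly to the pairs $(i,j)$ with $i\in T_j$), we get $\sum_{i\in[n']}|S_{i,-}'|\le\sum_{i\in[n]}|S_{i,-}|<3\ln(8/\delta)\cdot m$. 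By Markov, the number of indices $i\in[n']$ with $|S_{i,-}'|>\delta m/4$ is at most $12\ln(8/\delta)/\delta$, a constant. Hence for all but a constant number of $i\in[n']$ — and thus for $n''=\Omega(n)$ of them, once $n$ is large enough — we have $|S_i'|=|S_{i,+}'|+|S_{i,-}'|\le\delta m/4+\delta m/4=\delta m/2$.

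Then I would invoke (the restricted-code version of) \cref{lem:LDC-reduction}: for each such index $i$, since $|S_i'|\le\delta m/2$, there is a local decoder $D_i$ for bit $i$ of $C_{J|\rho}$ making at most $2$ queries and succeeding with probability $\ge 1/2+\eps$ against any Hamming error pattern of weight at most $\delta m/2$. Restricting the message space of $C_{J|\rho}$ further to the $n''$ good coordinates (fixing the others arbitrarily, which only further eliminates codeword bits and preserves the relevant structure), we obtain a genuine $(2,\delta/2,1/2+\eps)$-Hamming LDC $C'\colon\zo^{n''}\to\zo^{m'}$ with $m'\le m$ and $n''=\Omega(n)$. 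Applying \cref{thm:two-query-lb} with the constants $\delta/2,\eps$ yields $m\ge m'\ge 2^{c n''-1}=2^{\Omega_{\delta,\eps}(n)}$, which is the desired bound. (A minor point: one should verify that the relaxed decoding guarantee of $C$ still holds at the required error radius after the restriction; this is immediate since $\Dec$ never needs to read a codeword bit that became constant, so the decoder's behavior and success probability carry over verbatim to $C_{J|\rho}$.)

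The main obstacle — and the conceptual heart of the argument — is the bookkeeping around $S_i'$ versus the "true" set of codeword bits fixable by $x_i$ in the restricted code. As flagged in the overview and in the footnote, the random restriction may create new fixable bits, so $S_i'$ is not literally the $S_i$-set of $C_{J|\rho}$. The reason this does not break the proof is that \cref{clm:bot-fix}, \cref{clm:fixable-or-useless}, and \cref{lem:conditional-zero} (and hence \cref{lem:LDC-reduction}) only require that every query set $\{j,k\}\in F_i^{\ge1}$ has both coordinates in the chosen set $S$, and that the decoder's output is insensitive to bits outside $S$ conditioned on not reading $S$ — both of which hold for $S=S_i'$ as long as $S_i'$ contains the genuine fixable-bit set, which it does not automatically. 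Thus the careful point to get right is that one should take $S$ in \cref{lem:LDC-reduction} to be the genuine set of fixable codeword bits of $C_{J|\rho}$, call it $\widehat S_i$, and separately bound $|\widehat S_i|\le\delta m/2$; since fixing message bits can only make more codeword bits constant or fixable but the newly-fixable ones that are not already constant must have had a small "fixing witness set" that grew — one argues $\widehat S_i\subseteq S_i'\cup A'$ for a small correction, or more cleanly, re-runs the double counting directly on $C_{J|\rho}$. I would present this by defining everything with respect to the restricted code from the start and noting that the constant-to-zero codeword bits are irrelevant to both the decoder and the distance, so the whole machinery transfers; the only quantitative input needed is $|\widehat S_i|\le\delta m/2$ for $\Omega(n)$ indices, which the $W$/double-counting split delivers.
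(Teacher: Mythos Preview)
Your first three paragraphs are correct and match the paper's proof essentially line for line: apply \cref{lem:random-restriction}, bound $|S_{i,+}'|\le\delta m/4$ uniformly and $|S_{i,-}'|\le\delta m/4$ for all but $O_\delta(1)$ indices via double counting and Markov, then invoke \cref{lem:LDC-reduction} and \cref{thm:two-query-lb}.

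Your last paragraph, however, misdiagnoses the bookkeeping issue. You claim that \cref{lem:LDC-reduction} with $S=S_i'$ would require $S_i'$ to \emph{contain} the genuine fixable set $\widehat S_i$ of the restricted code, and then propose to switch to $\widehat S_i$ and bound its size instead. No such containment is needed, and switching to $\widehat S_i$ would actually create a gap (the double counting bounds $|S_i'|$, not $|\widehat S_i|$). The point is that the decoder for $C_{J|\rho}$ is the \emph{same} decoder as for $C$; hence \cref{clm:bot-fix} and \cref{clm:fixable-or-useless}, proved for the original code, apply verbatim to every query pair it makes. Among the surviving codeword bits $[m]\setminus A$ one has $S_i'=S_i\cap([m]\setminus A)$, so ``both queries outside $S_i'$'' coincides with ``both queries outside $S_i$'' and likewise for the mixed case. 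The proof of \cref{lem:LDC-reduction} therefore transfers with $S_i'$ in place of $S_i$ without any reference to $\widehat S_i$. The paper explicitly notes that $S_i'$ may fail to contain all bits fixable by $x_i$ in $C_{J|\rho}$, and that this is irrelevant.
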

\begin{proof}
    Let $C_{J|\rho}\colon \set{0,1}^{n'}\rightarrow \set{0,1}^{m'}$ be the restricted code where $J|\rho$ is given by \cref{lem:random-restriction}, and $A \subseteq [m]$ be the set of codeword bits which get fixed to constants. We also let $S_i'\coloneqq S_i\setminus A$, $S_{i,-}'=S_{i,-}\setminus A$, $S_{i,+}'=S_{i,+}\setminus A$.
    
	Denote $T_j'\coloneqq \set{i \in [n'] \colon j \in S_i'}$. Since $S_i'\subseteq S_i$ for each $i$, we also have $T_j'\subseteq T_j$ for each $j$. In particular, for each $j \notin W' \subseteq W$, we have $|T_j'| \le |T_j| \le 3\ln(8/\delta)$. Therefore
	\begin{align*}
	\underset{i \in [n']}{\mathbb{E}}\left[|S_{i,-}'|\right] = \frac{1}{n'}\sum_{i=1}^{n'}|S_{i,-}'| = \frac{1}{n'}\sum_{j \in [m']\setminus W'}|T_j'| \le 3\ln(8/\delta)\cdot \frac{m'}{n'}.
	\end{align*}
	Therefore by Markov's inequality, 
	\begin{align*}
	\underset{i \in [n']}{\Pr}\left[|S_{i,-}'| > \delta m'/4 \right] \le \frac{12\ln(8/\delta)}{\delta n'} = O_{\delta}\tp{\frac{1}{n'}}.
	\end{align*}
	In other words, there exists $I \subseteq [n']$ of size $|I| \ge n'-O_{\delta}(1)$ such that $|S_{i,-}'| \le \delta m'/4$ for all $i \in I$. For any such $i \in I$, we have $|S_i'| = |S_{i,-}'| + |S_{i,+}'| \le \delta m'/4 + \delta m'/4 = \delta m'/2$. By \cref{lem:LDC-reduction}, we can view $C_{J|\rho}$ as a $(2,\delta/2,1/2+\eps)$-LDC for message bits in $I$ (for instance, we can arbitrarily fix the message bits outside $I$), where $|I| > n'-O_{\delta}(1) = \Omega(n)$. Finally, the statement of the proposition follows from \cref{thm:two-query-lb}. 
\end{proof}

\subsection{Lower bounds for adaptive 2-Query Hamming RLDCs} \label{subsec:adaptive-2qRLDC}
Now we turn to the actual proof, which still works for possibly adaptive decoders. Let $C$ be a weak $(2,\delta,1/2+\eps)$-RLDC with perfect completeness. We fix a relaxed decoder $\Dec$ for $C$. Without loss of generality, we assume $\Dec$ works as follows: on input $i \in [n]$, $\Dec(i,\cdot)$ picks the first query $j \in [m]$ according to a distribution $\+D_i$. Let $b \in \set{0,1}$ be the answer to this query. Then $\Dec$ picks the second query $k \in [m]$ according to a distribution $\+D_{i;j,b}$, and obtains an answer $b' \in \set{0,1}$. Finally, $\Dec$ outputs a random variable $X_{i;j,b,k,b'}\in \set{0,1,\perp}$. 

We partition the support of $\+D_i$ into the following two sets:
\begin{align*}
	F_i^{0} &\coloneqq \set{j \in \supp(\+D_i) \colon \forall b,b' \in \set{0,1}, k \in \supp(\+D_{i;j,b,k,b'}), \Pr[X_{i;j,b,k,b'} = \perp] = 0}, \\
	F_i^{>0} &\coloneqq \set{j \in \supp(\+D_i) \colon \exists b,b' \in \set{0,1}, k \in \supp(\+D_{i;j,b,k,b'}), \Pr[X_{i;j,b,k,b'}=\perp] > 0}.
\end{align*}

We will still apply the restriction guaranteed by \Cref{lem:random-restriction} to $C$. The sets $S_i$, $T_j$, $W$, $S_{i,-}$, $S_{i,+}$ (are their counterparts for $C_{J|\rho}$) are defined in the exact same way.

The following claim is adapted from \Cref{clm:bot-fix}.
\begin{claim} \label{clm:adaptive-bot-fix}
	$(\supp(\+D_i) \setminus S_i) \subseteq F_i^{0}$. 
\end{claim}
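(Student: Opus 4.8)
\textbf{Proof plan for Claim~\ref{clm:adaptive-bot-fix}.} The plan is to mimic the argument of \Cref{clm:bot-fix}, but now being careful that the decoder is adaptive and therefore the ``decoding function'' is no longer a single deterministic $f^i_{j,k}\colon \zo^2 \to \set{0,1,\perp}$. Fix $i \in [n]$ and take any first query $j \in \supp(\+D_i)$ with $j \notin S_i$; the goal is to show $j \in F_i^{0}$, i.e., that $\perp$ is never output with positive probability along any branch of the computation that starts by querying $j$. Since $j \notin S_i$, the Boolean function $C_j$ is \emph{not} fixable by $x_i$, meaning that neither $C_j\restriction_{x_i=0}$ nor $C_j\restriction_{x_i=1}$ is constant. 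Hence there exist partial assignments $\sigma_{00},\sigma_{01},\sigma_{10},\sigma_{11} \in \zo^{n-1}$ to $\*x_{-i}$ with
\begin{align*}
	C_j(x_i=0,\*x_{-i}=\sigma_{00}) = 0, \quad C_j(x_i=1,\*x_{-i}=\sigma_{01}) = 0, \\
	C_j(x_i=0,\*x_{-i}=\sigma_{10}) = 1, \quad C_j(x_i=1,\*x_{-i}=\sigma_{11}) = 1,
\end{align*}
exactly as in the non-adaptive proof. Let $C_{00},C_{01},C_{10},C_{11}$ be the corresponding codewords.

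\textbf{Key steps.} First I would observe that for a fixed first query $j$ and a fixed answer bit $b$, the rest of the decoder's behaviour is entirely captured by the family $\{(\+D_{i;j,b}, X_{i;j,b,k,b'})\}$; in particular, once we condition on the first query being $j$ and the received string being one of the four codewords above, perfect completeness pins down the decoder's final output. Concretely: when $\Dec(i,\cdot)$ is run on $C_{00}$ and its first query is $j$, it reads $b = (C_{00})_j = 0$ and must ultimately output $x_i = 0$ with probability $1$ (perfect completeness). Running on $C_{10}$, the first query $j$ again returns $b = (C_{10})_j = 1$, and the decoder must output $x_i = 0$ with probability $1$. Running on $C_{01}$, first query $j$ returns $b=0$, and the decoder must output $x_i = 1$ with probability $1$; running on $C_{11}$, first query $j$ returns $b = 1$ and the decoder must output $x_i = 1$ with probability $1$. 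Thus, conditioned on the first query being $j$ and on the observed first answer being $b=0$, there are inputs forcing output $0$ \emph{and} inputs forcing output $1$, each with probability $1$; likewise for $b=1$. Now I would argue that this already exhausts all ``room'' for $\perp$: if some branch $(j,b,k,b')$ had $\Pr[X_{i;j,b,k,b'}=\perp] > 0$, then, because the second query $k$ and the second answer $b'$ must be consistent with \emph{some} codeword among the four (indeed we can choose $\sigma_{b\cdot}$ realizing any desired value of any other codeword coordinate — since $C_j\restriction_{x_i=b}$ is non-constant, for each target bit $(b')$ at coordinate $k$ and each target output we have freedom to pick the remaining coordinates of $\sigma$), perfect completeness would force a deterministic non-$\perp$ output on that branch, a contradiction. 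Making this last step precise is the crux: I need to verify that for \emph{every} $(b,k,b')$ in the relevant support there is a codeword among (an appropriately enlarged family of) the $C_{\cdot}$'s that is simultaneously consistent with first query $j$ answered $b$, second query $k$ answered $b'$, and has a well-defined message bit $x_i$; then completeness kills $\perp$ on that branch.

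\textbf{Main obstacle.} The subtle point — and where I expect to spend the most care — is that in the adaptive setting the second query $k = k(i;j,b)$ depends on the first answer $b$, so I cannot get away with only four fixed codewords as in \Cref{clm:bot-fix}; I must argue that for each branch reached with positive probability I can find a codeword consistent with the \emph{entire} transcript so far. The clean way to do this is: fix $b$; since $C_j\restriction_{x_i=b}$ is non-constant, for any coordinate $k$ and any bit $b'$ that is actually attainable, i.e., $b' \in \{C_k(x): x_i=b, C_j(x)=b\}$ realized by some $x$, pick such an $x$; its message bit is $x_i = b$ which is a fixed value, so completeness forces $\Dec$ to output $b$ deterministically along this transcript, leaving no probability mass for $\perp$. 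The only transcripts not covered are those with a $b'$ that is \emph{not} attainable given $(x_i=b, C_j(x)=b)$ — but such a $(k,b')$ cannot be read by the decoder with positive probability when the true input is a genuine codeword consistent with first answer $b$, and handling that requires noting that $\+D_{i;j,b,k,b'}$-reachable transcripts on inputs of the form $C(x)$ with $x_i=b,C_j(x)=b$ are exactly the ones that matter. Once this consistency bookkeeping is in place, the claim follows: every positive-probability branch starting at $j \notin S_i$ outputs a deterministic bit by completeness, so $\Pr[X_{i;j,b,k,b'}=\perp]=0$ everywhere, i.e., $j \in F_i^{0}$.
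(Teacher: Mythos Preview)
Your setup is correct and matches the paper: the four codewords $C_{00},C_{01},C_{10},C_{11}$ (indexed so that the first subscript is the value of $C_j$ and the second is the value of $x_i$) are exactly what is needed. But your execution goes astray in two places, and you end up missing the one clean observation that makes the whole argument work.

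First, the parenthetical in your ``Key steps'' paragraph is false: from $j\notin S_i$ you only know $C_j\restriction_{x_i=0}$ and $C_j\restriction_{x_i=1}$ are non-constant; this gives you no control over $C_k$, so you cannot ``choose $\sigma_{b\cdot}$ realizing any desired value'' at coordinate $k$. Second, in your ``Main obstacle'' paragraph you conflate the first answer $b$ with the message bit $x_i$ when you restrict to codewords with ``$x_i=b$ and $C_j(x)=b$''. This is both conceptually wrong (the first answer is $C_j(x)$, not $x_i$) and leaves the other value of $b'$ uncovered, which is exactly the gap you then worry about.

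The paper's argument avoids all of this, and in particular shows that the four fixed codewords \emph{do} suffice. Fix $b\in\{0,1\}$ and any $k\in\supp(\+D_{i;j,b})$. Consider the two codewords $C_{b0}$ and $C_{b1}$ (both have $C_j=b$, with $x_i=0$ and $x_i=1$ respectively). Let $b_1'=(C_{b0})_k$ and $b_2'=(C_{b1})_k$. Perfect completeness forces $\Pr[X_{i;j,b,k,b_1'}=0]=1$ and $\Pr[X_{i;j,b,k,b_2'}=1]=1$. Since these are the \emph{same} random variable when $b_1'=b_2'$, we must have $b_1'\neq b_2'$, hence $\{b_1',b_2'\}=\{0,1\}$, and therefore $\Pr[X_{i;j,b,k,b'}=\perp]=0$ for \emph{both} $b'\in\{0,1\}$. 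No enlarged family of codewords, no attainability bookkeeping, no worry about which transcripts are reachable: the two codewords for each $b$ automatically cover both second-answer values for \emph{every} $k$ in the support. That is the step you are missing.
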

\begin{proof}
	Let $j \in \supp(\+D_i) \setminus S_i$ and we will show $j \in F_i^{0}$. By the definition of $S_i$, $j \notin S_i$ means that there are partial assignments $\sigma_{00}, \sigma_{01}, \sigma_{10}, \sigma_{11} \in \set{0,1}^{n-1}$ such that 
	\begin{align*}
	C_j\tp{\*x_{-i} = \sigma_{00}, x_i = 0} = 0, \quad C_j\tp{\*x_{-i} = \sigma_{01}, x_i = 1} = 0, \\
	C_j\tp{\*x_{-i} = \sigma_{10}, x_i = 0} = 1, \quad C_j\tp{\*x_{-i} = \sigma_{11}, x_i = 1} = 1,
	\end{align*}
	where $\*x_{-i}$ is defined as $\tp{x_t \colon t \in [n]\setminus\set{i}}$.
	
	Let $C_{00}, C_{01}, C_{10}, C_{11}$ be encodings of the corresponding assignments mentioned above. Consider an arbitrary query $k \in \supp(\+D_{i;j,0})$, and let $b_1', b_2'$ be the $k$-th bit of $C_{00}$ and $C_{01}$, respectively. We note that $X_{i;j,0,k,b_1'}$ is the output of $\Dec(i,C_{00})$ conditioned on the queries $j, k$, and $X_{i;j,0,k,b_2'}$ is the output of $\Dec(i,C_{01})$ conditioned on the queries $j, k$. Due to perfect completeness of $\Dec$, we have
	\begin{align*}
	\Pr[X_{i;j,0,k,b_1'}=0] = 1, \quad \Pr[X_{i;j,0,k,b_2'}=1]=1.
	\end{align*}
	Therefore, it must be the case that $b_1' \neq b_2'$, which implies that $\Pr[X_{i;j,0,k,b'}=\perp]=0$ for any $b' \in \set{0,1}$.
	
	An identical argument shows that $\Pr[X_{i;j,1,k,b'}=\perp]=0$ for any $k \in \supp(\+D_{i;j,1})$ and $b' \in \set{0,1}$. Thus we have shown $j \in F_i^{0}$.
\end{proof}

We remark that the above claim also implies $F_i^{>0} \subseteq S_i$, since $\supp(\+D_i)$ is a disjoint union of $F_i^{0}$ and $F_i^{>0}$. In other words, conditioned on the event that the first query $j$ is not contained in $S_i$, the decoder never outputs $\perp$.

The next claim is adapted from \Cref{clm:fixable-or-useless}.
\begin{claim} \label{clm:adaptive-fixable-or-useless}
Let $j \in \supp(\+D_i)\cap S_i$. For any $b \in \set{0,1}$ one of the following three cases occurs:
\begin{enumerate}
\item $\supp(\+D_{i;j,b}) \subseteq S_i$; 
\item For any $k \in \supp(\+D_{i;j,b}) \setminus S_i$, $\Pr[X_{i;j,b,k,0} = b] = \Pr[X_{i;j,b,k,1} = b] = 1$; 
\item For any $k \in \supp(\+D_{i;j,b}) \setminus S_i$, $\Pr[X_{i;j,b,k,0} = 1-b] = \Pr[X_{i;j,b,k,1} = 1-b] = 1$.
\end{enumerate}
\end{claim}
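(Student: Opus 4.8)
The plan is to mirror the proof of \cref{clm:fixable-or-useless}, but to track the \emph{first} query $j$ together with its answer $b$ in place of a fixed query pair. Fix $j \in \supp(\+D_i) \cap S_i$ and $b \in \set{0,1}$. Since $j \in S_i$, at least one of $C_j\restriction_{x_i=0}$, $C_j\restriction_{x_i=1}$ is a constant function; say $C_j\restriction_{x_i=0} \equiv \gamma$ for some $\gamma \in \set{0,1}$ (the case where only $C_j\restriction_{x_i=1}$ is constant is symmetric). We may assume $\supp(\+D_{i;j,b}) \setminus S_i \neq \varnothing$, as otherwise case~(1) holds; fix any $k$ in this set. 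Because $k \notin S_i$, neither $C_k\restriction_{x_i=0}$ nor $C_k\restriction_{x_i=1}$ is constant, so each takes both values $0$ and $1$.

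First suppose $b = \gamma$. For each $\beta \in \set{0,1}$ pick a partial assignment $\sigma_\beta \in \set{0,1}^{n-1}$ with $C_k(x_i=0,\*x_{-i}=\sigma_\beta)=\beta$, and let $C^{(\beta)}$ be the encoding of the message with $x_i=0$, $\*x_{-i}=\sigma_\beta$; then $(C^{(\beta)})_j = \gamma = b$ and $(C^{(\beta)})_k = \beta$. Perfect completeness forces $\Dec(i,C^{(\beta)})$ to output $0$ with probability $1$; conditioning on the positive-probability event that $\Dec(i,\cdot)$ first queries $j$ (receiving $b$) and then queries $k$ (receiving $\beta$) gives $\Pr[X_{i;j,b,k,\beta}=0]=1$. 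Since this holds for $\beta\in\set{0,1}$ and is independent of the choice of $k$, we obtain case~(2) when $\gamma=0$ (so $0=b$) and case~(3) when $\gamma=1$ (so $0=1-b$).

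Now suppose $b = 1-\gamma$. If $C_j\restriction_{x_i=1}$ is also constant and equals $b$, the same argument with assignments that set $x_i=1$ (which now force the $j$-th bit to be $b$) yields $\Pr[X_{i;j,b,k,\beta}=1]=1$ for $\beta\in\set{0,1}$, i.e.\ case~(2) if $b=1$ and case~(3) if $b=0$. If $C_j\restriction_{x_i=1}$ is constant and equals $\gamma$, then $C_j$ is a global constant and the answer $b$ never arises on an uncorrupted codeword; we may then assume without loss of generality---this only makes the decoder output $\perp$ more often and does not affect uncorrupted codewords---that $\Dec(i,\cdot)$ halts and outputs $\perp$ immediately after querying $j$ and receiving $b$, so $\supp(\+D_{i;j,b})=\varnothing$ and case~(1) holds. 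The remaining situation is $b=1-\gamma$ with $C_j\restriction_{x_i=1}$ non-constant: here we choose assignments with $x_i=1$ realizing $j$-th bit $b$, and the argument of the previous paragraph applies verbatim provided $C_k$ takes both values over the fiber $\set{\*x_{-i}\colon C_j(x_i=1,\*x_{-i})=b}$; if $C_k$ is \emph{constant} on this fiber, then every codeword whose $j$-th bit equals $b$ has the same $k$-th bit, so the second query is uninformative, and one must argue (via a normalization of the decoder's behavior on query outcomes inconsistent with every codeword) that this collapses back to case~(1).

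The main obstacle is exactly that last situation. Perfect completeness alone pins down only \emph{one} of the rows $X_{i;j,b,k,0}$, $X_{i;j,b,k,1}$; controlling the other requires observing that the corresponding answer pair $(b,\cdot)$ cannot occur on any codeword together with a careful WLOG-normalization of the decoder allowing such uninformative second queries to be treated as dummy repeats of $j$. Checking that this normalization preserves perfect completeness, the relaxed-decoding guarantee, and the query count---and that it does not interfere with the later steps (the adaptive analogs of \cref{lem:conditional-zero} and \cref{lem:LDC-reduction})---is where essentially all of the care is needed; everything else is a routine adaptation of \cref{clm:fixable-or-useless}.
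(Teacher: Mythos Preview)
Your approach mirrors the paper's, and in fact you have been more careful in one place: the paper's own proof tacitly uses $C_j\restriction_{x_i=1}\equiv 1-\gamma$ when treating $b=1-\gamma$, but that identity was only derived under the extra hypothesis that case~(1) fails for $b=\gamma$. So the ``last situation'' you isolate is a genuine gap, and it is shared by the paper's write-up.

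Your proposed patch, however, does not close it. First, the dummy-repeat normalization need not preserve relaxed decoding: on a corrupted word it replaces $X_{i;j,b,k,y_k}$ by the value forced (via perfect completeness) on the unique realizable second answer, and this can turn an output of $x_i$ or $\perp$ into $1-x_i$. Second, without \emph{some} prior normalization the claim as literally stated can fail. Take $i=1$, codeword bits $C_j=x_1\wedge x_2$ and $C_k=x_2$ (plus, say, $C_\ell=x_1$); then $j\in S_1$, $k\notin S_1$, and the perfectly-complete decoder that first queries $j$, on answer $1$ queries $k$ and outputs $y_k$, and on answer $0$ queries $\ell$ and outputs $y_\ell$, has $\supp(\+D_{1;j,1})=\{k\}\not\subseteq S_1$ while $X_{1;j,1,k,0}=0\neq 1=X_{1;j,1,k,1}$, so none of the three alternatives holds for $b=1$. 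A cleaner route is to pre-normalize the decoder to output $\perp$ whenever the observed answer sequence is inconsistent with every codeword (this preserves perfect completeness trivially and can only raise $\Pr[\Dec\in\{x_i,\perp\}]$), and then prove only the weaker conclusion actually needed downstream: conditioned on the first query landing in $S_i$, the output is determined by $y[S_i]$ \emph{up to possibly becoming $\perp$}. The adaptive analog of \cref{lem:LDC-reduction} must then be revisited with this weaker input, so your instinct that ``essentially all of the care'' lives here is exactly right---but the work still remains to be done.
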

\begin{proof}
Since $j \in S_i$, we may, without loss of generality, assume that $C_j \restriction_{x_i=0}$ is a constant function. Let us further assume $C_j \restriction_{x_i=0} \;\equiv 0$. The proofs for the other cases are going to be similar.

Suppose $\supp(\+D_{i;j,0}) \not\subseteq S_i$, and let $k \in \supp(\+D_{i;j,0}) \setminus S_i$. By the definition of $S_i$, $k \notin S_i$ means that there are partial assignments $\sigma_{00}, \sigma_{01} \in \set{0,1}^{n-1}$ such that
\begin{align*}
C_k(x_i=0, \*x_{-i}=\sigma_{00}) = 0, \quad C_k(x_i=0, \*x_{-i}=\sigma_{01}) = 1.
\end{align*}
Let $C_{00}$ and $C_{01}$ be the encodings of the corresponding assignments mentioned above. We note that $X_{i;j,0,k,0}$ and $X_{i;j,0,k,1}$ are the outputs of $\Dec(i,C_{00})$ and $\Dec(i,C_{01})$, respectively, conditioned on the queries $j$, $k$. Due to perfect completeness of $\Dec$, we must have 
\begin{align*}
	\Pr[X_{i;j,0,k,0} = 0] = \Pr[X_{i;j,0,k,1} = 0] = 1,
\end{align*}
since both $C_{00}$ and $C_{01}$ encode messages with $x_i = 0$.

Now we claim that $C_j\restriction_{x_i=1} \;\equiv 1$ must hold. Otherwise there is a partial assignment $\sigma_{10} \in \zo^{n-1}$ such that
\begin{align*}
C_j(x_i=1, \*x_{-i}=\sigma_{10}) = 0.
\end{align*}
Let $C_{10}$ be the encoding of this assignment, and let $b' \in \set{0,1}$ be the $k$-th bit of $C_{10}$. On the one hand, $X_{i;j,0,k,b'}$ is the output $\Dec(i,C_{10})$ conditioned on the queries $j$, $k$, and we have just established
\begin{align*}
	\Pr[X_{i;j,0,k,b'} = 0] = 1.
\end{align*} 
On the other hand, $\Dec(i,C_{10})$ should output $x_i=1$ with probability 1 due to perfect completeness. This contradiction shows that $C_j\restriction_{x_i=1} \;\equiv 1$. 

Similarly, suppose $\supp(\+D_{i;j,1}) \not\subseteq S_i$ and let $k \in \supp(\+D_{i;j,1}) \setminus S_i$, meaning that there are partial assignments $\sigma_{10}, \sigma_{11} \in \set{0,1}^{n-1}$ such that
\begin{align*}
C_k(x_i=1, \*x_{-i}=\sigma_{10}) = 0, \quad C_k(x_i=1, \*x_{-i}=\sigma_{11}) = 1.
\end{align*}
Let $C_{10}$ and $C_{11}$ be the corresponding encodings, and note that $X_{i;j,1,k,0}$ and $X_{i;j,1,k,1}$ are the outputs of $\Dec(i,C_{10})$ and $\Dec(i,C_{11})$, respectively, conditioned on the queries $j$, $k$. Perfect completeness of $\Dec$ implies
\begin{align*}
\Pr[X_{i;j,1,k,0} = 1] = \Pr[X_{i;j,1,k,1} = 1] = 1,
\end{align*}
since both $C_{10}$ and $C_{11}$ encode messages with $x_i = 1$.

So far we have shown that for any $b \in \set{0,1}$ such that $\supp(\+D_{i;j,b}) \not\subseteq S_i$, it holds that
\begin{align*}
	\forall k \in \supp(\+D_{i;j,b})\setminus S_i, \quad \Pr[X_{i;j,b,k,0}=b] = \Pr[X_{i;j,b,k,1}=b] = 1,
\end{align*}
provided that $C_j \restriction_{x_i = 0} \;\equiv 0$. In case of $C_j \restriction_{x_i = 0} \;\equiv 1$, we can use an identical argument to deduce that for any $b \in \set{0,1}$ such that $\supp(\+D_{i;j,b}) \not\subseteq S_i$, it holds that
\begin{align*}
\forall k \in \supp(\+D_{i;j,b})\setminus S_i, \quad \Pr[X_{i;j,b,k,0}=1-b] = \Pr[X_{i;j,b,k,1}=1-b] = 1.
\end{align*}
\end{proof}

Here is another way to view \cref{clm:adaptive-fixable-or-useless}: conditioned on the event that the first query $j$ is contained in $S_i$, either the second query $k$ is also contained in $S_i$, or the output $X_{i;j,b,k,b'}$ is independent of the answer $b'$ to query $k$. In either case, the decoder's output depends solely on the $S_i$-portion of the received string.

Once again, the conclusions of \cref{clm:adaptive-bot-fix} and \cref{clm:adaptive-fixable-or-useless} hold for $C_{J|\rho}$, with $S_i$ replaced by $S_i'$.

Finally, we are ready to prove \Cref{thm:main-2qRLDC}. We recall the Theorem below. 
\twoqrldcmain*
\begin{proof}
The proof is almost identical to the one for \Cref{prop:non-adaptive-2qRLDC}. First, we can show that there exists $I \subseteq [n']$ of size $|I| \ge n'-O_{\delta}(1) = \Omega(n)$ such that $|S_{i,-}'| \le \delta m/4$ for all $i \in I$, and hence $|S_i'|=|S_{i,-}'|+|S_{i,+}'|\le \delta m/2$. Second, similar to the proof of \cref{lem:LDC-reduction}, for each $i \in I$ we can construct a decoder $D_i$ for $x_i$ as follows. $D_i$ restarts $\Dec(i,\cdot)$ until it makes a first query $j \in [m']\setminus S_i'$. Then $D_i$ finishes simulating $\Dec(i,\cdot)$ and returns its output. With the help of \Cref{clm:adaptive-bot-fix} and \Cref{clm:adaptive-fixable-or-useless}, the same analysis in \Cref{lem:LDC-reduction} shows that $D_i$ never returns $\perp$, and that the probability of returning $x_i$ is at least $1/2+\eps$. Finally, the theorem follows from \cref{thm:two-query-lb}. 
\end{proof}

\section{Lower Bounds for Strong Insdel RLDCs}\label{sec:srldc}

In this section, we prove \Cref{thm:main-sinsdel RLDC}. We remind the readers that a strong $(q,\delta,\alpha,\rho)$-Insdel RLDC satisfies all 3 conditions in \cref{def:strongRLDC}, and here we are mainly interested in the case where $q$ is a constant and $\alpha=1/2+\beta$ for $\beta > 0$. In fact, \cref{thm:main-sinsdel RLDC} would still hold even without perfect completeness (i.e., Condition 1 in \cref{def:strongRLDC}), as our proof does not rely on this condition. A corollary to this observation is that essentially the same lower bound also holds for strong Insdel RLDCs with adaptive decoders. This is because the Katz-Trevison reduction from adaptive to non-adaptive decoders does preserve Condition 2 and 3 in \cref{def:strongRLDC}, with the same $\rho$ and mildly worse $\alpha=1/2+\beta/2^{q-1}$ (see \cref{foot:kt-obs}).

Our proof relies on the following result, which is implicit in \cite{blocki2021exponential}. \cite{blocki2021exponential} shows an exponential lower bound on the length of constant-query Insdel LDCs. The core of their argument is the construction of an error distribution $\+D$, whereby they derive necessary properties of the code to imply the exponential lower bound. As remarked in Section 4.1 of \cite{blocki2021exponential}, $\+D$ is oblivious to the decoding algorithm. That means their result holds even if the code is required to handle an error pattern much more innocuous than adversarial errors. This stronger statement allows us to define the notion of ``locally decodable on average against $\+D$'', which would otherwise not be well-defined if the adversary is adaptive to the decoding strategy. 
\begin{theorem}[\cite{blocki2021exponential}]\label{thm:lb-channel}
Let $\delta \in (0,1)$ be a constant. There exists a channel $\mathfrak{D}$ for $m$-bit strings with the following properties.
\begin{itemize}
\item For every $s \in \set{0,1}^m$, $\Pr_{s'\sim \mathfrak{D}(s)}[\ED\tp{s', s} > \delta\cdot 2m] < \negl(m)$.
\item Suppose $C \colon \set{0,1}^n \rightarrow \set{0,1}^m$ is a code which is locally decodable on average against $\mathfrak{D}$. Formally, there is a randomized algorithm $\Dec$ satisfying
\begin{align*}
	\forall i \in [n], \quad \Pr_{\substack{x \in \set{0,1}^n \\ y \sim \mathfrak{D}(C(x))}}\left[ \Dec(i, y) = x_i \right] \ge \frac{1}{2} + \eps,
\end{align*} 
where the probability is taken over the uniform random choice of $x \in \set{0,1}^n$, the randomness of $\mathfrak{D}$, and the randomness of $\Dec$. Furthermore, $\Dec$ makes at most $q$ non-adaptive queries into $y$ in each invocation. Then for every $q\ge 2$ there is a constant $\kappa = \kappa(q,\delta,\eps)$ such that
\begin{align*}
m = \exp\tp{\kappa \cdot n^{1/(2q-3)}}.
\end{align*}
\end{itemize}
\end{theorem}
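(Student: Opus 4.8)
The statement is a repackaging of the main lower bound of \cite{blocki2021exponential}, isolating the two features of that construction that we need in the proof of \cref{thm:main-sinsdel RLDC}: that the corrupting channel may be taken \emph{oblivious} to the decoder, and that it injects at most $\delta\cdot 2m$ edit errors except with negligible probability. The plan is therefore to recall their channel $\mathfrak{D}$, check these two features, and observe that the counting argument of \cite{blocki2021exponential} goes through unchanged once ``worst-case adversary'' is replaced by ``the fixed distribution $\mathfrak{D}$''.

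For the channel, following \cite{blocki2021exponential} I would take $\mathfrak{D}$ to apply to $C(x)$ a structured random deletion pattern of total size at most $\delta m$ with overwhelming probability --- for instance, cut $[m]$ into consecutive blocks and delete each one independently with a small constant probability (so that the expected number of deleted symbols is, say, $\delta m/4$), or, equivalently, sample the pattern from a fixed distribution engineered so that the $q$ query positions of any decoder are ``shifted'' in a spread-out fashion. Since the pattern is drawn from a distribution that does not look at $x$, at $C$, or at the decoder, $\mathfrak{D}$ is oblivious. The number of deleted symbols is a sum of independent bounded variables with small mean, so a Chernoff bound shows it exceeds $\delta m\le\delta\cdot 2m$ only with probability $\negl(m)$; whenever it does not, $\ED(\mathfrak{D}(s),s)$ is at most the number of deletions. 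This gives the first bullet, and the $\delta\cdot 2m$ slack is precisely what lets the relaxed-decoder argument in this section absorb the negligible bad event.

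For the lower bound, suppose $\Dec$ is a non-adaptive $q$-query algorithm with $\Pr_{x,\,y\sim\mathfrak{D}(C(x))}[\Dec(i,y)=x_i]\ge 1/2+\eps$ for every $i\in[n]$. The crux of \cite{blocki2021exponential} is that deletions shift queries: the indices that $\Dec(i,\cdot)$ reads in the received word correspond, in the uncorrupted codeword $C(x)$, to positions displaced by the (random) number of deletions before them, and the structure of $\mathfrak{D}$ is chosen so that this shift forces the decoder to recover $x_i$ from queries whose relative positions are themselves effectively random. Unwinding this, one extracts from $C$ a standard $2$-query Hamming LDC for messages of length $n'=n^{\Theta(1/q)}$ --- more precisely $n^{1/(2q-3)}$ --- with constant error tolerance and success probability bounded away from $1/2$, at which point \cref{thm:two-query-lb} yields $m\ge\exp(\Omega(n'))=\exp(\kappa\cdot n^{1/(2q-3)})$ for a constant $\kappa=\kappa(q,\delta,\eps)>0$; for $q=2$ the reduction is essentially the identity, $n'=n$, recovering $m=\exp(\Omega_{\delta,\eps}(n))$ directly. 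Crucially, the argument only ever refers to the per-coordinate success probability against the \emph{fixed} distribution $\mathfrak{D}$, so there is no circularity even when the underlying code's adversary is adaptive.

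The main obstacle is the reduction in the previous paragraph: one must pin down precisely how a random deletion pattern relocates the $q$ queried positions, control the joint distribution of the resulting ``effective'' codeword positions so that they are genuinely spread out rather than degenerate, and carry out the accounting carefully enough to extract the $1/(2q-3)$ exponent --- this is the technical heart of \cite{blocki2021exponential}. The only work beyond that paper is bookkeeping to confirm that its channel and analysis never depend on $\Dec$, and to phrase the edit-error guarantee with the $\delta\cdot 2m$ normalization used here.
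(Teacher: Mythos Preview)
The paper does not prove \cref{thm:lb-channel} at all: it is quoted as a result ``implicit in \cite{blocki2021exponential}'', with a one-paragraph explanation of why the channel there is oblivious to the decoder, and is then used as a black box in the proof of \cref{thm:main-sinsdel RLDC}. Your opening paragraph matches this exactly --- the theorem is a repackaging of \cite{blocki2021exponential}, and the two features to record are obliviousness and the high-probability edit-distance bound. If you had stopped there, your proposal would align perfectly with the paper.

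Where you go beyond the paper is in sketching the internals of \cite{blocki2021exponential}, and here the details are not quite right. Your description of the lower-bound argument --- ``one extracts from $C$ a standard $2$-query Hamming LDC for messages of length $n^{1/(2q-3)}$ \ldots\ at which point \cref{thm:two-query-lb} yields $m\ge\exp(\Omega(n'))$'' --- is not how \cite{blocki2021exponential} proceeds. Their argument is not a reduction to Hamming LDCs via \cite{KerenidisW04}; rather, it is a direct combinatorial analysis of how the random deletion pattern spreads the decoder's $q$ query positions across the original codeword, combined with a counting/entropy argument on the resulting ``good'' query tuples. The exponent $1/(2q-3)$ arises from that combinatorics, not from iterating a $2$-query bound. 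Your channel description (independent block deletions with a Chernoff bound on the total) is in the right spirit, but the mechanism you propose for the second bullet would not, as stated, produce the claimed bound. Since the present paper only needs the statement and not the proof, this does not affect anything downstream --- but if you intend to include a sketch, it should reflect the actual argument of \cite{blocki2021exponential} rather than a hypothetical reduction to \cref{thm:two-query-lb}.
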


As a side note, in the definition of Insdel LDCs in \cite{blocki2021exponential}, the decoder also has the length of the received string $y$ as an input. However, the channel $\mathfrak{D}$ constructed in \cite{blocki2021exponential} ensures that $|y|=m$ except with exponentially small probability, where $y \sim \mathfrak{D}(C(x))$. For this reason, we omit this extra input as it almost gives no information to the decoder.

The proof of Theorem~\ref{thm:main-sinsdel RLDC} consists of two steps, and they are captured by Lemma~\ref{lem:amplify} and Lemma~\ref{lem:reduce-to-ILDC} below. The first step is a straightforward confidence amplification step which boosts the success rate $\alpha$ by running the decoding algorithm multiple times. In the second step, we show that if $\alpha$ is sufficiently close to 1, a strong Insdel RLDC will imply a standard Insdel LDC that is decodable on average against the channel $\mathfrak{D}$ mentioned in Theorem~\ref{thm:lb-channel}, which is sufficient for deriving the exponential lower bound.

\begin{lemma} \label{lem:amplify}
Let $C \colon \set{0,1}^n \rightarrow \set{0,1}^m$ be a non-adaptive strong $(q,\delta,1/2+\beta,\rho)$-Insdel RLDC where $\beta > 0$. Then for any $\eps > 0$, $C$ is also a non-adaptive strong $(q\cdot \ln(1/\eps)/(2\beta^2), \delta, 1-\eps, \rho)$-Insdel RLDC.
\end{lemma}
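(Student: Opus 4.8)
\textbf{Proof proposal for \cref{lem:amplify}.}
The plan is a standard confidence–amplification argument: run the given relaxed decoder $\Dec$ independently many times and take a thresholded plurality vote. Concretely, set $t \coloneqq \ceil{\ln(1/\eps)/(2\beta^2)}$, and define a new decoder $\Dec'$ that, on input $i \in [n]$ and received word $y$, invokes $t$ independent copies of $\Dec(i,\cdot)$ (each with fresh randomness), obtaining outputs $o_1,\dots,o_t \in \set{0,1,\perp}$. For $b \in \set{0,1}$ let $N_b \coloneqq \abs{\set{\ell \in [t] \colon o_\ell = b}}$. If there is some $b \in \set{0,1}$ with $N_b > t/2$ then $\Dec'$ outputs that (necessarily unique) $b$; otherwise $\Dec'$ outputs $\perp$. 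Since each copy of $\Dec$ makes at most $q$ non-adaptive queries and the copies are run independently, $\Dec'$ makes at most $qt = q\ceil{\ln(1/\eps)/(2\beta^2)}$ queries and is non-adaptive (its query distribution is a product of $t$ copies of $\Dec$'s query distribution, each independent of $y$).

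The three properties are then verified as follows. For \emph{perfect completeness}: if $y = C(x)$, perfect completeness of $\Dec$ forces $o_\ell = x_i$ for every $\ell$, so $N_{x_i} = t > t/2$ and $\Dec'$ outputs $x_i$. For \emph{relaxed decoding}: fix $i$ and $y$ with $\ED(C(x),y) \le \delta \cdot 2m$. The relaxed decoding guarantee of $\Dec$ gives $\Pr[\Dec(i,y) = 1-x_i] \le 1/2 - \beta$, so letting $N_{\mathrm{bad}}$ denote the number of copies outputting $1-x_i$ we have $\E[N_{\mathrm{bad}}] \le (1/2-\beta)t$, and $\Dec'$ outputs $1-x_i$ only if $N_{\mathrm{bad}} > t/2$. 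Since $N_{\mathrm{bad}}$ is a sum of $t$ independent $\set{0,1}$-valued random variables, Hoeffding's inequality gives
\begin{align*}
  \Pr[N_{\mathrm{bad}} > t/2] \le \Pr[N_{\mathrm{bad}} \ge \E[N_{\mathrm{bad}}] + \beta t] \le \exp(-2\beta^2 t) \le \eps,
\end{align*}
so $\Pr[\Dec'(i,y) \in \set{x_i,\perp}] \ge 1-\eps$. For \emph{success rate}: let $I_y \subseteq [n]$ with $\abs{I_y}\ge \rho n$ be the set guaranteed by the success-rate property of $\Dec$; we claim the same $I_y$ works for $\Dec'$. For $i \in I_y$ we have $\Pr[\Dec(i,y) = x_i] \ge 1/2+\beta$, so letting $N_{\mathrm{good}}$ be the number of copies outputting $x_i$ we get $\E[N_{\mathrm{good}}] \ge (1/2+\beta)t$, and $\Dec'$ outputs $x_i$ whenever $N_{\mathrm{good}} > t/2$; Hoeffding's inequality again yields $\Pr[N_{\mathrm{good}} \le t/2] \le \exp(-2\beta^2 t) \le \eps$, hence $\Pr[\Dec'(i,y) = x_i] \ge 1-\eps$. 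This establishes that $C$ is a non-adaptive strong $(qt,\delta,1-\eps,\rho)$-Insdel RLDC, and $qt \le q\cdot\ln(1/\eps)/(2\beta^2)+q$ can be absorbed into the stated bound (or one simply defines $t$ as the non-rounded quantity and the inequalities weaken by a negligible amount).

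The argument is essentially routine, so there is no serious obstacle; the only points requiring a little care are (i) choosing a single combining rule that simultaneously preserves perfect completeness, never outputs the wrong bit except with probability $\eps$ (for property~2), and outputs the correct bit with probability $\ge 1-\eps$ on $I_y$ (for property~3) — the thresholded plurality-with-$\perp$ rule above does all three because ``outputting $1-x_i$'' and ``outputting $x_i$'' are each controlled by a one-sided tail bound on a plurality count; and (ii) noting that the set $I_y$ need not change, since it depends only on $\Dec$'s behaviour on $y$, not on the amplification. It may also be worth recording explicitly that the reduction preserves non-adaptivity, which matters for the downstream application of \cref{thm:lb-channel} via \cref{lem:reduce-to-ILDC}.
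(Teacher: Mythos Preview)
Your proposal is correct and takes essentially the same approach as the paper: both run the original decoder $T=\ln(1/\eps)/(2\beta^2)$ times independently, aggregate via a majority-threshold rule (outputting $b$ if more than half the runs output $b$, else $\perp$), and verify each of the three RLDC properties with a Chernoff/Hoeffding tail bound on the number of ``bad'' outputs. Your write-up is in fact slightly more explicit than the paper's (you spell out perfect completeness and non-adaptivity, and note the ceiling on $t$), but the argument is the same.
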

\begin{proof}
Let $\Dec$ be a relaxed local decoder for $C$. For some integer $T$ to be decided, consider the following alternative local decoder $\Dec_{T}$ for $C$. On input $(y, m, i)$, $\Dec_T$ independently runs $\Dec(y, m, i)$ for $T$ times, and obtains outputs $r_1, r_2,\dots, r_T \in \set{0,1,\bot}$. For $b \in \set{0,1,\bot}$ we denote 
\begin{align*}
S_b \coloneqq \set{t \in [T] \colon r_t = b}. 
\end{align*}
$\Dec_T$ outputs $0$ or $1$ if $|S_0| \ge T/2$ or $|S_1| \ge T/2$, respectively. Otherwise $\Dec_T$ outputs $\bot$. Clearly, $\Dec_T$ is non-adaptive if $\Dec$ is non-adaptive.

Now we prove the three properties of $\Dec_T$. Perfect completeness is easy to see. The relaxed decoding property is violated when $|S_{1-x_i}| \ge T/2$. By the Chernoff bound, this happens with probability at most $e^{-2\beta^2 T}$, since for each $t \in [T]$ we have 
\begin{align*}
\Pr[r_t = 1-x_i] = 1-\Pr[r_t \in \set{x_i, \perp}] \le 1 - \tp{\frac{1}{2} + \beta} = \frac{1}{2}-\beta.
\end{align*}

Let $I_y \subseteq [n]$ be the subset given by the third property of $\Dec$. That is, for each $i \in I_y$ and $t \in [T]$, we have 
\begin{align*}
\Pr[r_t = x_i] \ge \frac{1}{2}+\beta.
\end{align*}
Again, by the Chernoff bound we have $\Pr[|S_{x_i}| < T/2] < e^{-2\beta^2 T}$, for each $i \in I_y$.

Finally, we take $T = \ln(1/\eps)/(2\beta^2)$ which ensures $e^{-2\beta^2 T} \le \eps$. We note that $\Dec_{T}$ makes $q\cdot T = q\cdot \ln(1/\eps)/(2\beta^2)$ queries to $y$.
\end{proof}

\begin{lemma} \label{lem:reduce-to-ILDC}
Let $C \colon \set{0,1}^n \rightarrow \set{0,1}^m$ be a non-adaptive $(q,\delta,\alpha,\rho)$-Insdel RLDC. Suppose $\rho\alpha + (1-\rho)\alpha/2 = 1/2+\eps$ for some $\eps > 0$. Then there exists a non-adaptive decoder $\Dec$ and a subset $I \subseteq [n]$ of size at least $\eps n/2$ such that for every $i \in I$, we have 
\begin{align*}
	\Pr_{\substack{x \in \set{0,1}^n \\ y \sim \mathfrak{D}(C(x))}}\left[ \Dec(i, y) = x_i \right] \ge \frac{1}{2} + \frac{\eps}{4}.
\end{align*}
The probability is taken over the uniform random choice of $x \in \set{0,1}^n$, the randomness of $\mathfrak{D}$, and the randomness of $\Dec$.
\end{lemma}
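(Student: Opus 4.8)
The plan is to turn the given relaxed Insdel decoder $\Dec'$ for $C$ into a ``guessing'' decoder $\Dec$: on input $(i,y)$, run $\Dec'(i,y)$, output its answer if it lies in $\set{0,1}$, and output a uniformly random bit if it is $\bot$. Clearly $\Dec$ is non-adaptive and makes at most $q$ queries whenever $\Dec'$ does. The conceptual point is that the ``good set'' $I_y$ guaranteed by the success-rate property is allowed to depend on the corrupted word $y$, so $C$ need not be a standard Insdel LDC; what we will show instead is that, on average over a uniformly random message and the channel noise, $\Dec$ correctly recovers a bit with probability strictly above $1/2$ for a linear-sized, corruption-\emph{independent} set of coordinates.

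First I would fix an arbitrary pair $(x,y)$ with $\ED(C(x),y)\le\delta\cdot 2m$ and lower bound $\frac1n\sum_{i\in[n]}\Pr[\Dec(i,y)=x_i]$. For $i\in I_y$, the success-rate property gives $\Pr[\Dec'(i,y)=x_i]\ge\alpha$, and overriding $\bot$ with a random bit only increases this, so $\Pr[\Dec(i,y)=x_i]\ge\alpha$. For $i\notin I_y$ only the relaxed-decoding property is available: setting $a=\Pr[\Dec'(i,y)=x_i]$ and $b=\Pr[\Dec'(i,y)=\bot]$ we have $a+b\ge\alpha$, and after the random fill-in $\Pr[\Dec(i,y)=x_i]=a+b/2\ge(a+b)/2\ge\alpha/2$. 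Hence $\sum_i\Pr[\Dec(i,y)=x_i]\ge|I_y|\alpha+(n-|I_y|)\alpha/2$, and since this bound is increasing in $|I_y|$ and $|I_y|\ge\rho n$, averaging over $i$ gives $\frac1n\sum_i\Pr[\Dec(i,y)=x_i]\ge\rho\alpha+(1-\rho)\alpha/2=\tfrac12+\eps$.

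Next I would take expectations over $x$ uniform in $\set{0,1}^n$ and $y\sim\mathfrak{D}(C(x))$. By \cref{thm:lb-channel}, $\ED(C(x),y)>\delta\cdot 2m$ with probability at most $\negl(m)$; conditioning on the complementary event and crudely lower bounding the contribution of the bad event by $0$, the previous paragraph yields $\frac1n\sum_i\Pr_{x,y}[\Dec(i,y)=x_i]\ge(1-\negl(m))(\tfrac12+\eps)\ge\tfrac12+\tfrac\eps2$, the last inequality holding once $m$ is large enough (which is WLOG, since otherwise the bound claimed in \cref{thm:main-sinsdel RLDC} is vacuous). Finally a Markov-type averaging finishes the job: let $I=\set{i\in[n]:\Pr_{x,y}[\Dec(i,y)=x_i]\ge\tfrac12+\tfrac\eps4}$, and bound $\tfrac12+\tfrac\eps2\le\tfrac{|I|}{n}\cdot 1+\bigl(1-\tfrac{|I|}{n}\bigr)\bigl(\tfrac12+\tfrac\eps4\bigr)$, which rearranges to $|I|\ge\eps n/2$; by construction every $i\in I$ satisfies the required bound.

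I expect no genuine obstacle here; the only things to get right are the quantifier swap (making the $y$-dependent set $I_y$ disappear via averaging) and keeping the constants aligned — the $\alpha/2$ bound on ``bad'' coordinates, the $\tfrac\eps2$ surviving the channel's negligible failure event, and the $\tfrac\eps4$ surviving the final averaging — so that both the advantage and the size of $I$ come out exactly as stated. There is comfortable slack in each of these steps.
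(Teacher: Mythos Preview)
Your proposal is correct and follows essentially the same approach as the paper: define $\Dec$ by replacing $\bot$ with a random bit, lower bound the averaged success probability by $\rho\alpha+(1-\rho)\alpha/2-\negl(m)\ge 1/2+\eps/2$ using the success-rate and relaxed-decoding properties together with the channel's negligible failure probability, and then extract the set $I$ via the Markov-type averaging. Your write-up is in fact slightly more explicit than the paper's in separating the $i\in I_y$ and $i\notin I_y$ cases.
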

\begin{proof}
Let $\Dec_0$ be the relaxed decoder for $C$. The local decoder $\Dec$ will simulate $\Dec_0$ and output the result, except when $\Dec_0$ returns ``$\perp$'', $\Dec$ instead returns a uniform random bit. Clearly, $\Dec$ is non-adaptive if $\Dec_0$ is non-adaptive.

We note that $\mathfrak{D}$ introduces at most $\delta \cdot 2m$ insertions and deletions except with probability $\negl(m)$. Therefore by definition of strong Insdel RLDCs (specifically Condition 2 and 3 in \cref{def:strongRLDC}), for a random index $i \in [n]$, we have
\begin{align*}
    \Pr_{\substack{i \in [n] \\ x \in \set{0,1}^n \\ y \sim \mathfrak{D}(C(x))}}\left[ \Dec(i, y) = x_i \right] \ge \rho\alpha + \tp{1 - \rho} \cdot \frac{\alpha}{2} - \negl(m) \ge \frac{1}{2} + \frac{\eps}{2},
\end{align*}
for large enough $m$ (and thus $n$). The first inequality is because conditioned on $i \in I_y$ (which happens with probability $\ge \rho$ for any $y$), $\Dec(i, y)=x_i$ with probability $\alpha$; and conditioned on $i \notin I_y$ (which happens with probability $\le 1-\rho$), the random guess provides an additional success rate of $\alpha/2$. We will show that the following subset of indices has large density:
\begin{align*}
I = \set{i \in [n] \colon \Pr_{\substack{x \in \set{0,1}^n \\ y \sim \mathfrak{D}(C(x))}}\left[ \Dec(i, y) = x_i \right] \ge \frac{1}{2} + \frac{\eps}{4}}. 
\end{align*}
Denote by $p = |I|/n$ the density of $I$. We have
\begin{align*}
    \frac{1}{2} + \frac{\eps}{2} &\le \Pr_{\substack{i \in [n] \\ x \in \set{0,1}^n \\ y \sim \mathfrak{D}(C(x))}}\left[ \Dec(i, y) = x_i \right] \\
    &= \Pr[i \in I] \cdot \Pr_{\substack{x \in \set{0,1}^n \\ y \sim \mathfrak{D}(C(x))}}\left[ \Dec(i, y) = x_i \mid i \in I \right] + \Pr[i \notin I]  \cdot \Pr_{\substack{x \in \set{0,1}^n \\ y \sim \mathfrak{D}(C(x))}}\left[ \Dec(i, y) = x_i \mid i \notin I \right] \\
    &\le p\cdot 1 + (1-p)\cdot \tp{\frac{1}{2} + \frac{\eps}{4}} \\
    &\le \frac{1}{2} + \frac{\eps}{4} + \frac{p}{2}.
\end{align*}
It follows that $p \ge \eps/2$.
\end{proof}

Now we are ready to prove \Cref{thm:main-sinsdel RLDC}. We recall the theorem below.
\strirldcmain*
\begin{proof}
We first prove the bound for non-adaptive decoders. Taking $\eps = \rho/4$ in Lemma~\ref{lem:amplify}, we have that $C$ is also a non-adaptive strong $(q',\delta,\alpha', \rho)$-Insdel RLDC, where $q'=q\cdot \ln(4/\rho)/(2\beta^2)$ and $\alpha'=1-\rho/4$. Note that
\begin{align*}
	\rho\alpha' + \frac{(1-\rho)\alpha'}{2} = \frac{(1+\rho)\alpha'}{2} = \frac{(1+\rho)(1-\rho/4)}{2} \ge \frac{1}{2} + \frac{\rho}{4}.
\end{align*}
Thus we can apply Lemma~\ref{lem:reduce-to-ILDC} with $\eps = \rho/4$ to obtain a subset $I\subseteq [n]$ of size $|I| \ge \rho n/8$, and a decoder $\Dec$ satisfying
\begin{align*}
	\Pr_{\substack{x \in \set{0,1}^n \\ y \sim \mathfrak{D}(C(x))}}\left[ \Dec(i, y) = x_i \right] \ge \frac{1}{2} + \eps' \coloneqq \frac{1}{2}+\frac{\rho}{16} 
\end{align*}
for every $i \in I$. By \cref{thm:lb-channel}, for some constant $c_1=c_1(q,\delta,\beta,\rho)$ we have 
\begin{align*}
m \ge \exp\tp{\kappa(q',\delta,\eps') \cdot |I|^{1/(2q'-3)}} \ge \exp\tp{c_1 \cdot n^{1/(2q')}} = \exp\tp{c_1\cdot n^{\beta^2/(q \cdot \ln(4/\rho))}}
\end{align*}
as desired. 

For the adaptive case, we note that the proof does not rely on perfect completeness of the decoder (i.e., Condition 1 in \cref{def:strongRLDC}). Therefore, we can apply the Katz-Trevisan reduction (see \cref{foot:kt-obs}) to obtain a non-adaptive decoder for $C$ which satisfies Condition 2 and 3 in \cref{def:strongRLDC}, with the same $\rho$ and mildly worse $\alpha=1/2+\beta/2^{q-1}$. The proof argument presented in this section still applies to such a non-adaptive decoder, whereby we can derive the same lower bound except with $\beta$ replaced by $\beta/2^{q-1}$.
\end{proof}

We end this section with a remark on linear/affine weak 2-query insdel RLDCs. In \Cref{sec:2qrldc}, a transformation from RLDCs to standard LDCs was given for Hamming errors. This is done by fixing some message bits to 0 or 1, together with other modifications to the decoding algorithm. Here the code will remain affine if the initial code is linear or affine. One key step in the analysis is using the perfect completeness condition to deduce structural properties about the queries. We note that the same argument would yield the same query structure for weak insdel RLDCs. Altogether, this allows us to use the impossibility result for affine 2-query insdel LDCs \cite{blocki2021exponential} to conclude the following.
\begin{corollary} \label{cor:linear-2qIRLDC}
For any linear or affine weak $(2,\delta,\eps)$ insdel RLDC $C \colon \set{0,1}^n \rightarrow \set{0,1}^m$, we have $n=O_{\delta,\eps}(1)$.
\end{corollary}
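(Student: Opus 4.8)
The plan is to re-run the chain of reductions behind \cref{thm:main-2qRLDC} and to check that every step remains valid under the edit-distance metric and, when the starting code is linear or affine, outputs a code that is again affine; the only substantive change is the last step, where in place of \cref{thm:two-query-lb} we invoke the non-existence result for affine $2$-query Insdel LDCs of \cite{blocki2021exponential}, which yields the stronger conclusion $n = O_{\delta,\eps}(1)$ rather than just an exponential bound on $m$.

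First, observe that the structural claims used in \cref{sec:2qrldc} --- \cref{clm:adaptive-bot-fix} (if the first query lands outside $S_i$ the decoder never outputs $\perp$) and \cref{clm:adaptive-fixable-or-useless} (conditioned on the first query lying in $S_i$, the output depends only on the $S_i$-portion of the received word) --- are assertions about the behaviour of $\Dec$ on \emph{uncorrupted} codewords $C(x)$, where perfect completeness alone pins down the output; their proofs never mention the distance between two strings. They therefore transfer verbatim to a weak $(2,\delta,\eps)$-Insdel RLDC. Likewise \cref{lem:random-restriction} and the subsequent double-counting step (which produces a restriction $C_{J|\rho}$ on $n' \ge n/6$ free message bits and a set $I \subseteq [n']$ with $|S_i'| \le \delta m'/2$ for all $i \in I$ and $|I| \ge n' - O_\delta(1) = \Omega(n)$) are purely combinatorial and carry over unchanged. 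Moreover affinity is preserved throughout: fixing the message bits in $\overline J$ to the constants $\rho$, deleting the codeword coordinates that thereby become constant, and finally fixing the message bits outside $I$, all map an affine code to an affine code. Hence we obtain an affine code $C' \colon \set{0,1}^{|I|} \rightarrow \set{0,1}^{m'}$ with $|I| = \Omega(n)$, together with, for every $i \in I$, the decoder $D_i$ that restarts $\Dec(i,\cdot)$ until its first query leaves $S_i'$ and then returns its output; by \cref{clm:adaptive-bot-fix} this $D_i$ makes at most $2$ queries and never outputs $\perp$.

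It remains to port \cref{lem:conditional-zero} and \cref{lem:LDC-reduction} to the edit-distance setting, i.e.\ to show $D_i$ recovers $x_i$ with probability $\ge 1/2+\eps$ on every word within a small constant relative edit distance of $C'(x)$. The polarization idea is unchanged: conditioned on the first query lying in $S_i'$ the output depends only on the $S_i'$-coordinates, so overwriting those coordinates by the corresponding coordinates of an encoding of a message whose $i$-th bit is flipped drives the conditional success probability on that event to $0$ by perfect completeness, which forces the overall relaxed-decoding guarantee to be carried entirely by the complementary event, on which $D_i$ agrees with $\Dec(i,\cdot)$ and outputs no $\perp$. The bookkeeping is the only thing that changes: overwriting $|S_i'| \le \delta m'/2$ bits costs at most $2|S_i'| \le \delta m'$ insertions and deletions, so one must begin from a word within relative edit distance roughly $\delta/2$ of $C'(x)$ to remain inside the relaxed-decoding radius after the overwrite. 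This shows $C'$ is an affine $(2, \Omega_\delta(1), 1/2+\eps)$-Insdel LDC; applying the impossibility result for affine $2$-query Insdel LDCs of \cite{blocki2021exponential} bounds $|I|$ by a constant $n_0 = n_0(\delta,\eps)$, and since $|I| = \Omega(n)$ we conclude $n = O_{\delta,\eps}(1)$.

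The step I expect to be the main obstacle is this last porting of \cref{lem:conditional-zero}: unlike in the Hamming case a corrupted received word need not have length $m'$, so ``overwrite the $S_i'$-coordinates'' must be set up carefully (this is presumably what is meant by the ``other modifications to the decoding algorithm''), and one has to keep track of the factor-$2$ cost of simulating each substitution by a deletion plus an insertion, so that the extracted code $C'$ still tolerates a genuine constant fraction of \emph{insdel} errors --- which is exactly what makes \cite{blocki2021exponential} applicable and, unlike a Hamming lower bound, yields $n = O(1)$.
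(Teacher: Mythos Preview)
Your proposal is correct and follows essentially the same approach as the paper, whose entire ``proof'' is the brief remark immediately preceding the corollary: the structural claims derived from perfect completeness carry over to the insdel setting, the restriction step preserves affinity, and one then invokes the affine $2$-query insdel LDC impossibility result of \cite{blocki2021exponential}. Your write-up is in fact more detailed than the paper's sketch, and you correctly flag the one technical wrinkle (received-word length and the factor-$2$ edit cost of a substitution) that the paper leaves implicit.
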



\section{Weak Insdel RLDC and Strong Insdel RLCC Constructions}\label{sec:wrLDC}
We prove \cref{thm:main-rildc-construction,thm:main-rilcc-construction} in this section.
As a reminder, a weak $(q, \delta, \alpha)$-Insdel RLDC satisfies the first two conditions of \cref{def:strongRLDC}.
Our constructions will have constant locality $q$, constant error-tolerance $\delta$, and codeword length $m = O(n^{1+\gamma})$ for any $\gamma \in (0,1)$.

\paragraph{Notation.}
We introduce some additional notation we use throughout this section.
We say that a set of integers $I \subset \mathbb{Z}$ of size $n$ is an \emph{interval} if $I = \{a, a+1, \dotsc, a+(n-1)\}$ for some integer $a$.
For two strings $x, y \in \zo^*$, we let $\lcs(x,y) \in \zo^*$ denote the \emph{longest common sub-sequence} between $x$ and $y$. 
We also let $\LCS(x,y)$ denote a matching between $x$ and $y$ given by $\lcs(x,y)$. 
Formally, $\LCS(x,y)$ denotes a sequence of tuples $(i_1, j_1), \dotsc, (i_k, j_k)$, where $k = |\lcs(x,y)|$ satisfying $i_1 < i_2 < \dotsc < i_k \leq |x|$, $j_1 < j_2 < \dotsc j_k \leq |y|$, and $x_{i_\ell} = y_{j_\ell}$ for all $\ell \in [k]$.
Note that $\lcs(x,y)$ need not be unique, but we can always fix one.
It is well-known that $\ED(x,y) = |x| + |y| - 2 \cdot |\lcs(x,y)|$. 


\subsection{Encoding Algorithm}\label{sec:wlrdc-enc}
The main ingredients of our encoding algorithm consist of an outer code $\Cout \colon \set{0,1}^n \rightarrow \set{0,1}^{k}$ and an inner code $\Cin$ which we shall view as a mapping $\Cin \colon [k] \times \set{0,1} \rightarrow \set{0,1}^{t}$. 
The encoding of $x \in \set{0,1}^n$ is given by 
\begin{align*}
	C(x) \coloneqq \Cin(1, y_1) \circ 0^t \circ \Cin(2, y_2) \circ 0^t \circ \dots \circ 0^t \circ \Cin(k, y_k)\; ,
\end{align*}
where each $y_j \in \set{0,1}$ is obtained by writing $\Cout(x)=y_1 \circ y_2 \circ \cdots \circ y_{k}$. 
Assuming the rate of $\Cin$ is a constant $\rin=\lceil 1+\log_2{k} \rceil /t > 0$, the length of this encoding $C(x)$ is $m\coloneqq (2k-1) t = \Theta(k\log(k))$. 
We note that the concatenation introduces buffers of 0s (i.e., the string $0^t$), which slightly deviates from the standard code concatenation for Hamming errors. 
We present our formal encoding algorithm in \cref{alg:encoder}.
Next we instantiate the concatenation framework by picking specific outer and inner codes.

\paragraph{The outer code.} We take the outer code $\Cout \colon \zo^n \rightarrow \zo^{k}$ to be a non-adaptive weak $(\qout,\dout,1/2 + \epsout)$-relaxed Hamming LDC given by \cref{lem:weak-ham-rldc} with $\gammaout < \gamma$ and $\epsout = 2\eps$, where $\gamma$ and $\eps$ are given in \cref{thm:main-rildc-construction}.
In particular, we have that $k = O(n^{1+\gammaout})$ and $k \log(k) = O(n^{1+\gamma})$.


\paragraph{The inner code.} We take the inner code $\Cin \colon [k] \times \zo \rightarrow \zo^t$ to be an insertion-deletion code (\ie, it is a non-local code) due to Schulman and Zuckerman \cite{SchZuc99}, given by \cref{lem:sz-code}.
In particular, $\Cin$ has the following properties:
\begin{enumerate}
	\item $\Cin$ has constant rate $\rin = 1/\beta > 0$, where $\beta$ is given in \cref{lem:sz-code}. 
	
	\item $\Cin$ has constant minimum (normalized) edit distance $\din \in (0,1/2)$. 
	
	\item For any interval $I \subseteq [t]$ with $|I|\ge 2$ and any $(j,y) \in [k]\times\set{0,1}$, it holds that $\textsf{wt}(\Cin(j,y)_{I}) \ge \floor{|I|/2}$, where $\textsf{wt}(\cdot)$ denotes the Hamming weight.
\end{enumerate}

	\begin{algorithm}[tp]
		\caption{Encoding algorithm $C$.}\label{alg:encoder}
		\SetKwInOut{Input}{Input}
		\SetKwInOut{Output}{Output}
		\SetKwInOut{Hardcoded}{Hardcoded}
		\Input{A message $x \in \zo^{n}$.}
		\Output{A codeword $c \in \zo^{m}$.}
		\Hardcoded{An outer encoder $\Cout \colon \zo^n \rightarrow \zo^k$ and an inner encoder $\Cin \colon [k] \times \zo \rightarrow \zo^t$.}
		
		Compute $y \coloneqq y_1 \circ y_2 \circ \cdots \circ y_k = \Cout(x)$.
		
		\ForEach{$j \in [k]$}{
			Compute $c_j \coloneqq \Cin(j, y_j)$.\label{line:inner-code-block}
		}
	
		Define $c \coloneqq c_1 \circ 0^t \circ c_2 \circ 0^t \circ \cdots \circ 0^t \circ c_k$. \label{line:final-codeword}
		
		\Return{c}
	\end{algorithm}

With our choices of outer and inner codes, we prove the following key lemma about the resulting concatenation code $C$ (\ie, \cref{alg:encoder}).
\begin{lemma}\label{lem:self-nonsimilarity}
	For any interval $I \subseteq [m]$ of length at most $(2-\din)t$, and any index $j \in [k]$, we have $\ED\tp{C[I], \Cin(j, 1-y_j)} \ge \din t/2$.
\end{lemma}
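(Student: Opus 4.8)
The plan is to analyze any interval $I \subseteq [m]$ of length at most $(2-\din)t$ and understand how it overlaps the structure of the codeword $C(x) = c_1 \circ 0^t \circ c_2 \circ 0^t \circ \cdots \circ 0^t \circ c_k$, where each $c_\ell = \Cin(\ell, y_\ell)$ is a length-$t$ SZ-codeword block and each $0^t$ is a length-$t$ buffer. Since $|I| \le (2-\din)t < 2t$, the substring $C[I]$ can intersect at most two of these consecutive length-$t$ segments (it cannot span three full segments, and in fact it touches at most two segments in total since three segments would have total length $3t > 2t$). So $C[I]$ is the concatenation of a suffix of one segment and a prefix of the next segment (one of which may be empty). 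The relevant segments are either (a) a codeword block $c_{\ell}$ followed by a buffer $0^t$, or (b) a buffer $0^t$ followed by a codeword block $c_{\ell+1}$. I would write $C[I] = u \circ v$ accordingly, where $u$ is a suffix of the first segment and $v$ a prefix of the second.

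**The core edit-distance lower bound.** The goal is to lower bound $\ED(C[I], \Cin(j,1-y_j)) = \ED(u \circ v, \Cin(j, 1-y_j))$. I would use the standard identity $\ED(a,b) = |a| + |b| - 2|\lcs(a,b)|$, equivalently it suffices to show that the longest common subsequence of $C[I]$ and $w := \Cin(j,1-y_j)$ is short: specifically $|\lcs(C[I], w)| \le |C[I]|/2 + |w|/2 - \din t/2$ would do, but it is cleaner to argue directly via edit distance and the triangle inequality. The key structural facts to invoke: (1) the SZ property from \cref{lem:sz-code} that every length-$\ge 2$ substring of an SZ-codeword block has Hamming weight at least half its length — so a buffer $0^t$ has tiny LCS with any SZ block (at most $t - \lfloor t/2 \rfloor \approx t/2$, since matching $\ell$ zeros into an SZ block requires a length-$\ge \ell$ sub-block containing $\ell$ zeros, forcing that sub-block to have length roughly $2\ell$); and (2) the minimum edit distance $\din$ of the inner code, which controls $\ED(c_\ell, w)$ when $c_\ell \ne w$, i.e., when $c_\ell = \Cin(\ell, y_\ell)$ and $w = \Cin(j, 1-y_j)$ differ as the $(1-y_j)$-bit is flipped relative to $y_\ell$ unless $\ell \ne j$ and $y_\ell = 1 - y_j$ (but even then they encode different indices, so they still differ — the inner code is injective, so $c_\ell \ne w$ always since $(\ell, y_\ell) \ne (j, 1-y_j)$ as the bit differs when $\ell = j$). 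So in all cases $\ED(c_\ell, w) \ge \din \cdot t$ (after appropriate normalization; $\din$ is the normalized minimum distance so the unnormalized distance between two length-$t$ inner codewords is $\ge \din \cdot (t + t)/... $ — I'd need to be careful with the normalization convention, but it gives $\Theta(\din t)$).

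**Assembling the bound and the main obstacle.** With $C[I] = u \circ v$, I would bound $\ED(u \circ v, w)$ from below. One clean route: $\ED(u \circ v, w) \ge \ED(\text{full first segment} \circ \text{full second segment}, w) - (\text{truncation cost})$, where the truncation cost of passing from the two full segments to $u \circ v$ is at most $2t - |I| \le 2t - (2-\din)t \cdot (\text{wait, we want } |I| \text{ large here})$ — actually the bound on $|I|$ goes the wrong way for this naive approach, so instead I would argue directly: at least one of the two segments making up $C[I]$ contributes a long piece. Since $|I| \le (2-\din)t$ and there are two segments of total length $2t$, at least one segment contributes a piece of length $\ge |I| - t \ge ...$; hmm, that's also not immediately clean. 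The better approach is: $C[I]$ contains either (i) a length-$\ge (1-\din)t$ prefix or suffix of a single codeword block $c_\ell$, in which case $\ED(C[I], w) \ge \din t/2$ follows because restricting $w$'s best alignment to that block-piece still costs $\Omega(\din t)$ by the inner code's distance combined with the fact that only a $(1-\din)$-fraction... or (ii) $C[I]$ is mostly buffer, in which case the Hamming-weight property of SZ codewords forces $\lcs(C[I], w)$ to be at most $\approx t/2 < t - \din t/2$, again giving the bound.

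I expect the main obstacle to be the careful bookkeeping in case (i): showing that a long (length $\ge (1-\din)t$) contiguous piece of an SZ codeword block $c_\ell$, when compared against the \emph{entire} codeword $w = \Cin(j, 1-y_j)$ of a possibly-different message pair, still has edit distance $\ge \din t /2$. This does not follow purely from the minimum distance of the inner code (which compares two \emph{full} codewords), so I would need to use the fact that truncating $c_\ell$ to a $(1-\din)$-fraction changes edit distance to $w$ by at most $2 \din t$, combined with $\ED(c_\ell, w) \ge 2\din t$ (appropriately normalized minimum distance of the SZ code — choosing the SZ code's minimum distance parameter large enough, or equivalently absorbing constants so that the gap survives). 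I would set up the constants so that $\ED(c_\ell, w) - (\text{truncation loss}) \ge \din t / 2$. For case (ii), I would make the threshold between the two cases precise (e.g., $C[I]$ contains $\ge (1-\din)t$ consecutive symbols of some block iff ...) and verify the buffer-heavy case using \cref{lem:sz-code}'s weight bound to cap the LCS, then convert to edit distance via the identity above. Throughout I would keep $\din < 1/2$ fixed so that $(2-\din)t$ and $\din t/2$ interact as needed, and present the two cases as a short case analysis.
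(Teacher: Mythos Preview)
Your proposal has two concrete gaps.

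First, the claim that $I$ touches at most two of the length-$t$ segments is wrong: an interval of length $<2t$ can start near the end of one segment, cover the next entirely, and end in a third (take $I=[t,2t+1]$). You therefore miss the configuration where $C[I]$ is (suffix of $c_{\ell}$)${}\circ 0^t \circ{}$(prefix of $c_{\ell+1}$), which is one of the two cases the paper singles out.

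Second, even in a genuine two-segment case your truncation/triangle-inequality route cannot reach $\din t/2$. Say $C[I]=u\circ 0^b$ with $u$ a suffix of $c_\ell$ of length $a$ and $w=\Cin(j,1-y_j)$. Since $\ED(c_\ell,C[I])\le (t-a)+b$, the triangle inequality yields only $\ED(C[I],w)\ge 2\din t-(t-a)-b$, and at $a=(1-\din)t$, $b=t$ (allowed, since then $|I|=(2-\din)t$) this is $(\din-1)t<0$. More generally, a dichotomy ``codeword piece long vs.\ buffer piece long'' with your two tools leaves an unbridgeable middle: the triangle-inequality bound needs $a$ very close to $t$, while the crude estimate $\lcs(C[I],w)\le a+\lceil t/2\rceil$ needs $a\le (1/2-\din/2)t$. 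The trouble is that appending $0^b$ to $u$ can genuinely increase the LCS with $w$, so the buffer piece is not pure loss.

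The missing idea is to work inside a fixed LCS matching $M$ between $C[I]$ and $w$ and track \emph{where in $[t]$} the codeword indices $I_c\subseteq I$ and buffer indices $I_b\subseteq I$ land. Because $M$ is monotone, their images $\Span(I_c)$ and $\Span(I_b)$ are disjoint subsets of $[t]$, so $|\Span(I_c)|+|\Span(I_b)|\le t$. The SZ weight property gives $|\Span(I_b)|\ge 2|M(I_b)|$ (each matched $0$ from a buffer forces an unmatched $1$ inside its span), while the inner-code distance gives $|M(I_c)|\le (1-\din)t$ when $I_c$ is a single interval, and $|I_c|\le(1-\din)t$ when $I_c$ is two intervals sandwiching a full buffer. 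Either way $2|M|\le (1-\din)t+t$, which is the bound you want. The span-disjointness is what lets you combine the two ingredients you correctly identified without double-counting the target string $w$.
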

\begin{remark}
	Note that for any interval $I$, we can lower bound the edit distance between $C[I]$ and $\Cin(j, 1-y_j)$ by $\abs{|C_I| - |\Cin(j,1-y_j)} = \abs{|I| - t}$. 
	For any $I$ of length greater than $(2-\din) t$, this implies a lower bound of $(1-\din) t \geq \din t / 2$ for all $\din \leq 1/2$.
\end{remark}
\begin{proof}
Consider an arbitrary LCS matching $M \coloneqq \LCS(C[I], \Cin\tp{j, 1-y_j})  \subseteq I \times [t]$ between $C[I]$ and $\Cin\tp{j,1-y_j}$. 
We prove that $|M| \le (1-\din/2)t$, from which the lemma follows. 

We introduce some notation first.
Given our LCS matching $M$ and a interval $J \subseteq I$, we let $M(J) \coloneqq \{ i \in J \colon \exists j \in [t], (i,j) \in M\}$ denote the set of indices in $J$ that exist in the matching $M$. 
Let $N(J) \coloneqq \set{j \in [t] \colon \exists i \in M(J), (i,j) \in M} \subseteq [t]$ the set of indices in $[t]$ which are matched with $M(J)$. 
Finally, we denote by $\textsf{Span}(J)$ the smallest interval that covers the set $N(J)$. 
We extend all of the definitions to unions of intervals as follows:\footnote{$\textsf{Span}(J_1\cup J_2)$ may not be well-defined if $J_1\cup J_2$ is itself an interval. However, in this paper we will only use this notation for disjoint and non-adjacent $J_1$ and $J_2$, in which case there is a unique way to partition $J_1 \cup J_2$ into disjoint intervals.}
\begin{align*}
	M(J_1 \cup J_2) = M(J_1) \cup M(J_2), \ N(J_1 \cup J_2) = N(J_1) \cup N(J_2), \ \textsf{Span}(J_1 \cup J_2) = \textsf{Span}(J_1) \cup \textsf{Span}(J_2).
\end{align*}
See \Cref{fig:wrldc-notation} for a pictorial overview of $M(J)$, $N(J)$, and $\Span(J)$.
Since $M$ is a monotone matching, we have that $J_1 \cap J_2 = \varnothing$ implies $\Span(J_1) \cap \Span(J_2) = \varnothing$.
We also have $|M(J)|=|N(J)|\le|\textsf{Span}(J)|$.

\begin{figure}[tp]
\begin{center}
\begin{adjustbox}{max width={\linewidth}}
\begin{tikzpicture}[>={stealth}, square/.style={regular polygon,regular polygon sides=4}]
	\tikzset{
		lsquare/.style = 
			{font=\ttfamily\LARGE, draw, square, minimum width=3.5em, text=black},
		xlabel/.style = 
			{label=90:{\Large#1}},
		ylabel/.style = 
			{label=-90:{\Large#1}},
		mj/.style = 
			{preaction={fill=gray, fill opacity=0.3}, pattern=north west lines, pattern color=black!80!white},
		nj/.style = 
			{preaction={fill=gray, fill opacity=0.3}, pattern=north east lines, pattern color=black!80!white},
	}
	\node[lsquare, xlabel=$x_{1}$] at (0,0) (x1) {z};
	\node[lsquare, right = of x1, xlabel=$x_{2}$, mj] (x2) {a};
	\node[lsquare, right = of x2, xlabel=$x_{3}$, mj] (x3) {b};
	\node[lsquare, right = of x3, xlabel=$x_{4}$] (x4) {c};
	\node[lsquare, right = of x4, xlabel=$x_{5}$] (x5) {x};
	\node[lsquare, right = of x5, xlabel=$x_{6}$] (x6) {d};
	
	\node[lsquare, below = 1.5 of x1, ylabel=$y_1$, nj] (y1) {a};
	\node[lsquare, right = of y1, ylabel=$y_2$] (y2) {e};
	\node[lsquare, right = of y2, ylabel=$y_3$] (y3) {b};
	\node[lsquare, right = of y3, ylabel=$y_4$, nj] (y4) {b};
	\node[lsquare, right = of y4, ylabel=$y_5$] (y5) {c};
	\node[lsquare, right = of y5, ylabel=$y_6$] (y6) {d};
	
	\draw[thick] (x2.south) -- (y1.north);
	\draw[thick] (x3.south) -- (y4.north);
	\draw[thick] (x4.south) -- (y5.north);
	\draw[thick] (x6.south) -- (y6.north);
	
	\node[left = of x1, align=left] (list) {$M = \{ (2,1), (3,4), (4,5), (6,6) \}$ \\ \\ $I = \{1,2,3,4,5,6\}$, $J = \{2, 3\}$ \\ \\ $M(J) = \{ 2, 3 \}$, $N(J) = \{1,4\}$ \\ \\ $\Span(J) = \{1,2,3,4\}$}; 
	
	\node[below = 0.5 of list.south west, anchor=north west] (MJ) {Element of $M(J)$:};
	\node[right = 0.1 of MJ, draw, square, mj, minimum width=1.7em] {};
	\node[below = 0.1 of MJ] (NJ) {Element of $N(J)$:};
	\node[right = 0.1 of NJ, draw, square, nj, minimum width=1.7em] {};
	
	\node[below = 0.2 of y1] (rechelp) {};
	\node[draw, rectangle, dashed, fit=(y1)(y4)(rechelp), label=-90:{$\Span(J)$}] {};
	
\end{tikzpicture}
\end{adjustbox}
\end{center}
\caption{Pictorial representation of $M(J)$, $N(J)$, and $\Span(J)$. Here, $x = x_1 \circ \cdots x_6$ and $y = y_1 \circ \cdots y_6$, $\lcs(x,y) = \texttt{abcd}$ and $M = \LCS(x,y) = \{ (2,1), (3,4), (4,5), (6,6) \}$. We take $I = \{1,\cdots, 6\}$ and $J = \{2,3\}$ as an example. In the figure, the edges between nodes represent the matching $M$.}\label{fig:wrldc-notation}
\end{figure}

Now we turn back to the proof. 
Since $|I|< 2t$, the interval $I$ spans across at most 2 buffers and/or codewords. 
It is convenient to partition $I$ into $I_b \cup I_c$, where $I_b$ and $I_c$ are unions of at most 2 intervals which correspond to buffers and codewords, respectively.  

We first show that $\abs{\textsf{Span}(I_b)} \ge 2\abs{M(I_b)}$. Intuitively, this is because the density of ``1'' is at least $1/2$ in any interval of an inner codeword, so every matched ``0'' in a buffer has to be accompanied with an insertion of ``1''. Formally, due to Property 3 of $\Cin$, we have
\begin{align*}
	\abs{\textsf{Span}(I_b)} \le 2\textsf{wt}\bigg(\Cin(j, 1-y_j)[\textsf{Span}(I_b)]\bigg) \le 2\abs{\textsf{Span}(I_b) \setminus N(I_b)} = 2\tp{\abs{\textsf{Span}(I_b)} - \abs{M(I_b)}},
\end{align*}
since any index $j \in N(I_b)$ is matched to an index in a buffer, which is necessarily a ``0''. 

We finish the proof in two cases.

\textit{Case 1:} $I_c$ is the union of two intervals. 
In this case, we can deduce that $I$ completely contains a buffer, \ie, $|I_b|=t$, and that $|I_c|\le |I|-|I_b|\le (1-\din)t$. 

Therefore, we have
\begin{align*}
	2|M| \le 2|M(I_c)| + 2|M(I_b)| \le |I_c| + \abs{\textsf{Span}(I_c)} + \abs{\textsf{Span}(I_b)} \le (1-\din)t + t = 2(1-\din/2)t.
\end{align*}
The last inequality is due to $\abs{\textsf{Span}(I_b)} + \abs{\textsf{Span}(I_c)} \le t$, since $\textsf{Span}(I_b), \textsf{Span}(I_c) \subseteq [t]$ are disjoint intervals.

\textit{Case 2:} $I_c$ is an interval.
In this case, note that $M \cap (I_c \times [t])$ corresponds to a common subsequence between $\Cin(j,1-y_j)$ and some other codeword $\Cin(j', y_{j'})$. 
The distance property of $\Cin$ implies $|M(I_c)| \le (1-\din)t$.
Similarly we can upper bound $|M|$ by
\begin{align*}
	2|M| = 2|M(I_c)| + 2|M(I_b)| \le (1-\din)t + \abs{\textsf{Span}(I_c)} + \abs{\textsf{Span}(I_b)} \le (1-\din)t + t = 2(1-\din/2)t.
\end{align*}
To conclude, we have $|M| \le (1-\din/2)t$ in both cases. It follows that $\ED\tp{C[I], \Cin(j,1-y_j)}\ge \din t/2$.
\end{proof}

\subsection{The Decoding Algorithm}\label{sec:wrldc-dec}
Our goal is to construct a relaxed decoder $\Dec$ that, given input an index $i \in [n]$ and oracle access to some binary string $w$ which is $\delta$-close to some codeword $C(x)$ in edit distance, outputs either the bit $x_i$ or $\bot$ with probability at least $1/2 + \eps$. 
We present our formal decoding algorithm in \cref{alg:decoder}.

In our construction, the decoding algorithm invokes the relaxed decoder for the outer code $\Cout$ while providing it with access to a simulated oracle $\widetilde{y} \in \set{0,1}^k$ using the oracle $w$ which is the corrupted codeword.
At a high level, the decoder operates as follows.
\begin{enumerate}
	\item The decoder ensures that the oracle $w$ has length $m$; else it outputs $\bot$.
	
	\item The decoder invokes $\Cout.\Dec(i)$.
	
	\item For each query $j \in [k]$ received from $\Cout.\Dec(i)$
	\begin{enumerate}
		\item The decoder computes codewords $\Cin(j, 0)$ and $\Cin(j,1)$, and additionally computes the first index $i_0 \in [t]$ such that $\Cin(j,0)[i_0] \neq \Cin(j,1)[i_0]$.
		
		\item The decoder samples $i_1,\dotsc, i_d \in [t]$ uniformly and independently at random.
		
		\item The decoder sets a bit $\tildey_j$ as follows. 
		If $w[ 2(j-1)t + i_\ell ] = \Cin(j,0)[i_\ell]$ for all $\ell \in \{0,1,\dotsc, d\}$, then set $\tildey_j = 0$; else if $w[ 2(j-1)t+ i_\ell ] = \Cin(j,1)[i_\ell]$ for all $\ell \in \{0,1,\dotsc, d\}$, then set $\tildey_j = 1$; else if neither case occurs, abort and output $\bot$.
		
		\item Answer $\Cout.\Dec(i)$ with bit $\tildey_j$ and await the next query.
	\end{enumerate}
	
	\item Output symbol $\tildex = \Cout.\Dec(i)$.
\end{enumerate}
\begin{remark}
    We choose to check that the received word has length $m$ to simplify the analysis.
    However, this is not necessary by the following observations.
    First, if $|w| < m$, if the decoder every queries a symbol $j > |w|$, then we assume the oracle returns $\bot$, at which point our decoder can abort and output $\bot$.
    Second, if $|w| > m$, then our decoder will simply ignore any bits beyond $w_m$.
\end{remark}
	\begin{algorithm}[tp]
		\DontPrintSemicolon
		\caption{Decoding algorithm $\Dec$.}\label{alg:decoder}
		\SetKwInOut{Input}{Input}
		\SetKwInOut{Output}{Output}
		\SetKwInOut{Hardcoded}{Hardcoded}
		\SetKwInOut{Oracle}{Oracle}
		\Input{An index $i \in [n$].}
		\Oracle{Bitstring $w \in \zo^{m'}$ for some $m' \in \mathbb{N}$.}
		\Output{A symbol $\tildex \in \set{0,1,\bot}$.}
		\Hardcoded{Parameter $d \in \mathbb{N}$, outer decoder $\Cout.\Dec \colon [k] \rightarrow \zo$, and inner encoder $\Cin \colon [k] \times \zo \rightarrow \zo^t$.}
		
		\lIf(\tcc*[f]{\textcolor{black}{Query Oracle.} Verify length of $w$.}\label{line:len-check}){$w[m+1] \neq \bot$ or $w[m] = \bot$}{\Return{$\tildex = \bot$}}
		
		\ForEach(\tcc*[f]{Handle adaptive decoding.}){oracle query $j \in [k]$ received from $\Cout.\Dec(i)$}{
			Compute $c_{j,0} = \Cin(j, 0)$ and $c_{j,1} = \Cin(j,1)$.
			
			Compute first index $i_0 \in [t]$ such that $c_{j,0}[i_0] \neq c_{j,1}[i_0]$.\label{line:perfect-completeness}
			
			Sample $d$ values $i_1, i_2, \dotsc, i_d \in [t]$ independently and uniformly at random.\label{line:index-sampling}
			
			Compute \tcc*[f]{\textcolor{black}{Query Oracle.} Check consistency with computed inner codewords.}\label{line:tildey}
			\begin{align*}
				\tildey_j \coloneqq \begin{cases}
					0 & \text{if } w[(j-1)\cdot 2t + i_\ell] = c_{j,0}[i_\ell]~ \forall \ell \in \{0,1,\dotsc, d\}\\
					1 & \text{if } w[(j-1)\cdot 2t + i_\ell] = c_{j,1}[i_\ell]~ \forall \ell \in \{0,1,\dotsc, d\}\\
					\bot & \text{otherwise}
				\end{cases}
			\end{align*}
		
			\lIf(\label{line:bot-check}){$\tildey_j = \bot$}{\Return{$\tildex = \bot$}.} 
		
			Answer query $j$ of $\Cout.\Dec(i)$ with value $\tildey_j$ and await the next query.
		}
	
		\Return{$\tildex = \Cout.\Dec(i) \in \set{0,1,\bot}$.}
	\end{algorithm}

\paragraph{Analysis of the Decoder.}
Clearly, the query complexity of \cref{alg:decoder} is $(d+1) \cdot \qout + 2$.\footnote{If $m$ is hard-coded in the decoder and the length $m'$ of the received word is additionally given as input, then $q = (d+1)\cdot \qout$.} 
We take $\delta\coloneqq \din\dout/128$.
Fix a message $x \in \zo^n$ and oracle string $w$ such that $\ED(C(x), w) \leq \delta \cdot 2m$.
Moreover, let $y = \Cout(x) \in \zo^k$.
For $j \in [k]$, we denote by $I_j$ the interval that correspond to $\Cin(j, y_j)$ in $C(x)$. Formally, 
\begin{align*}
	I_j \coloneqq \set{2(j-1)t+1, \dotsc, 2(j-1)t + t}.
\end{align*}



Let $\LCS \coloneqq \LCS(C(x), w)$.
Note that $|\LCS| \geq (1-\delta)m$ since $\ED(C(x), w) \leq \delta \cdot 2m$.
\begin{definition}
	We say $j\in [k]$ is \emph{dangerous} if $\ED\tp{w[I_j], \Cin(j, 1-y_j)} \le \din t/4$.
\end{definition}
We first show that if a block $j$ is not dangerous, then $\widetilde{y}_j = 1-y_j$ happens with small probability.
\begin{proposition}\label{prop:dangerous-bad}
	If $j$ is not dangerous, then $\Pr[\widetilde{y}_j = 1-y_j] \leq (1-\din/8)^d$.
\end{proposition}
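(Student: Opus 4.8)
The plan is to reduce the event $\{\widetilde{y}_j = 1-y_j\}$ to a statement about agreement between the received block and $\Cin(j,1-y_j)$ at uniformly random coordinates, and then translate the ``not dangerous'' hypothesis (an edit-distance bound) into a Hamming-distance bound, since both strings here have exactly length $t$.

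First I would unpack the definition of $\widetilde{y}_j$ in \cref{alg:decoder}. If $\widetilde{y}_j = 1-y_j$, then in particular $w[(j-1)\cdot 2t + i_\ell] = \Cin(j,1-y_j)[i_\ell]$ must hold for every $\ell \in \{0,1,\dotsc,d\}$. Dropping the constraint for the deterministic index $i_0$ only weakens the event, so
\begin{align*}
\Pr[\widetilde{y}_j = 1-y_j] \le \Pr\Big[\forall \ell\in[d]:\ w[(j-1)\cdot 2t+i_\ell] = \Cin(j,1-y_j)[i_\ell]\Big].
\end{align*}
Since $i_1,\dotsc,i_d$ are sampled independently and uniformly from $[t]$ (\cref{line:index-sampling}), the right-hand side factors as $p^d$, where $p \coloneqq \Pr_{i\sim[t]}\big[w[(j-1)\cdot 2t+i] = \Cin(j,1-y_j)[i]\big]$. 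Observing that the coordinates $\{(j-1)\cdot 2t+1,\dotsc,(j-1)\cdot 2t+t\}$ are exactly the interval $I_j$, we get $p = 1 - \HAM\big(w[I_j],\Cin(j,1-y_j)\big)/t$, using that both strings have length exactly $t$.

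Next I would bound $\HAM\big(w[I_j],\Cin(j,1-y_j)\big)$ from below. For any two equal-length strings $a,b$ one has $\ED(a,b) \le 2\,\HAM(a,b)$ (correct each mismatched position with one deletion and one insertion), so $\HAM\big(w[I_j],\Cin(j,1-y_j)\big) \ge \ED\big(w[I_j],\Cin(j,1-y_j)\big)/2$. Because $j$ is not dangerous, $\ED\big(w[I_j],\Cin(j,1-y_j)\big) > \din t/4$, hence $\HAM\big(w[I_j],\Cin(j,1-y_j)\big) > \din t/8$, and therefore $p < 1 - \din/8$. Combining with the display above gives $\Pr[\widetilde{y}_j = 1-y_j] \le p^d < (1-\din/8)^d$, as claimed.

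There is no real obstacle here: the argument is a short calculation. The only two points requiring a word of care are (i) that one may safely discard the deterministic index $i_0$ when forming the upper bound (it is only a correctness/perfect-completeness device), and (ii) the elementary inequality $\ED \le 2\,\HAM$ for equal-length strings, which is exactly where the hypothesis $|I_j| = t = |\Cin(j,1-y_j)|$ is used.
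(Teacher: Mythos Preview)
Your proof is correct and follows essentially the same approach as the paper: reduce to a per-coordinate agreement probability, use the inequality $\ED \le 2\,\HAM$ for equal-length strings together with the ``not dangerous'' bound to get $p < 1-\din/8$, and then exploit the independence of $i_1,\dotsc,i_d$ to conclude. The paper's version is slightly terser (it does not explicitly mention discarding $i_0$), but the argument is the same.
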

\begin{proof}
	It suffices to show that for a uniformly random $i \in [t]$, we have 
	\begin{align*}
		\Pr\left[w[2(j-1)t+i] = \Cin(j,1-y_j)[i]\right] \le 1-\din/8.
	\end{align*}
	Since $j$ is not dangerous, we have $\ED(w[I_j], \Cin(j,1-y_j))>\din t/4$. Therefore
	\begin{align*}
		\Pr[w[{2(j-1)t+i}] \neq \Cin(j,1-y_j)[i]] &= \frac{1}{t} \cdot \ham\tp{w[{I_j}], \Cin(j,1-y_j)}\\ 
		&\ge \frac{1}{2t} \cdot \ED(w[I_j], \Cin(j,1-y_j))
		> \frac{\din}{8}.\qedhere
	\end{align*}
\end{proof}

Now the key step is to upper bound the number of dangerous blocks.

\begin{lemma} \label{lem:dangerous-ub}
	The total number of dangerous blocks is at most $\dout k/2$.
\end{lemma}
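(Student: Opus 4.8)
The plan is to analyze a fixed longest common subsequence matching between the true codeword and the received word, and to charge every dangerous block against positions that this matching leaves unmatched, each charge being worth $\Omega(\din t)$ by \cref{lem:self-nonsimilarity}. We may assume $|w| = m$, since otherwise the decoder rejects at line~\ref{line:len-check} before inspecting any block. Fix the matching $M := \LCS(C(x), w)$, so that $\ED(C(x), w) = 2m - 2|M| \le 2\delta m$, whence $|M| \ge (1-\delta)m$; in particular $M$ leaves at most $\delta m$ positions of $C(x)$ unmatched and at most $\delta m$ positions of $w$ unmatched. For each block $j \in [k]$ recall that $I_j = \{2(j-1)t+1, \dots, 2(j-1)t+t\}$ is simultaneously the location of $\Cin(j, y_j)$ inside $C(x)$ and the window of $w$ the decoder reads for block $j$. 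Let $A_j \subseteq [m]$ be the set of $C(x)$-positions that $M$ matches into $I_j$, set $u_j := t - |A_j|$ (the number of positions of $w[I_j]$ left unmatched), and, when $A_j \ne \varnothing$, let $\mathrm{span}_j := [\min A_j, \max A_j]$ with $\ell_j := |\mathrm{span}_j|$, and let $v_j$ be the number of $C(x)$-positions inside $\mathrm{span}_j$ that $M$ leaves unmatched.

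The first step is a structural observation about $M$. Because $M$ is monotone and $I_j$ is an interval, every $C(x)$-position lying inside $\mathrm{span}_j$ that $M$ matches is in fact matched into $I_j$, hence belongs to $A_j$; thus $\mathrm{span}_j$ is the disjoint union of $A_j$ with its $v_j$ unmatched $C(x)$-positions, giving $\ell_j = |A_j| + v_j = (t - u_j) + v_j$, i.e.\ $\ell_j - t = v_j - u_j$. Monotonicity also makes the intervals $\mathrm{span}_1, \dots, \mathrm{span}_k$ pairwise disjoint, so $\sum_j v_j \le \delta m$, while $\sum_j u_j \le \delta m$ because the $I_j$ are disjoint. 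Finally, the pairs of $M$ whose $w$-coordinate lies in $I_j$ form a common subsequence of $w[I_j]$ and $C(x)[\mathrm{span}_j]$ of length $|A_j|$, so $\ED(w[I_j], C(x)[\mathrm{span}_j]) \le t + \ell_j - 2|A_j| = (\ell_j - t) + 2u_j = v_j + u_j$.

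The second step shows that a dangerous block $j$ must satisfy $v_j + u_j \ge \din t / 4$. If $\ell_j \le (2-\din)t$ (this covers the degenerate case $A_j = \varnothing$, where $u_j = t$), then $\mathrm{span}_j$ is an interval short enough for \cref{lem:self-nonsimilarity}, so $\ED(C(x)[\mathrm{span}_j], \Cin(j, 1-y_j)) \ge \din t/2$; combining this with the bound from the first step and the dangerousness of $j$, $\ED(w[I_j], \Cin(j, 1-y_j)) \le \din t/4$, via the triangle inequality gives $\din t/2 \le (v_j + u_j) + \din t/4$, i.e.\ $v_j + u_j \ge \din t/4$. If instead $\ell_j > (2-\din)t$, then $v_j - u_j = \ell_j - t > (1-\din)t \ge \din t/4$, so $v_j + u_j \ge v_j - u_j \ge \din t/4$ as well. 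Summing over the set $D$ of dangerous blocks and using $m = (2k-1)t < 2kt$ and $\delta = \din\dout/128$, we get $|D| \cdot \din t/4 \le \sum_{j=1}^{k}(v_j + u_j) \le 2\delta m = \din\dout m/64 < \din\dout kt/32$, hence $|D| < \dout k/8 \le \dout k/2$, which proves the lemma (with slack to spare).

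The step I expect to be the main obstacle is the monotonicity argument of the second paragraph: verifying that $\mathrm{span}_j$ contains no $C(x)$-position matched outside $I_j$, that the intervals $\mathrm{span}_j$ are pairwise disjoint, and cleanly handling the boundary cases (empty $A_j$, blocks at the two ends of $w$, and the assumption $|w|=m$). Everything after that is the triangle-inequality estimate above combined with \cref{lem:self-nonsimilarity} and the crude size bound $m < 2kt$.
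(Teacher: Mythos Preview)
Your proof is correct and follows essentially the same approach as the paper: fix an LCS matching, define for each block $j$ the interval $\mathrm{span}_j$ of $C(x)$-positions matched into $I_j$, show via \cref{lem:self-nonsimilarity} and the triangle inequality that every dangerous block contributes $\Omega(\din t)$ unmatched positions, and sum. The paper packages the key step as a contradiction claim (if $j$ is dangerous then $v_j \ge \din t/8$ or $u_j \ge \din t/16$), whereas you compute $\ED(w[I_j], C(x)[\mathrm{span}_j]) \le v_j + u_j$ directly and split on whether $\ell_j \le (2-\din)t$; your route is a bit cleaner and yields the sharper bound $v_j + u_j \ge \din t/4$, hence $|D| < \dout k/8$, but the underlying idea is the same.
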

\begin{proof}
	Let $\LCS \coloneqq \LCS(C(x), w)$ denote an arbitrary LCS matching between $C(x)$ and $w$. 
	Let $U_C, U_w \subseteq [m]$ be the sets of unmatched bits in $C(x)$ and $w$, respectively; \ie, for every $i \in U_C$ (resp., $j \in U_w$), we have that $(i,\ell) \not\in \LCS$ (resp, $(\ell, j) \not\in \LCS$) for all $\ell \in [m]$.
	For each $j \in [k]$, define the following set of indices which are matched to $I_j$, i.e.,
	\begin{align*}
		N(I_j) \coloneqq \set{i \in [m] \colon \exists i' \in I_j, (i,i') \in \LCS},
	\end{align*}
	and let $\textsf{Span}_j$ be the smallest interval covering $N(I_j)$. Since $\textsf{LCS}$ is a monotone matching, the intervals $\textsf{Span}_1, \dots, \textsf{Span}_k$ are disjoint. 
	We can thus lower bound the edit distance between $C(x)$ and $w$ by
	\begin{align*}
	\ED\tp{C(x), w} = |U_C| + |U_w| \ge \sum_{j=1}^{k}\abs{U_C \cap \textsf{Span}_j} + \sum_{j=1}^{k}\abs{U_w \cap I_j}.
	\end{align*}
	\begin{claim} \label{clm:dangerous-imply-error}
		If $j$ is dangerous, then either $\abs{U_C \cap \textsf{Span}_j} \ge \din t/8$, or $\abs{U_w \cap I_j} \ge \din t/16$.
	\end{claim}
	\begin{proof}[Proof of Claim~\ref{clm:dangerous-imply-error}]
		Assume for the sake of contradiction that  $\abs{U_C \cap \textsf{Span}_j} < \din t/8$ and $|U_w \cap I_j| < \din t/16$. We first note that
		\begin{align*}
		|\textsf{Span}_j| = |U_C \cap \textsf{Span}_j| + |N(j)| < \din t/8 + t.
		\end{align*}
		We also note that $\LCS \cap (\textsf{Span}_j \times I_j)$ corresponds to a common subsequence between $C(x)[\textsf{Span}_j]$ and $w[I_j]$, which has length at least
		\begin{align*}
			\abs{\LCS \cap (\textsf{Span}_j \times I_j)} = \abs{I_j \setminus U_w} > t-\din t/16.
		\end{align*}
		In other words, we have
		\begin{align*}
			\ED\tp{w[{I_j}], C(x)[{\textsf{Span}_j}]} &< |I_j| + |\textsf{Span}_j| - 2(t-\din t/16) \\ 
			&< t + (1+\din/8)t - 2t + \din t/8 = \din t/4.
		\end{align*}
		Since $j$ is dangerous, we also have $\ED\tp{w[I_j], \Cin(j, 1-y_j)}\le \din t/4$. The triangle inequality thus implies
		\begin{align*}
			\ED\tp{C(x)[{\textsf{Span}_j}], \Cin(j, 1-y_j)} &\le \ED\tp{w[{I_j}], C(x)[{\textsf{Span}_j}]} + \ED\tp{w[{I_j}], \Cin(j, 1-y_j)}\\
			&< \din t/4 + \din t/4 = \din t/2.
		\end{align*}
		However, this contradicts \cref{lem:self-nonsimilarity}.
	\end{proof}
	
	Denote by $D$ the set of dangerous blocks. By \cref{clm:dangerous-imply-error} we have
	\begin{align*}
	\delta \cdot 2m \ge \ED\tp{w, C(x)} \ge \sum_{j \in D}\tp{\abs{U_C \cap \textsf{Span}_j} + \abs{U_w \cap I_j}} \ge |D| \cdot \frac{\din t}{16}.
	\end{align*}
	Plugging in $\delta = \din\dout/128$ and $m=(2k-1)t \leq 2kt$, we obtain $|D| \le \dout k/2$.
\end{proof}

Now we are ready to prove \cref{thm:main-rildc-construction}. We recall the theorem below.
\weakirldcmain*
\begin{proof}
	Consider the concatenation code $C$ (with buffers) of \cref{alg:encoder} and a relaxed decoder $\Dec$ defined in \cref{alg:decoder}. 
	We fix an arbitrary message $x \in \set{0,1}^n$ and a string $w$ such that $\ED\tp{C(x), w}\le \delta \cdot 2m$. 
	Here $\delta \coloneqq \din\dout/128$. 
	We also denote $y \coloneqq \Cout(x) \in \set{0,1}^k$.
	For the remainder of the proof, unless otherwise stated, all lines referenced are from our decoder description in \cref{alg:decoder}.
	
	Perfect completeness follows directly via the index $i_0$ computed in \cref{line:perfect-completeness}, the computation of $\tildey_j$ in \cref{line:tildey}, and the perfect completeness of the outer code $\Cout$.
	We now focus on proving relaxed decoding.
	
	Let $D \subseteq [k]$ be a subset containing the indices of all dangerous blocks in $w$. We have that $|D|\le \dout k/2$ by \cref{lem:dangerous-ub}.
	Recall that in \cref{alg:decoder}, $\Dec$ invokes the relaxed decoder $\Dec_{out}$ of $\Cout$ while providing it with oracle access to some string $\widetilde{y} \in \set{0,1}^k$. 
	Denote $\mathbf{Y}=\tp{Y_1,Y_2,\dots,Y_k}$ where each $Y_j \in \set{0,1,\perp}$ is the result that would have been returned by the decoding algorithm on query $j$ in \cref{line:bot-check} of \cref{alg:decoder} (even if $j$ might not be queried). 
	Since the decoder uses independent samples in \cref{line:index-sampling} of \cref{alg:decoder} for each query $j$, $Y_j$'s are independent random variables. 
	Due to \cref{prop:dangerous-bad}, for every $j \in [k]\setminus D$ it holds that
	\begin{align*}
		\Pr\left[ Y_j = 1-y_j \right] \le (1-\din/8)^d < e^{-d\din/8} = \dout/4, 
	\end{align*}
	where in the last equality follows from choosing $d = 8\ln(4/\dout)/\din$.
	
	Denote $d(\mathbf{Y}, y)=\sum_{j \in [k]}\mathbf{1}\set{Y_j=1-y_j}$. 
	Since $Y_j$'s are independent, an application of the Chernoff bound shows that
	\begin{align*}
		\Pr\left[ d(\mathbf{Y},y) > \dout k \right] \le \Pr\left[ \sum_{j \in [k]\setminus D}\mathbf{1}\set{Y_j=1-y_j}\ge \frac{\dout k}{2} \right] \le \exp\tp{-\dout^2(k-|D|)/8} \le \eps
	\end{align*}
	for large enough $n$ (and thus $k$).
	Given $\mathbf{Y} \in \set{0,1,\perp}^k$, define a string $c(\mathbf{Y}) \in \set{0,1}^k$ as follows: 
	\begin{align*}
		c(\mathbf{Y})_j \coloneqq \begin{cases}
			y_j & \textup{if $Y_j \in \set{y_j, \perp}$} \\
			1-y_j & \textup{if $Y_j = 1-y_j$}
		\end{cases}.
	\end{align*}
	According to \cref{line:bot-check} of \cref{alg:decoder}, $\Dec$ aborts with output $\perp$ as long as $Y_j=\perp$ for any query $j$. 
	On the other hand, if $Y_j \neq \perp$ for all queries $j$, then $\Dec$ should output the result returned by $\Dec_{out}$ as if it had oracle access to $c(\mathbf{Y})$. 
	In any case, we have
	\begin{align*}
		\Pr\left[ \Dec(i, w) \in \set{x_i, \perp} \mid \mathbf{Y} \right] \ge \Pr\left[  \Dec_{out}(i, c(\mathbf{Y})) \in \set{x_i, \perp} \right].
	\end{align*}
	
	To conclude, we have
	\begin{align*}
		\Pr\left[ \Dec(i, w) \in \set{x_i, \perp} \right] &= \E_{\mathbf{Y}}\left[ \Pr\left[ \Dec(i, w) \in \set{x_i, \perp} \mid \mathbf{Y} \right] \right] \\
		&\ge \E_{\mathbf{Y}\colon d(\mathbf{Y}, y)\le \dout k}\left[ \Pr\left[ \Dec(i, w) \in \set{x_i, \perp} \mid \mathbf{Y} \right] \right] \cdot \Pr\left[ d(\mathbf{Y}, y) \le \dout k \right] \\
		&\ge \E_{\mathbf{Y}\colon d(\mathbf{Y}, y)\le \dout k}\left[ \Pr\left[  \Dec_{out}(i, c(\mathbf{Y})) \in \set{x_i, \perp} \right] \right] - \eps.
	\end{align*}
	By definition of $d(\mathbf{Y}, y)$, it holds that $\ham(c(\mathbf{Y}), y)\le \dout k$ whenever $d(\mathbf{Y}, y) \leq \dout k$. 
	Since $\Cout$ is a $(\qout, \dout, \epsout)$-relaxed LDC, it holds that 
	\begin{align*}
		\Pr\left[  \Dec_{out}(i, c(\mathbf{Y})) \in \set{x_i, \perp} \right] \ge \frac{1}{2} + \epsout.
	\end{align*}
	By our choice of $\epsout = 2\eps$, we have that
	\begin{align*}
		\Pr\left[ \Dec(i, w) \in \set{x_i, \perp} \right] \ge \frac{1}{2} + 2\eps - \eps = \frac{1}{2} + \eps.
	\end{align*}
	The query complexity is $q(\delta, \eps, \gamma) \coloneqq (d+1)\cdot\qout + 2 = \Theta\tp{\qout \cdot \log(1/\dout)/\din} = \Theta(1)$.
\end{proof}

\subsection{Strong Insdel Relaxed Locally Correctable Codes}\label{sec:rLCC}

In this section, we show how to tweak the above construction to obtain a strong Insdel relaxed Locally Correctable Code (RLCC) with constant locality when the outer Hamming RLDC is replaced with a Hamming RLCC. 
We first give a formal definition of relaxed Locally Correctable Codes.

\begin{definition}[Relaxed Locally Correctable Code]\label{def:rLCC}
    A $(q,\delta,\alpha,\rho)$-Relaxed Locally Correctable Code (RLCC) $C \colon \Sigma^n \rightarrow \Sigma^m$ is a code for which there exists a randomized decoder that makes at most $q$ queries to the received word $y$, and satisfies the following properties:
    \begin{enumerate}
        \item (Perfect completeness) For every $i \in [m]$, if $y = C(x)$ for some message $x$ then the decoder, on input $i$, outputs $y_i$ with probability $1$.
        
        \item (Relaxed decoding) For every $i \in [m]$, if $y$ is such that $\dist(y, C(x)) \leq \delta$ for some unique $C(x)$, then the decoder, on input $i$, outputs $c_i$ or $\bot$ with probability $\geq \alpha$.
        
        \item (Success rate) For every $y$ such that $\dist(y, C(x)) \leq \delta$ for some unique $C(x)$, there is a set $I$ of size $\geq \rho m$ such that for every $i \in I$ the decoder, on input $i$, correctly outputs $c_i$ with probability $\geq \alpha$.
    \end{enumerate}
    We will denote an RLCC that satisfies all 3 conditions by the notion of Strong RLCC, and one that satisfies the first 2 conditions by the notion of Weak RLCC. Furthermore,  if the $q$ queries are made in advance, before seeing entries of the codeword, then the decoder is said to be {\em non-adaptive}; otherwise, it is called {\em adaptive}.
\end{definition}
As with \cref{def:strongRLDC}, the probabilities in the above definition are taken over the randomness  of the decoding algorithm, and $\dist$ is a normalized metric.
When $\dist$ is the normalized Hamming distance, then we say that the code is a Hamming RLCC; similarly, when $\dist$ is the normalized edit distance, we say that the code is a Insdel RLCC.
To obtain our strong Insdel RLCC, we replace the weak Hamming RLDC of \cref{alg:encoder} with the weak Hamming RLCC of \cref{lem:rlcc} due to \cite{AsadiS21}.

We now prove the following corollary.

\strongirlcc*

\begin{proof}
The construction is based on a slight modification of code presented in Section~\ref{sec:wrLDC}.

For the encoding process, the outer code $\Cout \colon \zo^n \rightarrow \zo^{k}$ is replaced by a non-adaptive weak $(\qout,\dout,1/2+\epsout)$-relaxed Hamming LCC. 
The existence of such a code is guaranteed by Lemma~\ref{lem:rlcc}.
We pick $\epsout = 2\eps$ and $\qout(\delta, \eps, \gamma)$ to be a sufficiently large constant such that $k = {n^{1+O(1/q)}}$ and $k\log k = O(n^{1+\gamma}) $. 
The encoding algorithm is the same as \cref{alg:encoder}. 

To ensure the total length of zero buffers is exactly half of the codeword length, we append another $t$ 0's at the end of the codeword output by \cref{alg:encoder}. 
By the analysis in \cref{sec:wlrdc-enc}, we know that for any message $x\in\zo^n$, the codeword $C(x)$ has length $\Theta(k\log k) = O(n^{1+\gamma})$.

The decoding process is similar to \cref{alg:decoder}. Given an input index $i$, we do the following: 

\begin{enumerate}
    \item If the index $i$ belongs to a zero buffer, the decoder aborts and outputs $0$. 
    
	\item The decoder checks if the oracle $w$ has length $m$. If not, it aborts and outputs $\bot$.
	
	\item Let $\tilde{i}$ be the index of the non-zero block that contains $i$-th symbol. The decoder invokes $\Cout.\Dec(\tilde{i})$.
	
	\item For each query $j \in [k]$ received from $\Cout.\Dec(\tilde{i})$
	\begin{enumerate}
		\item The decoder computes codewords $\Cin(j, 0)$ and $\Cin(j,1)$, and additionally computes the first index $i_0 \in [t]$ such that $\Cin(j,0)[i_0] \neq \Cin(j,1)[i_0]$.
		
		\item The decoder samples $i_1,\dotsc, i_d \in [t]$ uniformly and independently at random.
		
		\item The decoder sets a bit $\tildey_j$ as follows. 
		If $w[ 2(j-1)t + i_\ell ] = \Cin(j,0)[i_\ell]$ for all $\ell \in \{0,1,\dotsc, d\}$, then set $\tildey_j = 0$; else if $w[ (j-1)(2t+1)+ i_\ell ] = \Cin(j,1)[i_\ell]$ for all $\ell \in \{0,1,\dotsc, d\}$, then set $\tildey_j = 1$; else if neither case occurs, abort and output $\bot$.
		
		\item Answer $\Cout.\Dec(i)$ with bit $\tildey_j$ and await the next query.
	\end{enumerate}
	
	\item Let $\tilde{y}_{\tilde{i}} = \Cout.\Dec(\tilde{i})$. Compute the $\tilde{i}$-th non-zero block $\Cin(\tilde{i}, \tilde{y}_{\tilde{i}})$. Output the bit in this block that is in the $i$-th position of the whole codeword. 
\end{enumerate}

For the perfect completeness, we note our decoding algorithm will always output $C(x)_i$ if there are no corruption, which is guaranteed by the perfect completeness of the weak Hamming RLCC.

If the input index $i$ is in one of the zero buffers, the decoder always outputs $0$ correctly. The success rate for those indices is $1$. The relaxed decoding property holds for those indices. 

For $i$ not in a zero buffer, let $\tilde{i}$ be the index of the non-zero block that contains the $i$-th symbol. By the relaxed decoding property of the weak Hamming RLCC, given access to $\mathbf{Y}$ (defined in the proof of Theorem~\ref{thm:main-rildc-construction}), the relaxed decoder $\Dec_{out}$ of $\Cout$ will output $\Cout(x)_{\tilde{i}}$ or $\bot$ with probability at least $1/2+\epsout$. If $\Dec_{out}$ outputs $\Cout(x)_{\tilde{i}}$ correctly, our decoder can also output $C(x)_i$ correctly. From the analysis in the proof of Theorem~\ref{thm:main-rildc-construction}, we know the decoder will output $C(x)_i$ or $\bot$ with success probability at least $1/2 + \eps$. Thus, the relaxed decoding property holds.

Finally, we can let $I_y \subseteq [m]$ be the set of all indices that is in a zero buffer. Since the total length of zero buffers is exactly half of the codeword length, $\abs{I_y} = m/2$. We always output 0 correctly for $i\in I_y$. Thus, our code achieves a success rate $\rho \ge 1/2$.
\end{proof}

\section{Acknowledgements} We are indebted to some  anonymous reviewers who helped us improve the presentation of the paper.

\bibliographystyle{alpha}
\bibliography{references}


\end{document}